\definecolor{darkgreen}{rgb}{10,117,28}
\definecolor{blue}{rgb}{0.1,0.2,0.5}
\definecolor{brown}{rgb}{0.6,0.6,0.2}
\newenvironment{absolutelynopagebreak}
  {\par\nobreak\vfil\penalty0\vfilneg
   \vtop\bgroup}
  {\par\xdef\tpd{\the\prevdepth}\egroup
   \prevdepth=\tpd}
\newtheorem{theorem}{Theorem}[section]
\newtheorem{lemma}{Lemma}[section]
\newtheorem{conjecture}{Conjecture}[section]
\newtheorem*{zclaim}{Claim}
\newcommand{\newtheoremwithcrefformat}[2]{%
  \newtheorem{#1}{#2}[section]%
  \crefformat{#1}{##2\MakeUppercase#1~##1##3}%
  \Crefformat{#1}{##2\MakeUppercase#1~##1##3}%
}
\newenvironment{claimproof}{%
\begin{proof}[Proof of the claim]}{%
	\end{proof}
    }
\newcommand{\checkforA}[1]{}
\newcommand{\checkforB}[1]{}
\newenvironment{Alemma}{\begin{lemma}}{\end{lemma}}
\newenvironment{proofEnd}{\begin{proof}}{\end{proof}}
\def\ifenv#1{
   \def\@tempa{#1}%
   \ifx\@tempa\@currenvir
      \expandafter\@firstoftwo
    \else
      \expandafter\@secondoftwo
   \fi
}
\let\wfs@comment@comment\comment
\let\comment\@undefined
\newcommand{\untoto}{\let\toto\@undefined}
\let\wfs@changes@comment\comment
\let\comment\@undefined
\newcommand\comment{%
    \ifthenelse{\equal{\@currenvir}{comment}}
    {\wfs@comment@comment}
    {\wfs@changes@comment}%
}
\newcommand{\LL}{\mbox{\L}}
\renewcommand{\top}{\textsf{top}}
\newcommand{\Cc}{\mathscr{C}}
\newcommand{\Dd}{\mathscr{D}}
\newcommand{\Ee}{\mathscr{E}}
\renewcommand{\phi}{\varphi}
\renewcommand{\leq}{\leqslant}
\renewcommand{\geq}{\geqslant}
\newcommand{\tup}[1]{\bar{#1}}
\newcommand{\void}{\makebox[8pt][c]{\fullmoon}}
\newcommand{\Tr}{\mathsf{T}}
\newcommand{\Pp}{\mathcal{P}}
\newcommand{\Qq}{\mathcal{Q}}
\newcommand{\Ff}{\mathcal{F}}
\newcommand{\Ss}{\mathcal{S}}
\newcommand{\Blocks}{\mathsf{Blocks}}
\newcommand{\Types}{\mathsf{Types}}
\newcommand{\pth}{\mathsf{path}}
\newcommand{\Tf}{\mathfrak{T}}
\newcommand{\SReach}{\mathrm{SReach}}
\newcommand{\scol}{\mathrm{scol}}
\newcommand{\enumref}[1]{\ref{#1}}
\newcommand{\ERCagreement}{
\vspace{-6pt}

\noindent
{\begin{minipage}[t]{.75\textwidth}\small This paper is a part of projects that have received funding from the European Research Council (ERC) under the European Union's Horizon 2020 research and innovation programme (grant agreements No 810115 -- {\sc Dynasnet}, and No 677651 -- {\sc Total}). \end{minipage}\hfill\begin{minipage}[t]{.22\textwidth}\raisebox{-42pt}{\includegraphics[width=\textwidth]{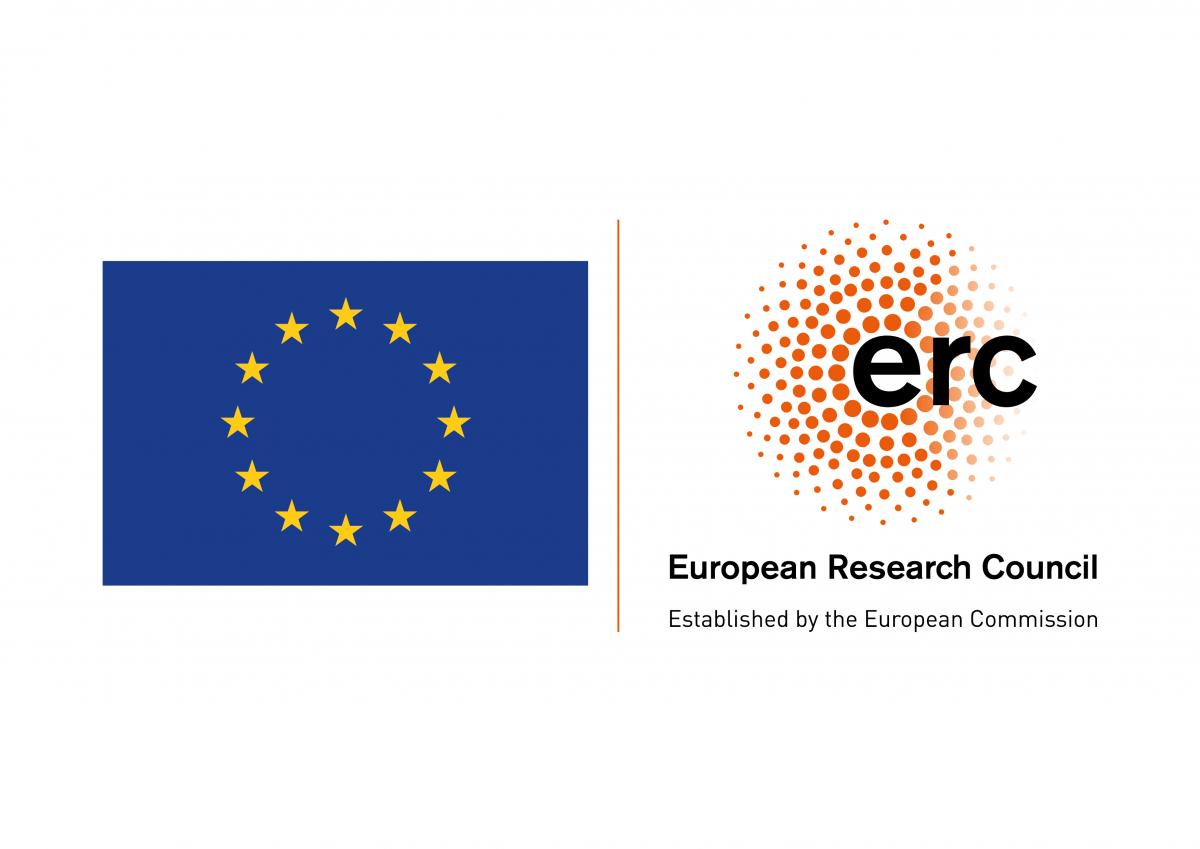}}\\\end{minipage}}}
\newcommand{\@abbrev}[3]{
  \def\c@a@def##1{
      \if ##1.
        \relax
      \else
        \@ifdefinable{\@nameuse{#1##1}}{\@namedef{#1##1}{#2##1}}
        \expandafter\c@a@def
      \fi
    }
  \c@a@def #3.
}
\title{Rankwidth meets stability}
\author{
Jaroslav Ne\v set\v ril\\Charles University (IUUK), Praha, Czech Republic\\\small{\texttt{nesetril@iuuk.mff.cuni.cz}}
\and 
Patrice Ossona de Mendez\\CAMS (CNRS, UMR 8557), Paris, France\\\small{\texttt{pom@ehess.fr}}
\and 
Micha\l{} Pilipczuk\\University of Warsaw, Poland\\\small{\texttt{michal.pilipczuk@mimuw.edu.pl}}
\and 
Roman Rabinovich\\Technical University of Berlin, Germany\\\small{\texttt{roman.rabinovich@tu-berlin.de}}
\and
Sebastian Siebertz\\University of Bremen, Germany\\
\small{\texttt{siebertz@uni-bremen.de}}}
\date{}
\begin{document}
\maketitle

\begin{abstract}
\noindent
We study two notions of being well-structured for classes of graphs
that are inspired by classic model theory. A class of graphs $\Cc$ is
{\em{monadically stable}} if it is impossible to define arbitrarily
long linear orders in vertex-colored graphs from $\Cc$ using a fixed
first-order formula. Similarly, {\em{monadic dependence}} corresponds
to the impossibility of defining all graphs in this way.  Examples of
monadically stable graph classes are nowhere dense classes, which
provide a robust theory of sparsity. Examples of monadically dependent
classes are classes of bounded rankwidth (or equivalently, bounded
cliquewidth), which can be seen as a dense analog of classes of
bounded treewidth.  Thus, monadic stability and monadic dependence
extend classical structural notions for graphs by viewing them in a
wider, model-theoretical context. We explore this emerging theory by proving
the following:
\begin{itemize}
\item A class of graphs $\mathscr C$ is a first-order transduction of
  a class with bounded treewidth if and only if $\mathscr C$ has
  bounded rankwidth and a {\em{stable edge relation}} (i.e.\ graphs
  from $\mathscr C$ exclude some half-graph as a semi-induced
  subgraph).
\item If a class of graphs $\mathscr C$ is monadically dependent and
  not monadically stable, then $\mathscr C$ has in fact an unstable
  edge relation.
\end{itemize}
 
As a consequence, we show that classes with bounded rankwidth
excluding some half-graph as a semi-induced subgraph are linearly
$\chi$-bounded. Our proofs are effective and lead to polynomial time
algorithms.
\end{abstract}

\ERCagreement

\pagebreak
\section{Introduction}
The search for efficient algorithms has led to the study of the
structural properties of graph classes defined by the exclusion of
specific substructures.
%
For example, the structure theorem for graphs with excluded minors~\cite{RobertsonS03a} and for graphs with excluded topological minors
~\cite{grohe2015structure,Dvovrak2012} formed the basis of many
structural and algorithmic studies. A fundamental contribution of
these studies was to unveil the particular importance of classes with
bounded treewidth, which was confirmed by their specific algorithmic
properties. Precisely, Courcelle's theorem asserts that in classes
with bounded treewidth, every property definable in monadic
second-order logic~(MSO) can be tested
efficiently~\cite{courcelle1990monadic}.

Based on the exclusion of \emph{shallow} minors (or shallow
topological minors), two of the authors proposed a framework for the
structural study of classes of sparse graphs, namely {\em bounded
  expansion} classes and (more generally) {\em nowhere dense} classes
\cite{Sparsity}. This last notion of sparsity is characteristic to
monotone classes of graphs with fixed parameter tractable first-order
model checking \cite{dvovrak2013testing, grohe2017deciding}.

Much effort has been taken to extend the numerous algorithmic
applications of sparse graph classes, in particular of treewidth, to
dense graphs. For example, Courcelle's theorem was extended to classes
of bounded \emph{cliquewidth}~\cite{courcelle2000linear} (or
equivalently of bounded \emph{rankwidth} or bounded \emph{NLC-width}),
which is the dense analog of treewidth.

The move from sparse to dense is naturally followed by a move from
{\em monotone} classes (i.e.\ classes closed under subgraphs) to {\em
  hereditary} classes (i.e.\ classes closed under induced subgraphs).
Still, strong algorithmic properties are known to emerge when one
considers hereditary classes of graphs defined by forbidding simple
induced subgraphs (as witnessed by the class of cographs, circle
graphs, or perfect graphs), or semi-induced bipartite
subgraphs. Recall that a bipartite graph $H$ is a {\em semi-induced
  subgraph} of a graph $G$ if there exist two disjoint subsets of
vertices $A$ and $B$ of $G$ such that~$H$ is isomorphic to the
subgraph of $G$ with vertex set $A\cup B$ and all the edges present in
$G$ between~$A$ and $B$.

For example, the {\em VC-dimension} of a graph is defined from the
maximum size of a semi-induced subgraph isomorphic to a {\em powerset
  graph}, that is, to a bipartite graph with vertex set
$U\cup \mathcal P(U)$ and edge set
$\{xX~:~x\in U, X\in \mathcal P(U),\text{ and } x\in X\}$. Classes with bounded
VC-dimension are known to have specific statistical properties, which
are at the heart of computational learning
theory~\cite{angluin1992computational} and of numerous results in
algorithms in geometric graph theory (see
e.g.
\cite{Bronnimann1995,matousek2004bounded}).

A stronger assumption is that a graph excludes, as a semi-induced
subgraph, some {\em half-graph}: a~bipartite graphs with vertex set
$\{a_1,\dots,a_n\}\cup\{b_1,\dots,b_n\}$ and edge set
$\{a_ib_j~:~1\leq i\leq j\leq n\}$.  It has been observed that
half-graphs provide a primary example why irregular pairs cannot be
avoided in the statement of Szemer\'edi's Regularity Lemma. Indeed,
Malliaris and Shelah showed that forbidding a half-graph as a
semi-induced subgraph indeed makes it possible to get rid of irregular
pairs~\cite{malliaris2014regularity}.
\bigskip

In the language of model theory, a class excluding some powerset graph
as a semi-induced subgraph (that is, a class with bounded
VC-dimension) is said to have a {\em dependent edge relation}, and a
class excluding some half-graph as a semi-induced subgraph is said to
have a {\em stable edge relation} (or to have bounded \emph{order
  dimension}). This corresponds to the two main dividing lines used in
model theory: dependence and stability. In our setting, a class of
graphs is \emph{dependent} if every binary relation that is
(first-order) definable in it, seen as an edge relation, is
dependent. Similarly, a class is \emph{stable} if every definable
binary relation is stable. Stronger model theoretical notions are the
notions of {\em monadic dependence} and {\em monadic stability}, where
we restrict binary relations definable not only in graphs from the
class in question, but also in all their vertex-colorings.  A
surprising connection with structural graph theory is that, for a
monotone class of graphs, the properties of dependence, monadic
dependence, stability, monadic stability, and nowhere-denseness are
equivalent~\cite{adler2014interpreting}. However, without the
assumption of monotonicity, the notions of monadic dependence and
monadic stability do not collapse and present much wider concepts of
well-structuredness than nowhere denseness, and they are suited for
the treatment of dense graphs as well. For instance, every class of
bounded cliquewidth is monadically dependent~\cite{GroheT04}, but not
necessarily monadically
stable. 

One of our prime motivations is to extend the techniques designed for
classes of sparse graphs (i.e. bounded expansion or nowhere dense
classes) to the dense setting. For this, it is natural to consider
hereditary classes of graphs that are dependent, monadically
dependent, stable, or even monadically stable. As recently shown by
Fabia\'nski et al.~\cite{FabianskiPST19}, these structural assumptions
may be used in a novel way in the design of parameterized algorithms.

Monadic dependence and monadic stability can be also defined
using transductions. A~\emph{(first-order) transduction} is a way to
construct target graphs from vertex-colorings of source graphs by
fixed first-order formulas (see \cref{sec:prelims} for formal
definitions). In this setting, a class is monadically dependent if it
has no transduction onto the class of all powerset graphs
(equivalently, onto the class of all graphs). It is monadically
stable if it has no transduction onto the class of all
half-graphs~\cite{baldwin1985second}. From a dual point of view,
classes with bounded rankwidth are exactly those that are
transductions of the class of trivially perfect graphs (equivalently,
of the class of tree-orders). Similarly, classes with bounded linear
rankwidth are exactly those that are transductions of the
class of half-graphs (equivalently, of the class of linear
orders)~\cite{colcombet2007combinatorial}.

In this way, transductions form a basic containment notion for graphs,
which can be used to define structural properties through forbidding
obstructions, similarly to (shallow) minors or (induced)
subgraphs. The difference is that tranductions represent containment
understood in model-theoretical terms, and thus are suited for
considering questions related to first-order logic. As the notions of
monadic stability and monadic dependence are preserved by taking
transductions and they correspond to major dividing lines in model
theory, we expect them to be central in the emerging theory.

In order to explore this theory, it is imperative to understand
classical concepts of structural graph theory through the lense of
transductions. That is, we wish to describe the closures of
classes that are known to be well-structured under
transductions. This was done e.g. for classes of bounded
degree~\cite{GajarskyHOLR16} and for classes of bounded
expansion~\cite{SBE_TOCL}. More importantly for this work, in a
previous paper, the following characterization of monadically stable
classes of bounded linear rankwidth was~given.

\begin{theorem}[\cite{SODA_msrw}]
\label{thm:lrw}
If a class of graphs $\Cc$ has bounded linear rankwidth, then the
following conditions are equivalent:
\begin{enumerate}[nosep]
\item $\Cc$ has a stable edge relation;
\item $\Cc$ is stable;
\item $\Cc$ is monadically stable;
\item $\Cc$ is a transduction of a class with bounded pathwidth.
\end{enumerate}
\end{theorem}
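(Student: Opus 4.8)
The plan is to prove the cycle $(1)\Rightarrow(4)\Rightarrow(3)\Rightarrow(2)\Rightarrow(1)$, in which every implication except $(1)\Rightarrow(4)$ is soft. Indeed $(3)\Rightarrow(2)$ holds because monadic stability quantifies over all vertex-colourings and hence is formally stronger than stability; $(2)\Rightarrow(1)$ because the edge relation is itself a definable binary relation; and for $(4)\Rightarrow(3)$ one uses that a class of bounded pathwidth has bounded treewidth, hence is nowhere dense, hence monadically stable, and that monadic stability is preserved under first-order transductions. So the whole content is $(1)\Rightarrow(4)$.

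For that, fix $k$ with $\Cc$ of linear rankwidth at most $k$ and $t$ such that no graph of $\Cc$ semi-induces the half-graph $H_t$, take $G\in\Cc$, and fix a linear order $\prec$ on $V(G)$ witnessing linear rankwidth $\le k$ --- equivalently, a linear clique-width expression for $G$ of width $k'=k'(k)$ that introduces the vertices in the order $\prec$. For each prefix $L$ of $\prec$ with complement $R$, the $\mathrm{GF}(2)$-rank of the $L\times R$ adjacency matrix is at most $k$, so the cross-adjacency at that cut is a symmetric difference of at most $k$ complete bipartite patterns, which I will call \emph{rectangles}. Thus, cut by cut, the adjacency of $G$ has bounded complexity; the only obstruction to carrying $G$ on a bounded-width linear host is that the rectangles chosen at the $n$ different cuts chain together along $\prec$ into long-range structure. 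A semi-induced half-graph is the prototypical such chaining --- and, crucially, half-graphs themselves have bounded linear rankwidth, so bounded rankwidth on its own does not preclude this; it is the exclusion of $H_t$ that does.

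The heart of the argument is to turn ``$H_t$ excluded as a semi-induced subgraph'' into a quantitative ban on such chaining. A Ramsey / indiscernible-extraction argument (the order-property formulation of a stable edge relation) should yield a decomposition of $V(G)$ along $\prec$ into boundedly many blocks carried by a bounded-width \emph{skeleton} --- a path-decomposition-like arrangement of the interface vertices occurring at the cuts, with $O(1)$ unary colours recording each vertex's left- and right-type, its block, and its role (if any) in an interface --- together with a $G$-independent list of \emph{rectangle types}, such that the placement of every rectangle of $G$ is determined by the skeleton and, once all rectangle patterns are subtracted, the residual graph is sparse of bounded pathwidth. One then lets $H$ be the residual graph together with the skeleton (of bounded pathwidth by construction) and exhibits a single fixed formula $\varphi(x,y)$ --- a Boolean combination of adjacency in $H$, the colours at $x$ and $y$, and ``$x$ and $y$ occupy matching positions of a common rectangle type'', all first-order over the coloured $H$ --- that recovers $E_G$. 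Since $k'$, the number of blocks and rectangle types, the number of colours, and $\varphi$ depend only on $k$ and $t$ and the whole construction is explicit, this exhibits $\Cc$ as a polynomial-time transduction of the bounded-pathwidth class $\{H : G\in\Cc\}$.

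The hard part is precisely this reorganization step: extracting from the exclusion of $H_t$ a $G$-independent bound on the number of blocks and rectangle types and on the width of the skeleton, and checking that the residue genuinely has bounded pathwidth. This is where stability is really used, through a careful indiscernible-extraction argument and a somewhat delicate bookkeeping that keeps the number of colours and the interpreting formula independent of $G$; the soft implications and the final assembly into a transduction are routine by comparison.
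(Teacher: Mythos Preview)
A framing note: this theorem is not proved in the present paper but quoted from~\cite{SODA_msrw}; the paper's own proof of the rankwidth analogue (\cref{thm:main}) is the natural point of comparison, and the argument in~\cite{SODA_msrw} is its linear precursor.

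Your soft implications $(4)\Rightarrow(3)\Rightarrow(2)\Rightarrow(1)$ are fine. For $(1)\Rightarrow(4)$, however, there is a genuine gap: the step that actually uses the stable edge relation is only promised, not carried out. You write that a ``Ramsey / indiscernible-extraction argument \dots\ should yield'' a decomposition into boundedly many blocks, and you yourself flag this as ``the hard part'' with ``somewhat delicate bookkeeping'' left undone. Nothing in your text bounds the number of blocks, specifies what data is stored at block boundaries, or verifies that the resulting host has bounded pathwidth and that a single fixed formula decodes~$G$.

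The missing mechanism, in both~\cite{SODA_msrw} and this paper's proof of \cref{thm:main}, is Simon's factorization (Colcombet's deterministic tree version here, \cref{lem:factorization}): the word of recoloring functions along the linear expression is recursively factored, with depth bounded in~$k$, into \emph{forward Ramsey} segments. On each such segment one then proves the key combinatorial lemma --- a long alternation of adjacency types forces a large semi-induced half-graph (the linear analogue of \cref{cl:small-alternation} and \cref{cl:num-blocks}) --- and \emph{this} is what gives the uniform bound on the number of blocks. Your sketch invokes neither the factorization nor any alternation lemma, and without these two ingredients the argument does not close. The ``subtract rectangles, sparse residue'' picture is also not how the host is actually built: the host encodes the linear expression together with pointers to the $O_{k,t}(1)$ block boundaries, rather than being a sparsified version of $G$ itself.
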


Conceptually, this result means that if a class of graphs $\Cc$ has
bounded linear rankwidth and excludes some half-graph as a
semi-induced subgraph, then graphs from $\Cc$ can be ``sparsified'' in
the following sense: for each $G\in \Cc$ we can find a vertex-colored
graph $G'$ of bounded pathwidth such that~$G$ can be defined from $G'$
using fixed first-order formulas. The much more difficult question
whether a result analogous to \cref{thm:lrw} holds for classes of
bounded rankwidth (instead of {\em{linear}} rankwidth) could not be
answered in~\cite{SODA_msrw} and was stated there as a conjecture.

A by-product of the results of~\cite{SODA_msrw} is the conclusion that
classes of bounded linear rankwidth are linearly $\chi$-bounded.
Here, a hereditary class $\Cc$ of graphs is (linearly) $\chi$-bounded
if the chromatic number of graphs in $\Cc$ is functionally (linearly)
bounded by their clique number.  This concept was introduced by
Gy\'arf\'as~\cite{gyarfas1985problems} and has received a lot of
attention (see e.g.\ the surveys~\cite{Schiermeyer2019survey,
  scott2018survey}).

\paragraph*{Our contribution.}  In this work we prove the conjecture
stated in~\cite{SODA_msrw} and establish the following:

\begin{restatable}{theorem}{mainthm}
\label{thm:main}
If a class of graphs $\mathscr C$ has bounded rankwidth, then the
following conditions are equivalent:
\begin{enumerate}[nosep]
\item\label{e:sbt-er}  $\Cc$ has a stable edge relation;
\item\label{e:sbt-st} $\Cc$ is stable;
\item\label{e:sbt-mst} $\Cc$ is monadically stable;
\item\label{e:sbt-sbt} $\Cc$ is a transduction of a class with bounded
  treewidth.
\end{enumerate}
\end{restatable}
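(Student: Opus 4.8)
The plan is to prove the cyclic chain \cref{e:sbt-sbt} $\Rightarrow$ \cref{e:sbt-mst} $\Rightarrow$ \cref{e:sbt-st} $\Rightarrow$ \cref{e:sbt-er} $\Rightarrow$ \cref{e:sbt-sbt}, of which only the last implication requires genuinely new work; moreover I would route that last implication through monadic stability, establishing \cref{e:sbt-er} $\Rightarrow$ \cref{e:sbt-mst} first and then \cref{e:sbt-mst} $\Rightarrow$ \cref{e:sbt-sbt}. The three formal arrows are immediate. We have \cref{e:sbt-mst} $\Rightarrow$ \cref{e:sbt-st} and \cref{e:sbt-st} $\Rightarrow$ \cref{e:sbt-er} because monadic stability trivially implies stability, and stability of a class implies stability of any single definable binary relation, in particular of the atomic edge relation. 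And \cref{e:sbt-sbt} $\Rightarrow$ \cref{e:sbt-mst} holds because every class of bounded treewidth is nowhere dense, nowhere dense classes are monadically stable~\cite{adler2014interpreting}, and monadic stability is preserved under first-order transductions (a transduction of a transduction is again a transduction).

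For \cref{e:sbt-er} $\Rightarrow$ \cref{e:sbt-mst} I would not argue directly but invoke the second main result announced in the abstract: a class that is monadically dependent but not monadically stable necessarily has an unstable edge relation. Since classes of bounded rankwidth are monadically dependent~\cite{GroheT04}, the contrapositive yields exactly that, under the bounded-rankwidth hypothesis, a stable edge relation upgrades to full monadic stability. That auxiliary statement is itself the delicate model-theoretic half of the work: starting from a transduction of $\Cc$ onto the class of all half-graphs, one must extract half-graphs of every size as \emph{semi-induced} subgraphs of graphs already in $\Cc$, using Ramsey-type arguments on indiscernible sequences to regularise the witnessing configurations and using monadic dependence to neutralise both the auxiliary colours and the formula employed by the transduction.

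It then remains to prove \cref{e:sbt-mst} $\Rightarrow$ \cref{e:sbt-sbt}, which is the heart of the theorem. The plan is to fix $G\in\Cc$, take a rank decomposition of width $k=O(1)$ presented as a rooted binary tree $T$ with leaf set $V(G)$, and observe that each node $x$ of $T$ splits $V(G)$ into two parts of cut-rank at most $k$, hence partitions the leaves below $x$ into at most $2^{k}$ classes according to adjacency to the rest of $G$; the edge relation of $G$ is then recovered from, for each pair $u,v$, the classes of $u$ and of $v$ at the node where their root-to-leaf paths diverge. The obstruction to sparsity is that a vertex a priori must remember its class at every one of its unboundedly many ancestors. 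To tame this I would use monadic stability to argue that along each root-to-leaf path of $T$ the evolution of classes is, after a bounded recolouring, a monadically stable object of bounded \emph{linear} rankwidth, so that \cref{thm:lrw} supplies a bounded-pathwidth encoding of the interactions seen along that path; these path-encodings are then glued along the branching of $T$ into a single vertex-labelled graph of bounded treewidth from which one fixed transduction reconstructs $G$ (equivalently, one would show $G$ is a bounded number of ``flips'' away from a bounded-treewidth graph that is moreover available to the transduction).

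The step I expect to be the main obstacle is precisely this passage from the linear host to the branching host: one must show that monadic stability prevents the rank decomposition from hiding ``grid-like'' or ``ladder-like'' interactions spread across several branches of $T$ — such patterns would force the treewidth of \emph{any} sparse model to be unbounded, so the whole argument would be circular without them ruled out — and the only tool available for this is stability, i.e.\ the absence of large half-graphs even after recolouring, exactly as in the linear case of~\cite{SODA_msrw} but now considerably complicated by the tree structure. A secondary but unavoidable obstacle is to carry out the entire reduction effectively, computing the decomposition, the recolourings, and the target bounded-treewidth graph in polynomial time, which is what yields the algorithmic consequences stated in the abstract.
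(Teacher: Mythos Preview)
Your routing of the easy implications is fine, and your detour \cref{e:sbt-er}$\Rightarrow$\cref{e:sbt-mst} via \cref{thm:ms} together with monadic dependence of bounded-rankwidth classes is a legitimate shortcut. Note, however, that the paper does \emph{not} take this route: it proves \cref{e:sbt-er}$\Rightarrow$\cref{e:sbt-sbt} directly from the exclusion of a half-graph of some fixed order $h$ as a semi-induced subgraph, without ever invoking \cref{thm:ms}. The equivalence of the first three items then falls out of the cycle \cref{e:sbt-er}$\Rightarrow$\cref{e:sbt-sbt}$\Rightarrow$\cref{e:sbt-mst}$\Rightarrow$\cref{e:sbt-st}$\Rightarrow$\cref{e:sbt-er}, so \cref{thm:ms} is logically independent of \cref{thm:main} in the paper.

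The real gap is in your plan for \cref{e:sbt-mst}$\Rightarrow$\cref{e:sbt-sbt}. Applying \cref{thm:lrw} as a black box along each root-to-leaf path of a rank decomposition and then ``gluing'' is not what the paper does, and there is no evident way to make it work: the ``evolution of classes along a path'' is not itself a graph from $\Cc$ to which \cref{thm:lrw} applies, and even granting per-path bounded-pathwidth encodings, overlaying unboundedly many of them on a shared tree has no reason to yield bounded treewidth. You correctly identify the gluing as the main obstacle, but you are missing the ingredient that dissolves it.

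That ingredient is Colcombet's tree variant of Simon's factorization, used via \cite{bonamy2019graphs}: the $k$-NLC-tree is recursively factored to depth at most $3k^k$ so that at each level the quotient tree is either shallow or \emph{splendid}, meaning the set of edge labels is forward Ramsey in $\mathcal S_k$. In a splendid quotient, compositions along any path collapse to a fixed partition $\Gamma$ of $[k]$ (\cref{lem:class}), and one can define, for each node $x$ and each ancestor $y$, a finite $(\gamma_0,\gamma_1)$-type recording how $\gamma_0$- and $\gamma_1$-vertices branching off at $y$ relate to $x$. The key combinatorial lemma (\cref{cl:small-alternation}) is that a long alternation of these types along the path above $x$ yields a large semi-induced half-graph in $G$; hence the path splits into $O(h)$ blocks of constant type (\cref{cl:num-blocks}). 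Only the block boundaries, plus a bounded set of ``pointers'' $g_\gamma,h_\gamma$, need to be recorded, and crucially these data \emph{synchronize} above the least common ancestor of any two nodes (\cref{cl:sync_type}), so the encoding structure has bounded strong $\infty$-coloring number, i.e.\ bounded treewidth. The shallow case is trivial, and an induction over the $3k^k$ levels finishes. Nothing here invokes \cref{thm:lrw}; rather, the argument redoes an analogous but genuinely tree-shaped analysis, with the forward Ramsey property replacing the role that linearity played in~\cite{SODA_msrw}.
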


The implications
\enumref{e:sbt-sbt}$\Rightarrow$\enumref{e:sbt-mst}$\Rightarrow$\enumref{e:sbt-st}$\Rightarrow$\enumref{e:sbt-er}
are obvious.  For the implication
\enumref{e:sbt-er}$\Rightarrow$\enumref{e:sbt-sbt}, we combine the
approach presented in~\cite{SODA_msrw} with the techniques used by
Bonamy and the third author in~\cite{bonamy2019graphs} to prove that
classes of bounded rankwidth are polynomially $\chi$-bounded. Using
the tree variant of Simon's factorization due to
Colcombet~\cite{colcombet2007combinatorial}, the authors
of~\cite{bonamy2019graphs} introduce a bounded-depth recursive
decomposition of the tree encoding of a graph of rankwidth at most $k$
into factors, so that the quotient trees satisfy certain Ramsey
properties. We show that in the absence of a half-graph, these
properties imply that each root-to-leaf path in a quotient tree can be
partitioned into a bounded number of blocks, and the only ``points of
interest'' on the paths are borders between consecutive blocks.  This
leads to an encoding of the graph in question in a graph of bounded
treewidth, which can be decoded using fixed first-order formulas. We
stress that this encoding/decoding scheme is by no means
straightforward: it requires new combinatorial insights and a careful
analysis. The proof is constructive and can be implemented as a
polynomial time algorithm.
\medskip

Further, we show that the equivalence of the first three conditions of
\cref{thm:lrw} is in fact a more general phenomenon that occurs in
every monadically dependent graph class.  Precisely, we prove:

\begin{restatable}{theorem}{msthm}
\label{thm:ms}
For a monadically dependent graph class $\mathscr C$, the
following conditions are equivalent:
\begin{enumerate}[nosep]
\item\label{enum:hg} $\Cc$ has a stable edge relation;
\item\label{enum:s} $\Cc$ is stable;
\item\label{enum:ms} $\Cc$ is monadically stable.
\end{enumerate}	
\end{restatable}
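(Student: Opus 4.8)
The plan is to prove the chain $\enumref{enum:ms}\Rightarrow\enumref{enum:s}\Rightarrow\enumref{enum:hg}\Rightarrow\enumref{enum:ms}$, where the first two implications are trivial (monadic stability implies stability implies stability of the edge relation, since the edge relation is a definable binary relation). The whole content is the implication $\enumref{enum:hg}\Rightarrow\enumref{enum:ms}$: assuming $\Cc$ is monadically dependent and \emph{not} monadically stable, we must produce some half-graph semi-induced in graphs of $\Cc$ (with colors, but the colors can then be absorbed since absence of all half-graphs is hereditary-colorwise). Contrapositively: if $\Cc$ has a stable edge relation and is monadically dependent, then $\Cc$ is monadically stable. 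The natural way to attack this is to work with the negation: a monadically dependent class that is \emph{not} monadically stable; by definition of monadic stability via transductions (from the excerpt, monadic stability $=$ no transduction onto the class of all half-graphs), non-monadic-stability gives a first-order formula $\phi(\tup x;\tup y)$ and, for every $n$, a graph $G_n\in\Cc$ with a coloring such that $\phi$ has the $n$-order property on $G_n$, i.e.\ there are tuples $\tup a_1,\dots,\tup a_n$ and $\tup b_1,\dots,\tup b_n$ with $G_n\models\phi(\tup a_i;\tup b_j)\iff i\le j$. The goal is to ``compress'' this order witnessed by a higher-arity formula down to an order witnessed by the edge relation alone.

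The key engine should be a Ramsey-type / indiscernible-extraction argument. First, by compactness (or by a finite Ramsey argument with the bounds made explicit for effectivity), from the order property one extracts, inside colorings of graphs from $\Cc$, arbitrarily long sequences of tuples $(\tup a_i,\tup b_i)_{i\le n}$ that are \emph{order-indiscernible} with respect to $\phi$ and moreover mutually indiscernible in a suitable sense. The crucial structural input is monadic dependence: by the characterization of monadic dependence in terms of forbidding transductions onto all graphs, one knows that $\Cc$ cannot encode, via any fixed formula, arbitrarily complex bipartite patterns; this is what prevents the extracted indiscernible sequence from being ``too wild''. Concretely, I would invoke (or reprove in the needed quantitative form) the dichotomy that in a monadically dependent class, an order-indiscernible sequence of tuples is, up to finitely many exceptional coordinates and up to recoloring, controlled by a \emph{single} coordinate — so the $\binom n 2$-sized pattern $\phi(\tup a_i;\tup b_j)$, $i\le j$, must already be ``read off'' from the behaviour of one distinguished element of $\tup a_i$ against one distinguished element of $\tup b_j$ through a short first-order formula. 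One then re-colors the graph by unary predicates marking these distinguished elements and the auxiliary data along the indiscernible sequence, after which a bounded first-order formula in the edge relation plus these unary predicates defines a long half-graph — and unary predicates are exactly what ``stable edge relation'' in the monadic sense is allowed to swallow, contradicting $\enumref{enum:hg}$.

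The main obstacle is precisely this compression step: showing that in a monadically dependent class, the order property for a high-arity formula forces the order property for a formula of the form ``$x$ adjacent to $y$'' after coloring. The difficulty is that a priori $\phi(\tup a_i;\tup b_j)$ could depend on the full $|\tup x|+|\tup y|$-type of $(\tup a_i,\tup b_j)$, which involves many pairs of coordinates and the colors; one needs the monadic dependence hypothesis to collapse this to essentially one pair of coordinates. I expect to handle it by the following substeps: (i) extract a mutually indiscernible array so that the type of $(\tup a_i,\tup b_j)$ depends only on the comparison of $i$ and $j$; (ii) use monadic dependence to argue that the ``crossing edges'' between the $\tup a$-part and the $\tup b$-part of the array cannot realize an unbounded bipartite complexity, hence after a bounded recoloring the relevant crossing pattern is determined by a bounded-size ``kernel'' of coordinates; (iii) by a pigeonhole on the bounded kernel, one fixed pair of coordinates together with bounded unary coloring already witnesses the order; (iv) translate the resulting definable order into a semi-induced half-graph in a recoloring of a graph from $\Cc$. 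Each of (ii)–(iii) is where the genuinely new work lies; (i) and (iv) are standard. Since all Ramsey and pigeonhole steps can be run with explicit finite bounds, the argument yields, as promised in the introduction, a polynomial-time procedure extracting the half-graph witness, which also gives the effective/algorithmic strengthening.
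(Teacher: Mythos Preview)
Your high-level strategy is right: the content is \enumref{enum:hg}$\Rightarrow$\enumref{enum:ms}, argued by contrapositive, and monadic dependence is what blocks the ``wild'' case in a Ramsey-type argument. But the route the paper takes is quite different from yours, and your proposal has a genuine gap at the point you yourself flag as ``the main obstacle''.

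The paper does not extract indiscernible arrays and then argue semantically about crossing patterns. Instead it runs a \emph{syntactic quantifier elimination}. One maintains a ``problematic triple'' $(\alpha(\tup x),\beta(\tup y),\eta(\tup x,\tup y))$, where $\alpha,\beta$ are \emph{functional} formulas (each tuple is determined by a single coordinate) and $\eta$ admits arbitrarily long ladders in some monadic expansion. One then shows closure under replacing $\eta$ by $\neg\eta$, under passing from $\eta_1\vee\eta_2$ to one disjunct (Ramsey), and, crucially, under stripping an existential quantifier $\eta=\exists z\,\zeta$. This last step is where the real work and the hypothesis of monadic dependence live: one applies the \emph{Canonical Ramsey Theorem} (Erd\H{o}s--Rado) to the colouring of pairs $(i,j)$ by the chosen witness $c_{i,j}$ for $z$, and the ``rainbow'' outcome (all $c_{i,j}$ distinct) is excluded precisely because it would allow a transduction of all bipartite graphs, contradicting monadic dependence. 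The remaining outcomes let one absorb $z$ into $\tup x$ or $\tup y$ while keeping $\alpha',\beta'$ functional. Iterating drives $\eta$ down to an atomic formula, which must be $E(x,y)$, so the \emph{bare} edge relation is unstable.

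Your steps (ii)--(iii) gesture at the same endpoint but do not supply a mechanism. ``Crossing edges cannot realize unbounded bipartite complexity'' is a statement about VC-type behaviour that does not by itself reduce the arity or eliminate a quantifier; you still need a concrete device that, given an unstable $\exists z\,\zeta$, produces an unstable $\zeta$ with $z$ reassigned to one side. That device in the paper is the Canonical Ramsey Theorem plus the functionality bookkeeping; nothing in your outline plays that role. Also, two small corrections: the conclusion must be instability of the \emph{uncoloured} edge relation $E(x,y)$, not merely of a Boolean combination of $E$ with unary predicates, so ``unary predicates are what stable edge relation is allowed to swallow'' is not quite right as stated (it becomes right only once $\eta$ is literally atomic); and the polynomial-time/effectivity remark in the introduction refers to \cref{thm:main}, not to \cref{thm:ms}.
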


 Note that implications
\enumref{enum:ms}$\Rightarrow$\enumref{enum:s}$\Rightarrow$\enumref{enum:hg}
are obvious. However, these implications can be strict for dependent
but not monadically dependent classes. For the implication
\enumref{enum:ms}$\Rightarrow$\enumref{enum:s} this is witnessed by
the class of \mbox{$1$-subdivided} half-graphs, which is dependent and
excludes some half-graph as a semi-induced subgraph, but is not
monadically stable.  For implication
\enumref{enum:s}$\Rightarrow$\enumref{enum:hg} this is witnessed by
the class of $1$-subdivided cliques, which is stable and thus
dependent, but is not monadically stable.

The proof of implication
\enumref{enum:hg}$\Rightarrow$\enumref{enum:ms} relies on the idea of
quantifier elimination.  Assuming that~$\Cc$ is not monadically
stable, we start with a formula $\varphi(\tup x,\tup y)$ that is
unstable in some {\em{monadic expansion}} $\Cc^+$ of~$\Cc$; that is,
$\Cc^+$ consists of graphs from $\Cc$ with some unary predicates
added.  Then we iteratively reduce $\varphi$ to simpler and simpler
unstable formulas while enriching~$\Cc^+$ with more unary predicates.
Eventually we find an atomic formula that is unstable on some monadic
expansion of $\Cc$, so $\Cc$ has an unstable edge
relation.  The assumption that $\Cc$ is monadically dependent is
crucially used in each quantifier elimination step.

\medskip

Moreover, \cref{thm:main} has important corollaries for classes with
\emph{low rankwidth covers/colorings} (introduced in~\cite{KwonPS20}).  It
follows from \cite{bonamy2019graphs} that classes with low rankwidth
covers are polynomially $\chi$-bounded.  Excluding a semi-induced
half-graph allows us to get a stronger property.
 
 \begin{restatable}{theorem}{thmrwcovchi}
   \label{thm:rwcovchi}
   Every class with low rankwidth covers and stable edge relation
   is linearly $\chi$-bounded.
 \end{restatable}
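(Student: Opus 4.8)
The plan is to deduce the statement directly from \cref{thm:main}, by reducing it to the linear $\chi$-boundedness of classes of bounded rankwidth with a stable edge relation. So let $\Cc$ be a class with low rankwidth covers and a stable edge relation. Unpacking the second hypothesis, I fix an integer $t$ such that no graph in $\Cc$ contains the half-graph on $2t$ vertices as a semi-induced subgraph. Unpacking the first hypothesis (with the relevant covering parameter equal to, say, $1$), I fix integers $N$ and $k$ such that every $G\in\Cc$ admits vertex subsets $V_1,\dots,V_N$ with $V(G)=V_1\cup\dots\cup V_N$ and with each induced subgraph $G[V_i]$ of rankwidth at most $k$.

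The first thing to observe is that every $G[V_i]$ still excludes the half-graph on $2t$ vertices as a semi-induced subgraph: any two disjoint sets $A,B\subseteq V_i$ induce in $G[V_i]$ exactly the same bipartite graph as they do in $G$, so a semi-induced half-graph in $G[V_i]$ would already occur in $G$. Hence each $G[V_i]$ lies in the class of all graphs of rankwidth at most $k$ that exclude this half-graph as a semi-induced subgraph; by \cref{thm:main} this class is a transduction of a class of bounded treewidth, and consequently it is linearly $\chi$-bounded, say with $\chi(H)\le c\cdot\omega(H)$ for every member $H$, for a constant $c=c(k,t)$.

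It then remains to turn the cover into a coloring. I would assign each vertex $v$ to the least index $i(v)$ with $v\in V_{i(v)}$, properly color each $G[V_i]$ with a private palette of $\chi(G[V_i])$ colors, and give each $v$ the color it receives in $G[V_{i(v)}]$. Two adjacent vertices either share their assigned index, and are then colored differently within that part, or have different assigned indices, and then draw colors from disjoint palettes; so the coloring is proper and uses $\sum_{i=1}^N\chi(G[V_i])$ colors. Therefore $\chi(G)\le\sum_{i=1}^N\chi(G[V_i])\le c\sum_{i=1}^N\omega(G[V_i])\le N\cdot c\cdot\omega(G)$, using $\omega(G[V_i])\le\omega(G)$. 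Since $N$ and $c$ depend only on $\Cc$, this is a linear bound; the same argument applies to the induced-subgraph closure of $\Cc$, as both hypotheses are hereditary.

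I do not expect a genuine obstacle in this derivation: all the difficulty sits in \cref{thm:main} and in extracting the linear $\chi$-bounding function in the bounded-rankwidth case. The only points needing care are the (routine) facts that bounded rankwidth and the absence of a semi-induced half-graph pass to induced subgraphs, and that a cover -- as opposed to a partition -- of the vertex set still yields an additive bound on the chromatic number.
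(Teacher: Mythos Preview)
Your argument is correct and follows the same route as the paper: apply the cover with parameter $p=1$, observe that the cover pieces inherit the excluded half-graph and thus form a class of bounded rankwidth with stable edge relation, invoke \cref{thm:main}, and combine the colorings. The paper phrases the last step as a partition rather than a cover, but your explicit reduction from cover to partition via $i(v)$ is exactly what makes that work. One point worth making explicit: your ``consequently it is linearly $\chi$-bounded'' after applying \cref{thm:main} is not immediate from the theorem statement itself; the paper justifies this step by noting that a transduction of a bounded-treewidth class has structurally bounded expansion and then citing \cref{cor:SBEchi}, which gives the linear $\chi$-bound via a cograph partition.
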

 In particular, \cref{thm:rwcovchi} implies that classes with
 bounded rankwidth and stable edge relation are linearly
 $\chi$-bounded.
 Also, requiring that a class has a stable edge relation gives
 the following~collapse. 
 
\begin{restatable}{theorem}{thmrwcov}
  \label{thm:rwcov}
  A class has low rankwidth covers and a stable edge relation if
  and only if it is a transduction of a class with bounded expansion.
 \end{restatable}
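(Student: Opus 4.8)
The plan is to treat the statement as the ``cover version'' of \cref{thm:main} and derive it from that theorem together with the characterisation of bounded expansion by low treedepth colourings and the standard fact that monadic stability is preserved by transductions.

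For the backward implication, suppose $\Cc$ is a transduction of a class $\Dd$ of bounded expansion. Bounded expansion classes are nowhere dense, nowhere dense classes are monadically stable, and monadic stability is transduction-closed, so $\Cc$ is monadically stable; in particular it is stable, hence has a stable edge relation. That $\Cc$ has low rankwidth covers is essentially the reason this notion was tied to bounded expansion \cite{KwonPS20}; one also obtains it from low treedepth colourings of $\Dd$: colour each $H \in \Dd$ with boundedly many colours so that any $p$ colour classes induce treedepth at most $p$; then the images under the transduction of the (boundedly many) unions of at most $p$ colour classes cover the corresponding $G \in \Cc$, each such image being a transduction of a bounded-treedepth graph — hence of bounded shrubdepth, a fortiori of bounded rankwidth — and every $p$-element subset of $V(G)$ is contained in one of them. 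So in fact $\Cc$ has low shrubdepth covers.

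For the forward implication, assume $\Cc$ has low rankwidth covers and a stable edge relation. Fix $q$ and let $\mathcal{U}_q$ be the promised cover: each $G \in \Cc$ is covered by boundedly many sets $U$, each with $G[U]$ of rankwidth at most $d_q$, such that every $q$-element subset of $V(G)$ lies in some $U$. Excluding a fixed half-graph as a semi-induced subgraph is inherited by induced subgraphs, so $\Cc_q := \{G[U] \mid G \in \Cc,\ U \in \mathcal{U}_q(G)\}$ has bounded rankwidth and a stable edge relation; by \cref{thm:main} it is a transduction of a class of bounded treewidth, hence of a class of bounded expansion, so by the backward implication $\Cc_q$ has low shrubdepth covers. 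Composing $\mathcal{U}_q$ with the parameter-$q$ shrubdepth covers of the members of $\Cc_q$ then yields, for every $q$, a cover of $\Cc$ by boundedly many pieces of bounded shrubdepth containing all $q$-element subsets; that is, $\Cc$ has low shrubdepth covers. It remains to prove that every class with low shrubdepth covers is a transduction of a class of bounded expansion.

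I expect this last step to be the main obstacle. The naive construction — take, for each piece $U$ of the depth-$2$ cover, the bounded-treedepth preimage of $G[U]$ under the relevant transduction, lay these preimages down disjointly, and add a bounded-degree gadget identifying the copies of each vertex of $G$ — fails: identifying copies across disjoint bounded-treedepth graphs can create arbitrarily dense shallow minors, exactly as two star forests (each of treedepth two) glued along a matching realise the $1$-subdivision of $K_n$, a class of unbounded expansion and, tellingly, not monadically stable. One must exploit the covers at \emph{all} depths $p$, not only $p = 2$: since for every $p$ the $p$-cover has boundedly many pieces and every vertex lies in boundedly many of them, the base graph can be equipped with a bounded-size colouring — read off from the finer covers and their bounded-treedepth preimages — that certifies a low treedepth colouring, hence bounded expansion, of the base class, while the depth-$2$ cover still suffices to recover the edges of $G$ via a fixed first-order formula. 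Producing one colouring that simultaneously controls all depths and stays compatible with the per-parameter transductions supplied by \cref{thm:main} is the technical heart; it parallels the passage from low treedepth covers to low treedepth colourings in sparsity theory, carried out through a transduction, and may alternatively be extracted from existing structure theory for transductions of bounded expansion classes.
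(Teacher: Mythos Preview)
Your overall strategy matches the paper's: reduce both directions to the equivalence ``structurally bounded expansion $\Leftrightarrow$ low shrubdepth covers'', which is precisely \cref{thm:SBE} (proved in \cite{SBE_TOCL}). The paper simply invokes this theorem as a black box; you instead flag it as the ``main obstacle'' and sketch a construction. That sketch is unnecessary --- the result is already in the literature and is stated in the paper --- and, as you yourself note, your construction is incomplete.

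There is also a genuine gap in your backward-direction sketch. You write that the ``images under the transduction'' of the unions of $p$ colour classes give the desired shrubdepth cover of $G$. But a transduction does not commute with taking induced subgraphs: if $G$ is obtained from $H^+$ via an interpretation with edge formula $\eta(x,y)$, then whether $G\models\eta(u,v)$ may depend on vertices of $H$ lying outside the chosen colour classes (through quantification in $\eta$). Hence neither does $G[U\cap V(G)]$ arise as a transduction of $H[U]$, nor does the transduction applied to $H[U]$ yield $G[U\cap V(G)]$. This is exactly the subtlety that \cref{thm:SBE} handles, and it is why the paper cites it rather than arguing directly. Once you invoke \cref{thm:SBE} for both directions, your proof collapses to the paper's: forward, apply \cref{thm:main} to each cover class $\mathscr R_p$ to get structurally bounded treewidth, hence (by \cref{thm:SBE}) low shrubdepth covers, then compose covers and apply \cref{thm:SBE} again; backward, \cref{thm:SBE} gives low shrubdepth covers directly, and monadic stability of nowhere dense classes gives the stable edge relation.
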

 \medskip
 
Our results together with observations present in the literature are
illustrated by the semi-lattice of properties of graph classes
 in \cref{fig:lattice}. See \cref{fig:lattice_ext}  in \cref{sec:conclusion} for an
extended version of the schema.

\begin{figure}[t]
\begin{minipage}{.5\textwidth}
\includegraphics[width=\textwidth]{}		
\end{minipage}
\begin{minipage}{.5\textwidth}
{\small
\begin{enumerate}[(1)]
	\item Monadically stable = monadically dependent $\cap$ stable edge relation (\cref{thm:ms});
	\item Structurally bounded treewidth = bounded rankwidth $\cap$ monadically stable (\cref{thm:main});
	\item Structurally bounded pathwidth = bounded linear rankwidth $\cap$ structurally bounded treewidth (follows from \cref{thm:lrw}, proved in \cite{SODA_msrw});
	\item Nowhere dense = monadically stable $\cap$ weakly sparse (follows from \cite{DVORAK2018143}, cf. \cite{msrw});
	\item Bounded treewidth = structurally bounded treewidth $\cap$ nowhere dense (= bounded rankwidth $\cap$ weakly sparse \cite{gurski2000tree});
	\item Bounded pathwidth = structurally bounded pathwidth $\cap$ bounded treewidth (= bounded linear rankwidth $\cap$ weakly sparse \cite{gurski2000tree}).
        \end{enumerate}
      }
\end{minipage}
\caption{The semi-lattice of property inclusions. }
\label{fig:lattice}
\end{figure}



\section{Preliminaries}\label{sec:prelims}

\paragraph*{Graphs.}
If $k$ is a positive integer, we write $[k]$ for the set
$\{1,\ldots,k\}$.
We consider finite, simple, undirected graphs.  For a graph $G$
we write~$V(G)$ for its vertex set and $E(G)$ for its edge set.

A graph $H$ is a {\em{subgraph}} of $G$ if
$V(H)\subseteq V(G)$ and $E(H)\subseteq E(G)$.  For $X\subseteq V(G)$,
we write $G[X]$ for the subgraph of $G$ induced by $X$, that is, the
subgraph with vertex set~$X$ and all edges from $G$ with both
endpoints in~$X$. A graph $H$ is an {\em{induced subgraph}} of $G$ if
there exists $X\subseteq V(G)$ such that $H$ is isomorphic to $G[X]$.
For disjoint subsets $X,Y$ of $V(G)$, we write $G[X,Y]$ for the
subgraph of~$G$ {\em semi-induced} by~$X$ and $Y$, that is, the
subgraph with vertex set $X\cup Y$ and all the edges of $G$ with one
endpoint in~$X$ and one endpoint in $Y$. A bipartite graph~$H$ is a
{\em semi-induced subgraph} of $G$ if $H$ is isomorphic to $G[X,Y]$
for some disjoint subsets~$X$ and~$Y$ of $V(G)$. A class $\Cc$ of
graphs excludes a bipartite graph $H$ as a semi-induced subgraph if no
$G\in \Cc$ contains $H$ as a semi-induced subgraph.

The complete bipartite graph
(biclique) with each side of size $t$ is denoted by $K_{t,t}$. The
\emph{half-graph} of order $t$ is the bipartite graph with vertices
$a_1,\ldots, a_t, b_1,\ldots, b_t$ and edges $a_ib_j$ for all
$i,j\in [t]$ with~$i\leq j$.

\paragraph*{First-order transductions.}  We assume familiarity with
first-order logic and refer to \cite{hodges1993model} for background.
We represent graphs as relational structures over a vocabulary
consisting of one binary edge relation symbol~$E$.  For a finite set
of unary relation symbols $\Sigma$, a {\em{$\Sigma$-expansion}} of a
graph~$G$ is a structure~$G^+$ obtained from $G$ by adding unary
relations with symbols in~$\Sigma$; thus, one can think of $G^+$ as of
$G$ with a coloring on the vertex set.  If we do not wish to specify
$\Sigma$, we may simply speak about a {\em{monadic expansion}} of $G$.
For a class $\Cc$ of graphs, a class~$\Cc^+$ is a {\em{monadic
    expansion}} of $\Cc$ if there is a finite set of unary relation
symbols $\Sigma$ such that every element of~$\Cc^+$ is a
$\Sigma$-expansion of a graph in $\Cc$.

For a formula~$\phi(\bar x)$ in the vocabulary of $\Sigma$-expanded
graphs, where $\bar x$ denotes a tuple of free variables, and a
$\Sigma$-expanded graph $G$, we define
$\phi(G)\coloneqq \{\,\bar u\in V(G)^{|\bar x|}\ \colon\ G\models\phi(\bar
u)\,\}$.  In particular, if~$A$ is a unary relation symbol, then
$A(G)=\{\,u\in V(G)\ \colon\ G\models A(u)\}$ and, as expected,
\mbox{$E(G)=\{\,(u,v)\in V(G)\times V(G)\ \colon\ G\models E(u,v)\,\}$}.
 
A {\em simple interpretation} $\mathsf I$ of graphs in
$\Sigma$-expanded graphs is a pair $(\nu(x), \eta(x,y))$ consisting of
two formulas (in the vocabulary of $\Sigma$-expanded graphs), where
$\eta$ is anti-reflexive and symmetric (i.e.~$\vdash \neg\eta(x,x)$
and $\vdash \eta(x,y)\leftrightarrow\eta(y,x)$). If $G^+$ is a
$\Sigma$-expanded graph, then $H=\mathsf I(G^+)$ is the graph with
vertex set~$\nu(G^+)$ and edge set
\mbox{$\eta(G^+)\cap(\nu(G^+)\times\nu(G^+))$}.
 
A \emph{transduction} $\Tr$ (from graphs to graphs) is a pair
$(\Sigma_{\Tr},\mathsf I_{\Tr})$, where $\Sigma_{\Tr}$ is a finite set
of unary relation symbols and $\mathsf I_{\Tr}$ is a simple
interpretation of graphs in \mbox{$\Sigma_{\Tr}$-expanded} graphs. A
graph $H$ can be \emph{$\Tr$-transduced} from a graph $G$ if there
exists a \mbox{$\Sigma_{\Tr}$-expansion}~$G^+$ of $G$ such that
$\mathsf I_{\Tr}(G^+)=H$. A class~$\Dd$ of graphs can be {\em
  $\Tr$-transduced} from a class $\Cc$ of graphs if for every graph
$H\in\Dd$ there exists a graph $G \in\Cc$ such that~$H$ can be
$\Tr$-transduced from $G$.  We also say that $\Tr$ is a
\emph{transduction from}~$\Cc$ \emph{onto} $\Dd$.  Note that if a
class $\Dd$ can be $\Tr$-transduced from a class $\Cc$ and
$\Dd'\subseteq \Dd$, then also $\Dd'$ can be $\Tr$-transduced from
$\Cc$. A class $\Dd$ of graphs can be {\em transduced} from a class
$\Cc$ of graphs if it can be $\Tr$-transduced from~$\Cc$ for some
transduction~$\Tr$. Note that transductions compose in the following
sense: If a class $\Dd$ can be transduced from a class $\Cc$ and a
class $\Ee$ can be transduced from $\Dd$, then $\Ee$ can be transduced
from $\Cc$.

\begin{remark}
  A class has bounded rankwidth if and only if it can be transduced
  from the class of trivially perfect graphs (i.e.\ from
  tree-orders)~\cite{colcombet2007combinatorial}. Hence, if a class
  $\Dd$ can be transduced from a class $\Cc$ of bounded rankwidth,
  then $\Dd$ has bounded rankwidth.
\end{remark}

\paragraph*{Stability and dependence.}  A formula
$\phi(\bar x,\bar y)$ is \emph{unstable} on a class $\Cc$ if for every
integer $n\geq 1$ there exists $G\in\Cc$,
$\bar a^1,\dots,\bar a^n\in V(G)^{|\bar x|}$ and
$\bar b^1,\dots,\bar b^n\in V(G)^{|\bar y|}$ such that
$G^+\models \phi(\bar a^i,\bar b^j)$ if and only if $i\leq j$.  The
formula $\phi(\bar x,\bar y)$ is \emph{stable} on $\Cc$ if it is not
unstable on $\Cc$. The class $\Cc$ has a \emph{stable edge relation}
if the formula $E(x,y)$ is stable on $\Cc$. The class $\Cc$ is
\emph{stable} if every formula $\phi(\bar x,\bar y)$ is stable on~
$\Cc$. The class~$\Cc$ is \emph{monadically stable} if every monadic
expansion $\Cc^+$ of $\Cc$ is stable.

Similarly, a formula $\phi(\bar x,\bar y)$ is \emph{independent} on a
class $\Cc$ if for every integer $n\geq 1$ there exists $G\in\Cc$,
$\bar a^1,\dots,\bar a^n\in V(G)^{|\bar x|}$ and
$\bar b^J\in V(G)^{|\bar y|}$ for all $J\subseteq [n]$ such that
$G^+\models \phi(\bar a^i,\bar b^J)$ if and only if~$i\in J$.  The
formula $\phi(\bar x,\bar y)$ is \emph{dependent} on $\Cc$ if it is
not independent on $\Cc$. The class $\Cc$ has a \emph{dependent edge
  relation} if the formula $E(x,y)$ is dependent on $\Cc$. The class
$\Cc$ is \emph{dependent} if every formula $\phi(\bar x,\bar y)$ is
dependent on~ $\Cc$. The class $\Cc$ is \emph{monadically dependent}
if every monadic expansion~$\Cc^+$ of $\Cc$ is dependent.

It turns out that monadic expansions allow us to circumvent the use of
tuples of variables $\bar x$ and $\bar y$ with length greater than
$1$, as stated next.

\begin{theorem}[follows from \cite{baldwin1985second}, see also
  \cite{anderson1990tree}]
\label{thm:stable_dep}
A class $\Cc$ is monadically dependent if and only if there is no
transduction from $\Cc$ onto the class of all finite graphs.
\end{theorem}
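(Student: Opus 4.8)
I would prove the two implications of the equivalence separately; the forward direction is immediate, and the reverse direction carries all the weight.

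\emph{If $\Cc$ transduces the class of all finite graphs, then $\Cc$ is not monadically dependent.} Fix a transduction $\Tr=(\Sigma_{\Tr},(\nu,\eta))$ witnessing this and let $\Cc^+$ be the monadic expansion of $\Cc$ consisting of all $\Sigma_{\Tr}$-expansions of members of $\Cc$. For each $r$ the powerset graph on $[r]\cup\mathcal P([r])$ is among the finite graphs, hence $\Tr$-transduced from some $G_r\in\Cc$; so there is $G_r^+\in\Cc^+$ and vertices $a^1,\dots,a^r$ and $b^J$ for $J\subseteq[r]$, all in $\nu(G_r^+)$, with $G_r^+\models\eta(a^i,b^J)$ if and only if $i\in J$. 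Thus $\eta(x,y)$ is independent on $\Cc^+$, so $\Cc^+$ is not dependent and $\Cc$ is not monadically dependent.

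\emph{If $\Cc$ is not monadically dependent, then $\Cc$ transduces the class of all finite graphs.} Fix a monadic expansion $\Cc^+$ (with colour set $\Sigma$) and a formula $\phi(\bar x,\bar y)$ independent on $\Cc^+$. The plan has four steps. (1) Reduce to single free variables: produce a monadic expansion $\Cc^{++}$ of $\Cc$ and a formula $\psi(x,y)$ in two single variables that is independent on $\Cc^{++}$. (2) For each $n$ the independence witnesses of $\psi$ realise the powerset pattern on $[n]\cup\mathcal P([n])$, which contains every bipartite graph with at most $n$ vertices in its left part as a semi-induced subgraph (map the left part injectively into $[n]$ and each right vertex to its neighbourhood, which is automatically well defined since the $a^i$ are pairwise distinct, as are the $b^J$); since a transduction survives marking two parts and keeping only the edges between them, $\Cc^{++}$ transduces every finite bipartite graph. (3) Compose with the fixed transduction that sends a bipartite graph $B$ to the graph on the left part of $B$ in which $u\sim v$ iff some right vertex of $B$ has neighbourhood exactly $\{u,v\}$; applied to the vertex–edge incidence bipartite graph of an arbitrary finite graph $F$, this returns $F$, so $\Cc^{++}$ transduces all finite graphs. (4) Finally absorb the colouring of $\Cc^{++}$ into the transduction — a transduction from a monadic expansion of $\Cc$ yields one from $\Cc$ by enlarging $\Sigma_{\Tr}$ — and use that transductions compose, obtaining a single transduction from $\Cc$ onto all finite graphs.

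\emph{The main obstacle} is step~(1): passing from a formula with tuples of parameters to one with single parameters. There is no general model-theoretic reduction of this type for independence (in contrast with stability), and the transductions used here have no copying operation, so one cannot simply ``unpack'' a vertex into a tuple. Instead one works on the witness sequences: after a Ramsey-type pruning one may assume the sequence $(\bar a^i)_i$ and the family $(\bar b^J)_J$ are homogeneous enough that the pattern of coordinate equalities is uniform, and then one adds finitely many colours recording, for the vertices occurring in a fixed set of witnesses, which coordinate positions they realise together with an auxiliary order among them; this lets one rewrite $\phi$ as a single-variable formula evaluated on chosen representatives while preserving the independence pattern for every $n$, the number of colours staying bounded because $\phi$ has fixed arity. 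This is exactly the combinatorial core established in \cite{baldwin1985second} (see also \cite{anderson1990tree}), which I would invoke; steps~(2)--(4) are then routine bookkeeping with transductions.
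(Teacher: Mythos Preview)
The paper does not give a proof of this theorem at all: it is stated with the attribution ``follows from \cite{baldwin1985second}, see also \cite{anderson1990tree}'' and then used as a black box. So there is no proof in the paper to compare your proposal against.

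That said, your sketch is a reasonable outline of how the result is obtained, and you correctly locate the only substantial difficulty in step~(1) --- collapsing the tuples $\bar x,\bar y$ to single variables at the cost of further unary predicates --- which is precisely the content of the Baldwin--Shelah argument you cite. Steps~(3) and~(4) are indeed routine. One small wrinkle in step~(2): mapping ``each right vertex to its neighbourhood'' is not injective when two right vertices share a neighbourhood, so as written you do not get \emph{all} bipartite graphs. This is harmless, since the incidence graphs used in step~(3) have pairwise distinct neighbourhoods on the right; alternatively one can pad the left side with $|R|$ extra dummy elements and send the $j$th right vertex to its neighbourhood together with the $j$th dummy, recovering the full claim.
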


\begin{theorem}[\cite{baldwin1985second}]
\label{thm:stable_trans}
A class $\Cc$ is monadically stable if and only if there is no
transduction from $\Cc$ onto the class the all finite half-graphs.
\end{theorem}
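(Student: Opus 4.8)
The plan is to prove the two implications separately; the ``only if'' direction is routine, and the ``if'' direction carries the content.

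For ``only if'', I argue contrapositively: suppose a transduction $\Tr$ witnesses that the class $\mathcal H$ of all finite half-graphs is transduced from $\Cc$. The class $\mathcal H$ is not even stable, since in the half-graph of order $n$ with parts $\{a_1,\dots,a_n\}$ and $\{b_1,\dots,b_n\}$ we have $E(a_i,b_j)$ iff $i\le j$, so $E(x,y)$ is unstable on $\mathcal H$ and $\mathcal H$ is not monadically stable. But monadic stability is preserved under transductions: an unstable formula on a monadic expansion of $\mathcal H$ can be pulled back, by relativising quantifiers to $\nu$, substituting $\eta$ for the edge relation, and enlarging the predicate set of $\Tr$ so as to carry the extra unary predicates, to an unstable formula on a monadic expansion of $\Cc$. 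Hence $\Cc$ is not monadically stable.

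For ``if'', assume $\Cc$ is not monadically stable, so some formula $\phi(\bar x,\bar y)$ is unstable on a monadic expansion $\Cc^+$ of $\Cc$. \emph{Step 1 (reduction to a single-variable formula).} Starting from the order-property witnesses $(\bar a^i)$ and $(\bar b^j)$, and applying Ramsey's theorem to a sufficiently long configuration so that the witnessing sequences become homogeneous enough, one lowers the arity of $\phi$ one coordinate at a time: along the homogenised sequence each coordinate of $\bar x$ (or of $\bar y$) is either forced equal to another coordinate, so it is redundant and is deleted, or it takes a value independent of the position in the sequence, so it is absorbed into a fresh unary predicate marking that single vertex plus a guarded existential quantifier. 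Iterating produces a monadic expansion $\Cc'$ of $\Cc$ and a formula $\psi(x,y)$ in two single variables that is unstable on $\Cc'$; this is essentially the content of~\cite{baldwin1985second,anderson1990tree}. \emph{Step 2 (disjoint witnesses).} Fix $n$ and take, for a large enough $m$, a graph $G\in\Cc'$ with vertices $a_1,\dots,a_m,b_1,\dots,b_m$ such that $\psi(a_i,b_j)$ iff $i\le j$. The $a_i$ are pairwise distinct (two equal ones would give two identical rows of the order pattern) and likewise the $b_j$, so a short combinatorial argument — in the main case, extracting a large independent set from the maximum-degree-$2$ graph that records the coincidences $a_i=b_{i'}$ — lets us pass to a subconfiguration in which $a_1,\dots,a_n,b_1,\dots,b_n$ are $2n$ distinct vertices still realising the order pattern. \emph{Step 3 (the transduction).} Let $\Tr$ have as unary predicates those of $\Cc'$ together with two fresh symbols $A,B$, and interpretation $\nu(x):=A(x)\vee B(x)$ and $\eta(x,y):=(A(x)\wedge B(y)\wedge\psi(x,y))\vee(A(y)\wedge B(x)\wedge\psi(y,x))$. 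Expanding the graph from Step 2 by letting $A$ name $\{a_1,\dots,a_n\}$ and $B$ name $\{b_1,\dots,b_n\}$ (disjoint sets), $\mathsf I_\Tr$ outputs exactly the half-graph of order $n$; since $n$ was arbitrary, $\Tr$ transduces $\mathcal H$ from $\Cc$.

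The crux is Step 1: the interaction between the $\bar x$-coordinate coincidences and the $\bar y$-coordinate coincidences along the witnessing sequence is delicate, and getting the finite indiscernibility bookkeeping right — or instead passing to a saturated model, building an Ehrenfeucht--Mostowski sequence there, carrying out the reduction in the clean infinite setting, and then pulling finite sub-configurations back into $\Cc$ — is where the real work lies. Steps 2 and 3 are elementary once Step 1 is in place.
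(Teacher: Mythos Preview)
The paper does not prove this theorem at all: it is stated with a citation to Baldwin--Shelah~\cite{baldwin1985second} and used as a black box. So there is no ``paper's own proof'' to compare against.

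That said, let me comment on your sketch. The overall architecture is correct, and Steps~2 and~3 are fine. The issue is Step~1, and you already flag it as the crux. Your stated dichotomy --- that after Ramsey homogenisation each coordinate of $\bar x$ is either equal to some other coordinate or takes a value independent of the index $i$ --- is not exhaustive. A coordinate $a^i_\ell$ may perfectly well vary with $i$, be distinct from all other coordinates, and still not be expressible as ``the unique vertex in predicate $P$''. The reduction from tuples to singletons genuinely requires more: one standard route is to mark the columns $\{a^i_\ell : i\}$ with unary predicates and then argue, by an inductive coding argument, that some pair of single coordinates (one from the $\bar x$-side, one from the $\bar y$-side) already carries the order; another is the saturated-model/indiscernible-sequence route you mention at the end. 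Either way, the sentence ``or it takes a value independent of the position in the sequence'' is the gap, and filling it is essentially reproving the Baldwin--Shelah lemma.

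In short: right shape, but Step~1 as written does not go through, and you correctly identify that this is where the content lies.
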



\section{Rankwidth meets stability}
\label{sec:rw}
In this section we prove \cref{thm:main}. We start with some preliminaries on the toolbox introduced by Bonamy and the third author~\cite{bonamy2019graphs}, and then proceed to the proper proof.

\subsection{The toolbox}

\paragraph*{Trees.}  A \emph{tree} is a connected acyclic graph. A
\emph{rooted tree} is a tree $T$ with a distinguished node called the
\emph{root} of $T$, denoted $\top(T)$.  A rooted tree $T$ defines a
partial order on vertices and edges, which we denote by~$\preceq_T$ or
by $\preceq$ if $T$ is clear from the context. In this partial order
we have $\alpha\preceq\beta$ (with $\alpha,\beta\in V(T)\cup E(T)$) if
every path in $T$ that starts at the root and includes $\beta$ also
includes $\alpha$. If~$\alpha$ and~$\beta$ are nodes and
$\alpha\preceq \beta$, then we also say that $\alpha$ is an
{\em{ancestor}} of~$\beta$ and $\beta$ is a {\em{descendant}} of
$\alpha$; note that each node is considered also an ancestor of
itself. We also use terms {\em{parent}} and {\em{child}} with the
standard meaning. The parent of a node $v$ of a rooted tree (or the
node $v$ itself if $v$ is the root) is denoted by $v^{\uparrow}$; we
also denote $(v^\uparrow)^\uparrow$ by $v^{\uparrow\uparrow}$.  Note
that the ancestor partial order is an \emph{inf-semilattice}, with the
meet operation $\wedge$ being the least common ancestor.  The
\emph{leaves} of a rooted tree $T$ are the $\preceq$-maximal nodes of
$T$; the set of all leaves of $T$ is denoted by~$L(T)$.  Note that
from the perspective of first-order logic, a partial order is a
transitively oriented comparability graph. In particular, a tree-order
is a trivially perfect graph with a transitive orientation.

\paragraph*{$\mathcal{S}_k$-trees.} For a positive integer $k$ we let
$\mathcal{S}_k$ be the semigroup of all functions from~$[k]$ to~$[k]$
with composition as the semigroup operation.  That is, for
$f,g\in \mathcal{S}_k$ we write~$f\circ g\in \mathcal{S}_k$ for the
function that maps every $i\in [k]$ to $f(g(i))$. An element $f$ of a
semigroup is \emph{idempotent} if $f\circ f=f$.  An
\emph{$\mathcal{S}_k$-tree} is a tuple $(T,U,\rho,\pi)$, where $T$ is
a rooted tree, $U$ is a set, $\rho\colon E(T)\rightarrow \mathcal S_k$
is a labeling of the edges of $T$ by elements of $\mathcal{S}_k$, and
$\pi\colon U\rightarrow V(T)$ is a mapping from $U$ to the nodes of
$T$.

\paragraph*{Rankwidth, cliquewidth and NLC-width.}  There are various
equivalent ways of capturing the treelike structure of dense graphs
via hierarchical decompositions. The best known measures are probably
\emph{rankwidth}~\cite{oum2006approximating},
\emph{cliquewidth}~\cite{courcelle1993handle}, and
\emph{NLC-width}~\cite{wanke1994k}. All of these measures are
equivalent in the sense that if one measure is bounded on a class of
graphs, then the other measures are also
bounded~\cite{johansson1998clique,oum2006approximating}. We are going
to work with the following variant of NLC-width, which is easily seen
to be equivalent (in the above sense) to the original definition of
NLC-width.

Let $(T,U,\rho,\pi)$ be an $\mathcal{S}_k$-tree. For $x,y\in V(T)$
with $x\preceq_T y$ we denote by $\pth_T(y,x)=(e_1,\dots,e_s)$ the
sequence of edges on the unique path in $T$ from $y$ to $x$.  For
$v\in U$ and $x\preceq_T \pi(v)$ we further define
$\pth_T(v,x)\coloneqq \pth_T(\pi(v),x)$.  We implicitly extend $\rho$
to sequences of edges as follows:
$\rho((e_1,\dots,e_s))\coloneqq \rho(e_s)\circ\dots\circ\rho(e_1)$.

\begin{definition}
  Let $k$ be a positive integer and let $U$ be a set. A
  \emph{$k$-NLC-tree} on $U$ is a tuple \linebreak
  \mbox{$\Tf=(T,U,\rho,\pi, \eta, \chi)$}, where $(T,U,\rho,\pi)$ is
  an $\mathcal{S}_k$-tree,
  $\eta\colon V(T) \rightarrow 2^{[k]\times [k]}$ and
  $\chi\colon U\rightarrow[k]$.  We assume that $\eta$ is symmetric:
  for all $x\in V(T)$ and $(i,j)\in [k]\times [k]$, we have
  $(i,j)\in \eta(x)$ if and only if $(j,i)\in \eta(x)$.
\end{definition}

Let $\Tf=(T,U,\rho,\pi, \eta, \chi)$ be a $k$-NLC-tree.  We define the
\emph{color} in $T$ of $v\in U$ at a node $x\preceq_T\pi(v)$ of $T$ as
$\kappa_{\Tf}(v,x)\coloneqq\rho(\pth(v,x))(\chi(v))$.  The
$k$-NLC-tree $\Tf$ \emph{generates} the graph $G_{\Tf}$ with vertex
set~$U$, defined as follows: For $u\neq v\in U$, let
$x=\pi(u)\wedge_T \pi(v)$. Then $uv\in E(G)$ if and only if
$(\kappa_{\Tf}(u,x),\kappa_{\Tf}(v,x))\in \eta(x)$.

The \emph{NLC-width} of a graph $G$ is the minimum integer~$k$ such
that there exists a $k$-NLC-tree that generates~$G$ (see~\cref{fig:NLC} for an example of $k$-NLC-tree).

 \begin{figure}[ht]
 \begin{center}
   \includegraphics[width=.6\textwidth]{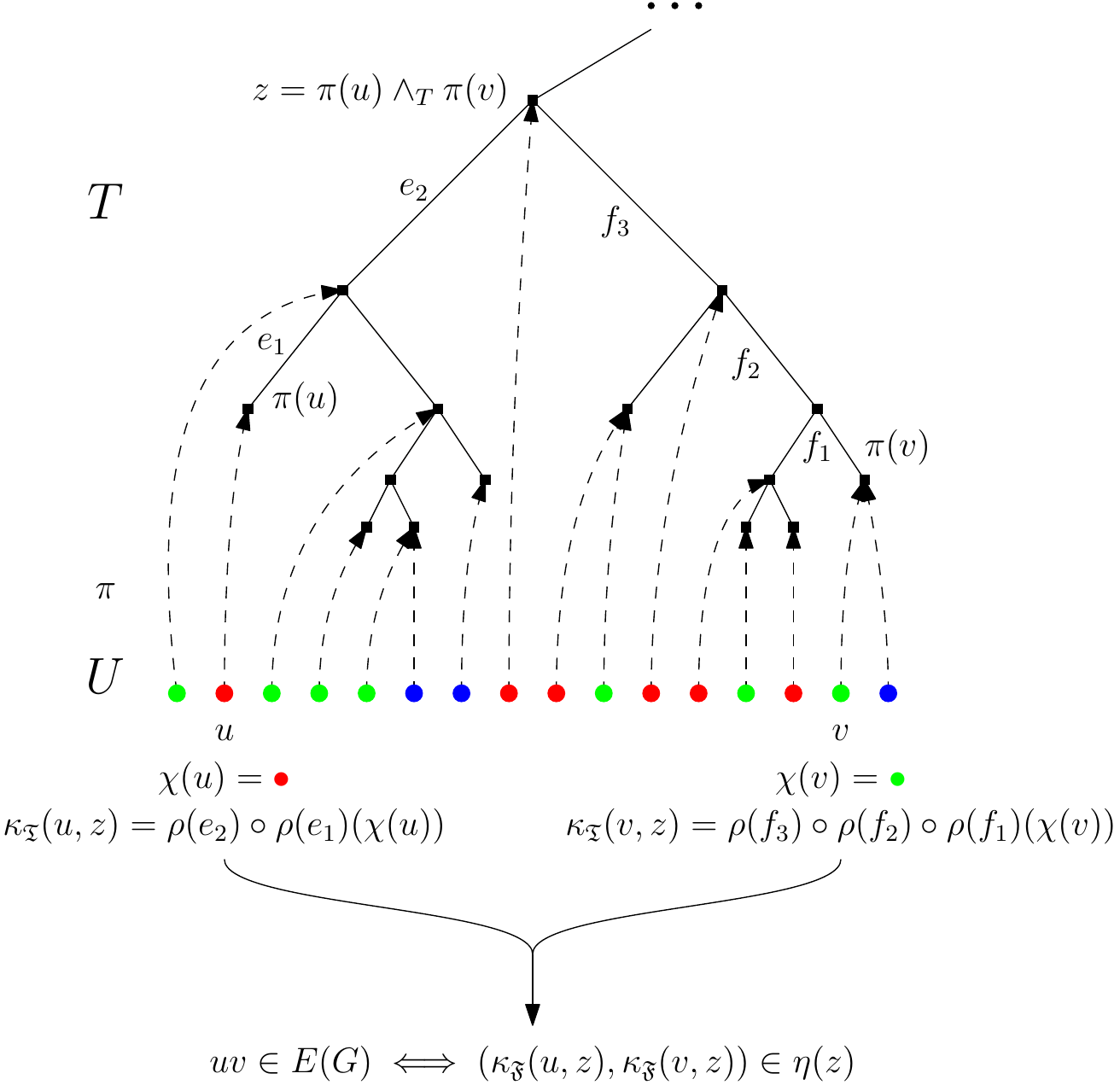}
 \end{center}
 \caption{A $k$-NLC-tree $\Tf=(T,U,\rho,\pi, \eta, \chi)$, and how the
   adjacency of two vertices $u$ and $v$ is determined.}
    \label{fig:NLC}
\end{figure}

Let $\Tf=(T,U,\rho,\pi, \eta, \chi)$ be a $k$-NLC-tree. Let $F$ be a
subtree of $T$ and let $\top(F)$ be the 
root of $F$, that is its $\preceq_T$-least
element. $F$ naturally induces a $k$-NLC-tree
$\Tf_F=(F,U_F,\rho_F,\pi_F,\eta_F,\chi_F)$, where
\mbox{$U_F\coloneqq \{u\in U\mid \pi(u)\succeq_T\top(F)\}$, $\rho_F$}
is the restriction of~$\rho$ to~$E(F)$, $\pi_F(v)$ (for $v\in U_F$) is
the \mbox{$\preceq_T$-maximum} element $x$ of $F$ with
$x\preceq_T \pi(v)$, $\eta_F$ is the restriction of~$\eta$ to~$V(F)$,
and $\chi_F(v)\coloneqq \kappa_{\Tf}(v,\pi_F(v))$.  Note that if $F'$
is a subtree of $F$, then $(\Tf_F)_{F'}=\Tf_{F'}$.

\begin{remark}
\label{rem:goto_factor}
Let $G_{\Tf}$ and $G_{\Tf_F}$ denote the graphs generated by $\Tf$ and
$\Tf_F$, respectively.  Then if for some $u,v\in U_F$ we have
$u\wedge_T v\in V(F)$, then $uv\in E(G_{\Tf})$ if and only if
$uv\in E(G_{\Tf_F})$.
\end{remark}

\begin{definition}
  A \emph{factorization} of\, $\Tf=(T,U,\rho,\pi, \eta, \chi)$ is a
  partition $\mathcal P$ of\, $T$ into vertex-disjoint subtrees.
\end{definition}
  
For a factorization $\mathcal P$ and a subtree $F\in \mathcal P$, the
$k$-NLC-tree $\Tf_F$ is called the \emph{factor of\, $\Tf$ induced
  by~$F$}. We define the \emph{quotient $\mathcal{S}_k$-tree}
$\Tf/\mathcal P=(Y,U,\varrho,\varpi)$ as follows (see \cref{fig:fact}):
\begin{itemize}[nosep]
\item $Y$ is the rooted tree with set of nodes $\mathcal P$, where $F$
  is an ancestor of $F'$ in $Y$ if and only if $\top(F)$ is an
  ancestor of $\top(F')$ in $T$ (i.e.
  $F\preceq_Y F'\iff \top(F')\preceq_T\top(F)$);
\item $\varrho$ is defined as
  $\varrho(F'F)=\rho(\pth_T(\top(F'),\top(F)))$, where $F$ is the
  parent of $F'$ in~$Y$;
\item $\varpi(v)$ is the tree $F\in\mathcal P$ that contains $\pi(v)$.
\end{itemize}  

\begin{figure}[ht]
\begin{center}
	\includegraphics[width=.7\textwidth]{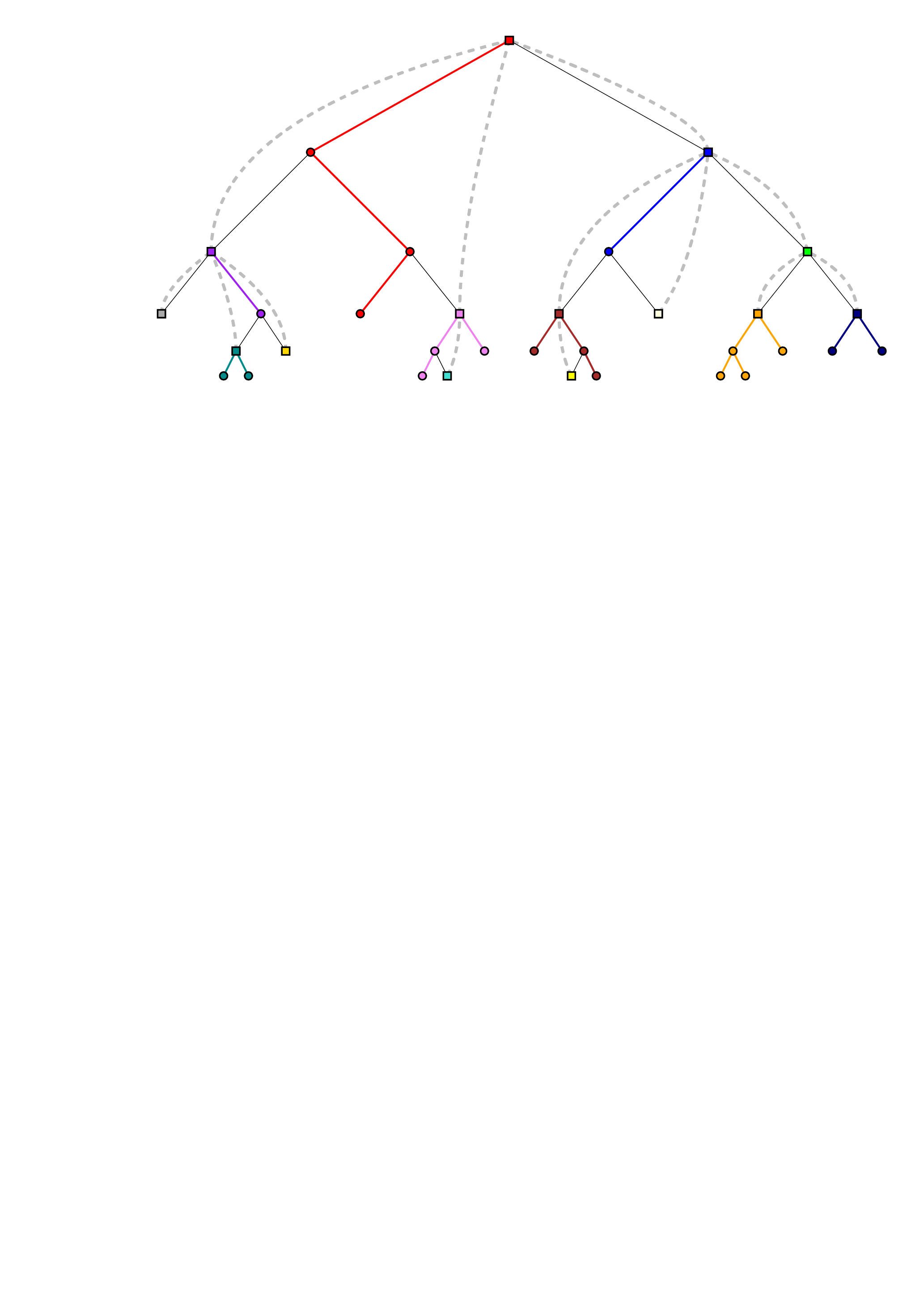}
\end{center}
\caption{Factors and quotient tree (dashed). The square nodes are the
  $\top$ nodes, and represent here the factors in the quotient tree.}
\label{fig:fact}
\end{figure} 

\begin{remark}
  Let $x\preceq_T \pi(v)$, and assume $x\in V(F)$, where
  $F\in\mathcal P$.  If $\pi(v)\in V(F)$, then we have
  $\rho(\pth_T(v,x))= \rho_F(\pth_F(v,x))$.  Otherwise, we have
\[		
  \rho(\pth_T(v,x))=\rho_F(\pth_F(v,x))\circ\rho(e(\top(F')))\circ\varrho(\pth_Y(\varpi(v),F'))\circ\rho_{\varpi(v)}(\pth_{\varpi(v)}(v,\top(\varpi(v)))),
\]
where $F'$ is the child of $F$ in $Y$ satisfying
$\top(F')\preceq_T\pi(v)$, and $e(\top(F'))$ is the edge that connects
$\top(F')$ with its parent in $T$.
\end{remark}

\paragraph*{Forward Ramsey and splendid trees.}  A set $A$ of elements
of~$\mathcal{S}_k$ is \emph{forward
  Ramsey}~\cite{colcombet2007combinatorial} if for all $e,f\in A$ we
have $e\circ f=e$.  In particular, each $e\in A$ is an idempotent in
$\mathcal{S}_k$, that is, $e\circ e=e$.  Note that if $A$ is forward
Ramsey, then it is a semigroup (as it is obviously closed by
composition).  An $\mathcal{S}_k$-tree $(T,U,\rho,\pi)$ is
\emph{splendid} if the set $\{\rho(e) \colon e\in E(T)\}$ is forward
Ramsey. It is \emph{shallow} if it has height $1$, i.e.\@ every
root-to-leaf path has at most one edge.
\medskip

The following lemma follows directly from \cite[Lemma 3.6]{bonamy2019graphs} (which is itself based on \cite{colcombet2007combinatorial}).
\begin{absolutelynopagebreak}
\begin{lemma}[{\cite[Lemma 3.6]{bonamy2019graphs}}]
\label{lem:factorization}
For every integer $k$ there exists a sequence of classes of
$k$-NLC-trees
$\mathcal{F}_0\subseteq \mathcal{F}_1\subseteq \ldots\subseteq
  \mathcal{F}_{3k^k}$ and a partition map
$\Tf\mapsto \mathcal P(\Tf)$, such that
\begin{enumerate}[(1),nosep]
\item $\mathcal{F}_0$ contains only single node $k$-NLC-trees, while
  $\mathcal{F}_{3k^k}$ is the class of all $k$-NLC-trees, and
\item for every $1\leq i\leq 3k^k$ and every $k$-NLC-tree
  $\Tf\in \mathcal F_i$, the factorization $\mathcal P(\Tf)$ of\,
  $\Tf$ is such that all the factors induced by parts of
  $\mathcal P(\Tf)$ belong to $\mathcal{F}_{i-1}$ and the quotient
  tree $\Tf/\mathcal{P}(\Tf)$ is either splendid or shallow.
\end{enumerate}
\end{lemma}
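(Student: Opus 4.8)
The plan is to obtain the lemma directly from the tree version of Simon's factorization forest theorem, due to Colcombet~\cite{colcombet2007combinatorial}; indeed the statement is, up to notation, Lemma~3.6 of~\cite{bonamy2019graphs}, so one legitimate route is simply to invoke that lemma after checking that our formulation of $\mathcal{S}_k$-trees and of the splendid/shallow dichotomy matches theirs. Let me describe the underlying reduction. First I would observe that a factorization of a $k$-NLC-tree $\Tf=(T,U,\rho,\pi,\eta,\chi)$, the induced factors $\Tf_F$ (which re-use $\rho_F=\rho|_{E(F)}$), and the quotient $\mathcal{S}_k$-tree $\Tf/\mathcal{P}$ all depend only on the edge-labelled rooted tree $(T,\rho)$: the components $U,\pi,\eta,\chi$ merely come along for the ride. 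So it suffices to treat a rooted tree $T$ whose edges are labelled by the finite semigroup $S=\mathcal{S}_k$, where $\lvert S\rvert=k^k$.

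Next I would invoke Colcombet's theorem: every such $(T,\rho)$ admits a \emph{recursive factorization} of depth at most $3k^k$. By this I mean a partition $\mathcal{P}$ of $T$ into subtrees together with, for every part $F$, a recursive factorization of $(F,\rho|_{E(F)})$ of smaller depth, such that the quotient tree $T/\mathcal{P}$ — whose edge from a part $F'$ to its parent $F$ carries the product $\rho(\pth_T(\top(F'),\top(F)))$ — is either shallow (every root-to-leaf path has at most one edge) or splendid (its set of edge labels is forward Ramsey), and such that the depth drops by one at each level, bottoming out at single-node trees (depth $0$). Here the factor $3$ is the quantitative constant of the factorization forest theorem, and this is where $\lvert\mathcal{S}_k\rvert=k^k$ enters; the word "forward" in forward Ramsey corresponds, under the convention $\rho((e_1,\dots,e_s))=\rho(e_s)\circ\dots\circ\rho(e_1)$ used in the excerpt, to the fact that on a splendid tree the product along any root-to-leaf path collapses to the label of its topmost edge, which is exactly the Ramsey alternative delivered by the theorem. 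I would then set $\mathcal{F}_i$ to be the class of all $k$-NLC-trees whose underlying $\mathcal{S}_k$-tree admits such a recursive factorization of depth at most $i$, and define $\Tf\mapsto\mathcal{P}(\Tf)$ to return the top-level partition of the canonical recursive factorization produced by the theorem.

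Property~(1) then follows by unwinding definitions: $\mathcal{F}_0$ consists exactly of the single-node $k$-NLC-trees (depth $0$); the inclusions $\mathcal{F}_i\subseteq\mathcal{F}_{i+1}$ are immediate, since a recursive factorization of depth at most $i$ is also one of depth at most $i+1$; and $\mathcal{F}_{3k^k}$ is the class of all $k$-NLC-trees by the depth bound. For property~(2), fix $1\le i\le 3k^k$ and $\Tf\in\mathcal{F}_i$. Then $\mathcal{P}(\Tf)$ is the top level of a recursive factorization of depth at most $i$, so $\Tf/\mathcal{P}(\Tf)$ is shallow or splendid by construction; and for every part $F$, restricting this recursive factorization to the subtree $F$ yields a recursive factorization of $\Tf_F$ of depth at most $i-1$ — here one uses that $\rho_F$ is the restriction of $\rho$, that paths inside $F$ are paths in $T$ (so the quotient-tree labels computed inside $F$ agree with those computed in $T$), and the consistency identity $(\Tf_F)_{F'}=\Tf_{F'}$ recorded in the excerpt. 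Hence $\Tf_F\in\mathcal{F}_{i-1}$, as required.

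The only substantial ingredient is Colcombet's tree factorization theorem itself — the extension of Simon's word theorem to trees with a depth bound linear in $\lvert S\rvert$ — and since we are entitled to cite it (or, equivalently, to cite Lemma~3.6 of~\cite{bonamy2019graphs} verbatim), the remaining work is bookkeeping. The one point that warrants genuine care is the matching of conventions: one must confirm that the Ramsey alternative of the factorization theorem, translated through the composition order $\rho(e_s)\circ\dots\circ\rho(e_1)$ adopted here, yields precisely the forward-Ramsey property of the quotient's edge-label set, and that the recursive factorization is built so as to restrict coherently to factors, so that peeling one level really drops the class index by exactly one.
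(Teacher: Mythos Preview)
Your proposal is correct and matches the paper's approach exactly: the paper does not give its own proof but simply states that the lemma ``follows directly from \cite[Lemma 3.6]{bonamy2019graphs} (which is itself based on \cite{colcombet2007combinatorial}).'' Your write-up is a faithful unpacking of what that citation entails, with the bookkeeping about conventions made explicit.
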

\end{absolutelynopagebreak}

Let $\Tf$ be a $k$-NLC-tree.  The map $\mathcal P(\cdot)$ defines a
\emph{recursive factorization} of $\Tf$, which can be represented as a
rooted tree, whose root is $\Tf$, where nodes are factors of $\Tf$,
and where the children of a factor $\Tf_F$ are the factors of $\Tf_F$
(thus of $\Tf$) induced by the parts of $\mathcal P(\Tf_F)$.

For a $k$-NLC-tree $\Tf$, the \emph{depth} of $\Tf$ is the minimum
integer $i$ such that $\Tf\in\mathcal F_i$.  Note that by
\cref{lem:factorization}, the depth of\ $\Tf$ is always upper bounded
by $3k^k$.


\subsection{Proof of Theorem~\ref{thm:main}}
\label{sec:mainthm}
In this section we prove that if a graph class $\Cc$ has bounded
rankwidth and stable edge relation, then~$\Cc$ can be transduced from
a class of bounded treewidth. Therefore, let us fix positive integers
$k$ and $h$ such that every graph in $\Cc$ admits a $k$-NLC-tree and
does not contain a half-graph of order $h$ as a semi-induced~subgraph.

We shall prove this inductively on the depth, as provided by
\cref{lem:factorization}.  More precisely, in the $i$th step of the
induction we prove that graphs from $\Cc$ that admit a $k$-NLC-tree
belonging to $\Ff_i$ can be transduced from a class of bounded
treewidth. Since the depth of any $k$-NLC-tree is bounded by $3k^k$,
the $3k^k$th step of the induction will end the proof of \cref{thm:main}.

Therefore, let us fix some graph $G$ and a $k$-NLC-tree
$\Tf=(T,U,\rho,\pi, \eta, \chi)$ generating $G$.  We let
$\Pp=\Pp(\Tf)$ be the factorization of $\Tf$ given by
\cref{lem:factorization}, and we denote by $(Y,U,\varrho,\varpi)$ the
quotient $\Ss_k$-tree~$\Tf/\Pp$.  Note that every factor of $\Pp$ has
depth lower than that of $\Tf$, hence we may apply the induction
assumption to it.

We first show how to handle the case when $(Y,U,\varrho,\varpi)$ is
splendid. Then we tackle the shallow case, which is significantly
simpler.  Each of these cases finishes with a technical claim
summarizing the analysis.  These claims are then used in a global
induction scheme.

\subsubsection{Splendid case}

As $(Y,U,\varrho,\varpi)$ is splendid, the set
$R=\{\varrho(e)\colon e\in E(Y)\}$ is forward Ramsey.  The following
lemma shows that the recolorings then have a particularly nice form.

\begin{lemma}[Claim~1 in Lemma~4.4 of~\cite{bonamy2019graphs}]
\label{lem:class}
Let $R\subseteq \Ss_k$ be forward Ramsey.  Then, for some $t\geq 1$,
$[k]$ can be partitioned into parts $\gamma_1,\ldots,\gamma_t$ so that
for every $f\in R$ and every $i\in [t]$ there exists $m_i\in \gamma_i$
such that $f(m)=m_i$ for all~$m\in \gamma_i$.
\end{lemma}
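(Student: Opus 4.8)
The plan is to extract the partition of $[k]$ directly from the structure of $R$ as a forward-Ramsey set. Recall that $R$ being forward Ramsey means $e\circ f=e$ for all $e,f\in R$; in particular every $e\in R$ is idempotent, and the image $e([k])$ is pointwise fixed by every $f\in R$ (since $f(e(m))=(f\circ e)(m)$ and... wait, we only have $e\circ f = e$). Let me reconsider: from $e\circ f = e$ we get that for every $m$, $e(f(m))=e(m)$, so $e$ is constant on each fibre of $f$; and taking $f=e$ gives idempotence. The key consequence I would establish first is that all elements of $R$ have the \emph{same} image: if $e,f\in R$ then $e\circ f=e$ forces $\operatorname{im}(e)\subseteq\operatorname{im}(\text{anything})$... more carefully, $e(m)=e(f(m))\in e(\operatorname{im}(f))$, but since $e$ restricted to $\operatorname{im}(e)$ is the identity and $\operatorname{im}(f)\supseteq$ ... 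I would instead argue: fix $e_0\in R$ and set $I\coloneqq\operatorname{im}(e_0)$. For any $f\in R$, $e_0\circ f=e_0$ shows $f$ is injective on $I$ (if $f(m)=f(m')$ with $m,m'\in I$ then $m=e_0(m)=e_0(f(m))=e_0(f(m'))=e_0(m')=m'$), hence $f$ permutes $I$; but also $f\circ e_0=f$ would be needed to pin down $f|_I$ — which we don't have, so $f|_I$ can be a nontrivial permutation of $I$. This is fine: the parts $\gamma_i$ need not be singletons.

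Next I would build the partition. Define an equivalence relation on $[k]$ by $m\sim m'$ iff $f(m)=f(m')$ for all $f\in R$ — equivalently, iff $e_0(m)=e_0(m')$, since $e_0\circ f = e_0$ means the $e_0$-fibres refine every $f$-fibre, and conversely I must check every $f$-fibre is a union of $e_0$-fibres, i.e. $e_0(m)=e_0(m')\Rightarrow f(m)=f(m')$; this follows because $f(m)=f(e_0\circ f(m))$... hmm, I would rather symmetrize: since the roles of $e_0$ and any other idempotent $e_1\in R$ are interchangeable ($e_0\circ e_1 = e_0$ and $e_1\circ e_0 = e_1$), the partition into $e_0$-fibres equals the partition into $e_1$-fibres. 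So define $\gamma_1,\dots,\gamma_t$ to be the fibres of $e_0$, i.e. $t=|\operatorname{im}(e_0)|$ and the $\gamma_i$ are the preimages $e_0^{-1}(c)$ for $c\in\operatorname{im}(e_0)$. The content to verify is that for \emph{every} $f\in R$, each $\gamma_i$ maps into a single $\gamma_j$ under $f$, and in fact $f(\gamma_i)=\{m_i\}$ for a single element $m_i$ — wait, the statement claims $f$ is \emph{constant} on $\gamma_i$ with value $m_i\in\gamma_i$. So I also need $f(\gamma_i)\subseteq\gamma_i$, i.e. $f$ preserves each part, and is constant on it.

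The crucial computation: let $m\in\gamma_i$, so $e_0(m)=c_i$ where $c_i$ is the unique element of $\operatorname{im}(e_0)\cap\gamma_i$ (note $e_0(c_i)=c_i$ by idempotence, so $c_i\in\gamma_i$). Then $f(m)$: apply $e_0$ to get $e_0(f(m))=(e_0\circ f)(m)=e_0(m)=c_i$, so $f(m)\in e_0^{-1}(c_i)=\gamma_i$. Good — $f(\gamma_i)\subseteq\gamma_i$. For constancy: I need $f(m)=f(m')$ whenever $e_0(m)=e_0(m')$. Using that $f$ is idempotent, $f(m)=f(f(m))$, and I want to relate this to $e_0$. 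Consider $f\circ e_0$: is this in $R$? Not necessarily — $R$ need not be closed under composition in the order $f\circ e_0$ (forward Ramsey gives closure, since $f\circ e_0 = f$? no: $e_0\circ f = e_0$ is the given relation, the other order is unconstrained a priori). Actually the excerpt states "if $A$ is forward Ramsey, then it is a semigroup (as it is obviously closed by composition)" — so $R$ \emph{is} closed under composition, meaning $f\circ e_0\in R$. Then $f\circ e_0$ is idempotent and has the same fibre partition as $e_0$; but $(f\circ e_0)(m) = f(e_0(m)) = f(c_i)$ depends only on $i$, so $f\circ e_0$ is constant $=f(c_i)$ on $\gamma_i$. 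Now since $e_0\circ f = e_0$, precomposing: for $m\in\gamma_i$, is $f(m)=f(c_i)$? We have $f(m)$ and $f(c_i)$ both in $\gamma_i$. Apply $f\circ e_0$ to... this still needs one more turn. I would set $g\coloneqq f\circ e_0\in R$; then $e_0\circ g = e_0$ (given) and $g$ constant on each $\gamma_i$, so $g = f\circ e_0$ has image $\{f(c_1),\dots,f(c_t)\}$, which has size $\le t$; but $g\in R$ forces $g$ to have the same fibre partition as $e_0$ (shown above, by the interchangeability argument applied to $e_0$ and $g$), hence $|\operatorname{im}(g)| = t$, so $f(c_1),\dots,f(c_t)$ are distinct and $g$'s fibres are exactly the $\gamma_i$. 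Finally, since $g=f\circ e_0$ and $e_0(m)=c_i$ for $m\in\gamma_i$: this gives $g(m)=f(c_i)$, which is just what we said. To get $f(m)=f(c_i)$ I use $f$ idempotent and... I think the clean route is: $f(m) = f(e_0(f(m)))$? We have $e_0(f(m)) = (e_0\circ f)(m)=e_0(m)=c_i$, so $f(m)=f(c_i)$ provided $f(x)=f(e_0(x))$ for all $x$ — and $f(e_0(x)) = (f\circ e_0)(x)=g(x)$, while $f(x)$... is $g=f$? Not in general. Hmm — so the constancy claim is exactly the subtle point.

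\textbf{The main obstacle}, then, is precisely proving that each $f\in R$ is constant on each part $\gamma_i$ (not merely that it preserves the part). I expect the right fix is to choose the partition more cleverly — e.g. take $\gamma_i$ to be the fibres of the idempotent $e_0$ but, to guarantee constancy of \emph{all} $f$, one should instead note that $\bigcap_{f\in R}(\text{fibre partition of }f)$ is itself realized: by finiteness pick $e^* \coloneqq e_{i_1}\circ\cdots\circ e_{i_r}\in R$ a finite composite whose fibre partition is the common refinement; then by idempotence of forward-Ramsey elements and the relation $e\circ f = e$, every $f\in R$ has fibre partition \emph{coarser than or equal to} that of $e^*$ but the argument above shows it's also finer — wait, that would make them all equal, contradicting nothing since then $|\mathrm{im}(f)|$ is the same for all $f$ and each $f$ permutes the common image; so $f$ is constant on $e^*$-fibres iff those fibres are singletons on the image and... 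I'd take $\gamma_i \coloneqq$ fibres of $e^*$; then any $f\in R$ has $e^*\circ f = e^*$, so $e^*(f(m))=e^*(m)$, giving $f(\gamma_i)\subseteq\gamma_i$; and the fibre partition of $f$ equals that of $e^*$ (both directions now available by interchangeability), so $f$ is constant on each $\gamma_i$ with value $m_i\coloneqq f(c_i)\in\gamma_i$. That closes it. So the plan: (i) recall forward Ramsey $\Rightarrow$ semigroup of idempotents with $e\circ f = e$; (ii) pick $e^*\in R$ minimizing $|\mathrm{im}|$, show its fibre partition is the common refinement of all fibre partitions and is preserved by every $f\in R$; (iii) define $\gamma_i$ as the $e^*$-fibres and read off $m_i$. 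The delicate verification is step (ii)'s claim that every $f$ is constant on $e^*$-fibres, which I'd nail down via $e^*\circ f = e^*$ together with minimality of $|\mathrm{im}(e^*)|$.
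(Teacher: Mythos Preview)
The paper does not prove this lemma; it is quoted verbatim as Claim~1 in Lemma~4.4 of~\cite{bonamy2019graphs} and simply cited. So there is no in-paper proof to compare against.

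Your eventual argument is correct, but you take a long detour around a fact you already have in hand. Forward Ramsey means $e\circ f=e$ for \emph{all} $e,f\in R$; in particular, swapping the roles gives $f\circ e_0=f$ for every $f\in R$. You even write this down (``interchangeability: $e_0\circ e_1=e_0$ and $e_1\circ e_0=e_1$''), yet a few lines later you ask ``is $g=f\circ e_0$ equal to $f$? Not in general'' --- but it \emph{is}, precisely by forward Ramsey. With that, the whole lemma is two lines: fix any $e_0\in R$, let $\gamma_1,\dots,\gamma_t$ be its fibres; then for every $f\in R$, the identity $e_0\circ f=e_0$ gives $f(\gamma_i)\subseteq\gamma_i$, and the identity $f\circ e_0=f$ gives $f(m)=f(e_0(m))=f(c_i)$ for all $m\in\gamma_i$, where $c_i=e_0(\gamma_i)$. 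Done. Your machinery of choosing $e^*$ with minimal image, common refinements, and the semigroup closure is all unnecessary (and your observation that all elements of $R$ share the same fibre partition, while true and sufficient, is just a repackaging of $f\circ e_0=f$ and $e_0\circ f=e_0$ together).
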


By applying \cref{lem:class} to $R$ we obtain a suitable partition
$\Gamma$ of $[k]$.  For $v\in U$, we let~$\gamma(v)$ be the part of
$\Gamma$ that contains $\kappa_{\Tf}(v,\top(\varpi(v)))$.  We call $v$
a {\em{$\gamma$-vertex}} if $\gamma(v)=\gamma$.

\paragraph*{Types and blocks.}
Throughout this section we use letters $x,y,z$ etc.\@ to denote the
nodes of~$Y$, which are parts of the factorization $\Pp$.  For a node
$x$ of $Y$ we denote by $P(x)$ the set of all ancestors of $x$ in~$Y$,
except for $x$ and its parent in $Y$, that is,
$P(x)=\{y\in V(Y)~:~y\preceq_Y x^{\uparrow\uparrow}\}$.  Recall that
nodes of~$Y$, being factors of $T$, are subtrees of $T$, hence it is
meaningful to say that a node $a$ of $T$ belongs to a node~$x$ of $Y$.

Let $x$ and $y$ be two nodes of $Y$ with $y\in P(x)$. Further, let
$\gamma\in \Gamma$.  Consider any vertex $v\in U$ satisfying
$x\wedge_Y \varpi(v) = y$, and let $a= \top(x)\wedge_T\pi(v)$. (Note
that $a$ is a vertex of $y$, considered as a subtree of $T$.)
Then we say that $x$ is {\em{$(\gamma,y)$-adjacent}} to $v$ 
if for some (equivalently, every) $m\in \gamma$ we have
\[\Bigl(\,\kappa_{\Tf}(v,a),\ \rho\bigl(\pth_T(\top(x),a)\bigr)(m)\,\Bigr)\in \eta(a)\,.\]
Otherwise, we shall say that~$x$ is {\em $(\gamma,y)$-non-adjacent} to
$v$.  Note here that by the properties of $\Gamma$ asserted by
\cref{lem:class}, the value of $\rho(\pth_T(\top(x),a))(m)$ does not
depend on the choice of $m\in \gamma$ whenever~$a$ does not belong to
$x$ or its parent in $Y$.

It may be useful to think of this definition as follows: if $u$ and
$v$ are vertices in $U$, $x=\varpi(u)$, $y=x\wedge_Y \varpi(v)$,
$y\in P(x)$, and $\kappa_{\Tf}(u,\top(x))\in \gamma$ (i.e.\ $u$ is a
$\gamma$-vertex), then $u$ is adjacent to $v$ in $G$ if and only if
$x$ is $(\gamma,y)$-adjacent to $v$.

\newcommand{\tp}{\mathsf{tp}}


Fix $\gamma_0,\gamma_1\in \Gamma$ and a node $x\in V(Y)$; possibly
$\gamma_0=\gamma_1$.  For every node $y\in P(x)$, we define the
{\em{$(\gamma_0,\gamma_1)$-type}} of~$y$ (seen from $x$), denoted
$\tp^x_{\gamma_0,\gamma_1}(y)$, as the pair $s_0s_1$ where $s_i$ is
set as the symbol \vspace{1pt}
\begin{itemize}[nosep]
\item $\void$ if there is no $\gamma_{1-i}$-vertex $v\in U$ satisfying
  $x\wedge_Y \varpi(v)=y$; and otherwise:
\item $+$ if $x$ is $(\gamma_i,y)$-adjacent to all the
  $\gamma_{1-i}$-vertices $v\in U$ satisfying $x\wedge_Y \varpi(v)=y$;
\item $-$ if $x$ is $(\gamma_i,y)$-non-adjacent to all the
  $\gamma_{1-i}$-vertices $v\in U$ satisfying $x\wedge_Y \varpi(v)=y$;
  and
\item $\pm$ otherwise.
\end{itemize}
The following lemma proves a basic synchronization property: for two nodes $x_0,x_1\in Y$, the types with respect to $x_0$ and $x_1$ synchronize above the parent of the least common ancestor of $x_0$ and $x_1$. 

\begin{Alemma}
\label{cl:sync_type}
If $x_0,x_1\in Y$ and $y\in P(x_0\wedge_Y x_1)$, then
$\tp^{x_0}_{\gamma_0,\gamma_1}(y)=\tp^{x_1}_{\gamma_0,\gamma_1}(y)$.
\end{Alemma}
\begin{proofEnd}
  Let $z=x_0\wedge_Y x_1$.  Consider any $m\in \gamma_0$. We have
 \begin{align*}
   \varrho(\pth_Y(x_0,z^\uparrow))(m)&=\varrho(\pth_Y(z,z^\uparrow))\circ\varrho(\pth_Y(x_0,z))(m)\\
                                     &=\varrho(\pth_Y(z,z^\uparrow))\circ\varrho(\pth_Y(x_1,z))(m)&\text{(by \cref{lem:class})}\\
                                     &=\varrho(\pth_Y(x_1,z^\uparrow))(m).
 \end{align*}
 Therefore, for every node $a\in V(T)$ that belongs to $y$ and is an
 ancestor of $\top(x_0)$ (equivalently, is an ancestor of
 $\top(x_1)$), we have
 \begin{align*}
   \rho(\pth_T(\top(x_0),a))(m) & = \rho(\top(z^\uparrow),a)\circ \varrho(\pth_Y(x_0,z^\uparrow))(m)\\
                                & = \rho(\top(z^\uparrow),a)\circ \varrho(\pth_Y(x_1,z^\uparrow))(m) \\
                                & = \rho(\pth_T(\top(x_1),a))(m).
 \end{align*}
 It follows that the first coordinates of
 $\tp^{x_0}_{\gamma_0,\gamma_1}(y)$ and
 $\tp^{x_1}_{\gamma_0,\gamma_1}(y)$ are equal.  That the second
 coordinates are equal as well follows from a symmetric reasoning.
\end{proofEnd}

The next lemma contains the key combinatorial observation of the
proof: a large alternation of types along $P(x)$ gives rise to a large
half-graph as a semi-induced subgraph.

\begin{lemma}\label{cl:small-alternation}
  Suppose in $P(x)$ there are nodes
  \[z_\ell\prec_Y y_\ell\prec_Y z_{\ell-1}\prec_Y y_{\ell-1}\prec_Y
    \ldots \prec_Y z_1\prec_Y y_1\] such that one of the following
  conditions holds:
  \begin{itemize}[nosep]
  \item for each $i\in [\ell]$, the first coordinate of
    $\tp^x_{\gamma_0,\gamma_1}(y_i)$ belongs to $\{+,\pm\}$ and the
    second coordinate of $\tp^x_{\gamma_0,\gamma_1}(z_i)$ belongs to
    $\{-,\pm\}$;
  \item for each $i\in [\ell]$, the first coordinate of
    $\tp^x_{\gamma_0,\gamma_1}(y_i)$ belongs to $\{-,\pm\}$ and the
    second coordinate of $\tp^x_{\gamma_0,\gamma_1}(z_i)$ belongs to
    $\{+,\pm\}$.
  \end{itemize}
  \vspace{1pt}
  Then $\ell\leq 3h$.
\end{lemma}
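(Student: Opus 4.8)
The strategy is to turn a long alternation of types along $P(x)$ into a long half-graph semi-induced in $G$, and then invoke the hypothesis that $\Cc$ excludes the half-graph of order $h$. So suppose we are in the first case (the second being symmetric), and that we have nodes $z_\ell\prec_Y y_\ell\prec_Y\cdots\prec_Y z_1\prec_Y y_1$ in $P(x)$ with the stated type pattern; the goal is to bound $\ell$ by a function of $h$, say $3h$, possibly after discarding a bounded number of indices.

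First I would unpack what the type conditions give us combinatorially. For each $i$, the first coordinate of $\tp^x_{\gamma_0,\gamma_1}(y_i)$ being in $\{+,\pm\}$ means there is at least one $\gamma_1$-vertex $v\in U$ with $x\wedge_Y\varpi(v)=y_i$ to which $x$ is $(\gamma_0,y_i)$-adjacent; pick such a witness vertex $a_i$. Symmetrically, the second coordinate of $\tp^x_{\gamma_0,\gamma_1}(z_i)$ being in $\{-,\pm\}$ gives a $\gamma_0$-vertex $b_i\in U$ with $x\wedge_Y\varpi(b_i)=z_i$ to which $x$ is $(\gamma_1,z_i)$-non-adjacent. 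Pick such witnesses. Now I want to understand adjacency between $a_i$ and $b_j$ in $G$: we have $\varpi(a_i)\succeq_Y y_i$ and $\varpi(b_j)\succeq_Y z_j$, so $a_i\wedge_T b_j$ lands (as a subtree of $T$) in the node $\varpi(a_i)\wedge_Y\varpi(b_j)$ of $Y$. When $i\le j$ we have $z_j\preceq_Y z_i\prec_Y y_i$, so $y_i$ and $z_j$ are both ancestors of the relevant nodes and $\varpi(a_i)\wedge_Y\varpi(b_j)$ lies at or above $z_j$; one checks the meet is exactly (or at least $\preceq_Y$) $z_j$, and then the "useful interpretation" of $(\gamma,y)$-adjacency together with \cref{cl:sync_type} should identify whether $a_ib_j\in E(G)$ with whether $x$ (or, by synchronization, $\varpi(a_i)$ or $\varpi(b_j)$) is $(\gamma_0,z_j)$- or $(\gamma_1,z_j)$-adjacent to the other vertex. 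Conversely, when $i>j$ the meet sits at (or above) $y_i\succ_Y z_i$, and the relevant controlling type is that of $y_i$. The point is that in one of these two regimes the $\{+,\pm\}$ first-coordinate of $y_i$ forces the edge, and in the other the $\{-,\pm\}$ second-coordinate of $z_j$ forbids it — so $(a_i)_i$ and $(b_j)_j$ semi-induce a half-graph (up to reversing the order, and up to the $\pm$ cases which I discuss next).

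\textbf{The main obstacle: the $\pm$ types.} The subtle point is that $\tp^x_{\gamma_0,\gamma_1}(y_i)$ can be $\pm$ rather than $+$, meaning $x$ is $(\gamma_0,y_i)$-adjacent to \emph{some} but not all $\gamma_1$-vertices meeting $x$ at $y_i$ — so the single witness $a_i$ controls adjacency one way, but a different choice would go the other way, and similarly for $b_j$ and $z_j$. The resolution is that for the half-graph we only ever need, for the pair $(a_i,b_j)$ with $i\le j$, that $a_i$ is \emph{adjacent} to $b_j$ (which the $\{+,\pm\}$ condition at $y_i$ combined with our choice of $a_i$ as an adjacency-witness delivers), and for $i>j$ that $a_i$ is \emph{non-adjacent} to $b_j$. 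So I must choose the witnesses $a_i,b_i$ carefully and check that a single choice works simultaneously for all pairs: $a_i$ is picked once, as a witness to the first coordinate of $y_i$; whether its adjacency to $b_j$ is governed by the $y_i$-side or the $z_j$-side depends only on the comparison of $i$ and $j$, not on further choices, because the meet in $Y$ is determined. Assembling this, $(a_1,\dots,a_\ell)$ on one side and $(b_1,\dots,b_\ell)$ on the other (with indices suitably ordered/reversed) semi-induce a half-graph of order roughly $\ell$; excluding the order-$h$ half-graph then forces $\ell$ to be bounded. The constant $3h$ rather than $h$ presumably absorbs a few boundary issues: indices $i$ where the meet lands exactly at the node $y_i$ or $z_j$ itself (rather than strictly above), where the caveat in \cref{lem:class} about $\rho(\pth_T(\top(x),a))(m)$ being independent of $m\in\gamma$ may fail, can be discarded — there are at most a constant number of such bad indices per side — and what remains cleanly forms a half-graph. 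I would organize the write-up as: (1) extract witnesses, (2) a clean adjacency lemma computing $a_ib_j\in E(G)$ from a type coordinate via \cref{cl:sync_type}, (3) discard $O(1)$ boundary indices, (4) read off the half-graph and apply the exclusion hypothesis.
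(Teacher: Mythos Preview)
Your overall plan coincides with the paper's: pick witness vertices $a_i$ (a $\gamma_1$-vertex with $x\wedge_Y\varpi(a_i)=y_i$ to which $x$ is $(\gamma_0,y_i)$-adjacent) and $b_j$ (a $\gamma_0$-vertex with $x\wedge_Y\varpi(b_j)=z_j$ to which $x$ is $(\gamma_1,z_j)$-non-adjacent), use the synchronization property to transfer $(\gamma,y)$-adjacency from $x$ to the actual branch points, and read off a semi-induced half-graph. Your handling of the $\pm$ case is also correct --- one fixes the witness once, and that single choice works uniformly for all pairs.

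The gap is in your step~(3). The factor $3$ does \emph{not} come from discarding $O(1)$ boundary indices, and that fix would not work. The real obstruction is this: to argue that the adjacency of $a_i$ and $b_j$ (with meet $z_j$, say in the regime $i\leq j$) is controlled by the choice of $b_j$, you need to transfer the statement ``$x$ is $(\gamma_1,z_j)$-non-adjacent to $b_j$'' to ``$\varpi(a_i)$ is $(\gamma_1,z_j)$-non-adjacent to $b_j$''. By (the proof of) \cref{cl:sync_type}, this holds provided $z_j\in P(x\wedge_Y\varpi(a_i))=P(y_i)$, i.e.\ $z_j\preceq_Y y_i^{\uparrow\uparrow}$. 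But for $j=i$ we only know $z_i\prec_Y y_i$, and it may well be that $z_i=y_i^{\uparrow}$; likewise in the other regime (meet $y_i$, case $i>j$) one needs $y_i\preceq_Y z_j^{\uparrow\uparrow}$, which can fail for $i=j+1$. These near-diagonal failures arise for \emph{every} index, so removing boundedly many indices from one side cannot eliminate them.

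The paper's remedy is to subsample with spacing three: assuming $\ell>3h$, keep only the witnesses $v_4,v_7,\dots,v_{3h+1}$ (your $a$'s) and $w_2,w_5,\dots,w_{3h-1}$ (your $b$'s). For any pair $(v_{3i+1},w_{3j-1})$ the branch points $y_{3i+1}$ and $z_{3j-1}$ are always separated on $P(x)$ by at least two intermediate nodes from the list (namely $z_{3j},y_{3j}$ when $i\geq j$, and $y_{3i+2},z_{3i+1}$ when $j>i$), which guarantees the grandparent-distance needed for \cref{cl:sync_type} in both regimes. Hence all $h^2$ pairs behave as required and one obtains a semi-induced half-graph of order~$h$, a contradiction.

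A minor point: the caveat after \cref{lem:class} that you cite --- about $\rho(\pth_T(\top(x),a))(m)$ possibly depending on $m\in\gamma$ --- concerns only the case where $a$ lies in $x$ or its parent in~$Y$. Since all the nodes $y_i,z_j$ lie in $P(x)$ by hypothesis, this never arises here; it is not the source of the factor~$3$.
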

\begin{proof}
  Let us assume that the first of the two conditions holds, as the
  proof in the second case is analogous.  Suppose for contradiction
  that $\ell> 3h$.  By the definition of types, for each $i\in [\ell]$
  we can find a $\gamma_1$-vertex~$v_i$ that is
  $(\gamma_0,y_i)$-adjacent to $x$.  Similarly, for each $i\in [\ell]$
  we can find a $\gamma_0$-vertex $w_i$ that is
  $(\gamma_1,z_i)$-non-adjacent to $x$.  It easily follows from
  \cref{cl:sync_type} (see \cref{fig:hg1}) that vertices
 $$\{v_4,v_7,\ldots,v_{3h+1}\}\qquad\textrm{and}\qquad\{w_2,w_5,w_8,\ldots,w_{3h-1}\}$$
 semi-induce a half-graph of order $h$ in $G$, a contradiction.
\end{proof}

\begin{figure}[ht]
  \begin{center}
    \includegraphics[width=.9\textwidth]{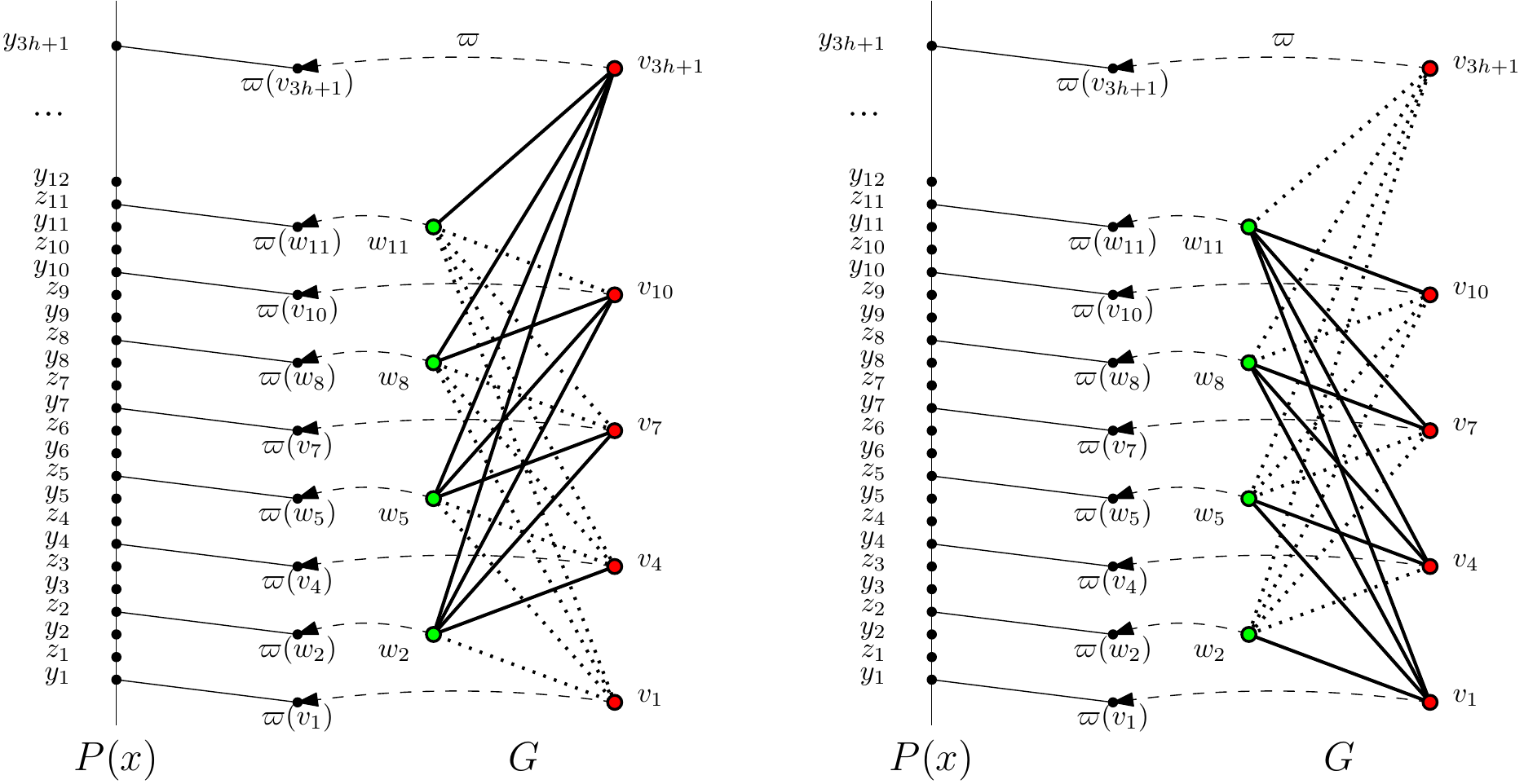}
  \end{center}
  \caption{Illustration for \cref{cl:small-alternation}.  On the left,
    the first coordinate of $\tp^x_{\gamma_0,\gamma_1}(y_i)$ belongs
    to $\{+,\pm\}$ and the second coordinate of
    $\tp^x_{\gamma_0,\gamma_1}(z_i)$ belongs to $\{-,\pm\}$; on the
    right, the first coordinate of $\tp^x_{\gamma_0,\gamma_1}(y_i)$
    belongs to $\{-,\pm\}$ and the second coordinate of
    $\tp^x_{\gamma_0,\gamma_1}(z_i)$ belongs to $\{+,\pm\}$.  The
    green vertices are of class $\gamma_0$, the red ones of class
    $\gamma_1$. Thin edges correspond to paths in $Y$; dashed arcs
    correspond to the $\varpi$ mapping; fat edges correspond to edges
    (plain) and non edges (dotted) of $G$.}
\label{fig:hg1}
\end{figure}

From \cref{cl:small-alternation} we may derive several structural
properties of the sequence of types of nodes on~$P(x)$.  We consider
$P(x)$ as a sequence ordered by the ancestor order, that is, the root
of $Y$ is the first element of this sequence.  Let then $\Types(x)$ be
the sequence of types $\tp^x_{\gamma_0,\gamma_1}(y)$ for $y\in P(x)$,
ordered as in $P(x)$.  In the following, by the {\em{type}} of
$y\in P(x)$ we mean the type $\tp^x_{\gamma_0,\gamma_1}(y)$.

We call an interval $J$ in the sequence $\Types(x)$ {\em{valid}} if it
is of one of the following kinds:
\begin{itemize}[nosep]
\item A {\em{fully mixed}} interval consists of a single node whose
  type does not contain $\void$, but either contains $\pm$ or both $+$
  and $-$.
\item A {\em{positive}} interval consists only of nodes with types in
  $\{\void\void,+\void,\void+,++\}$.
\item A {\em{negative}} interval consists only of nodes with types in
  $\{\void\void,-\void,\void-,--\}$.
\item A {\em{first-biased}} interval consists only of nodes with types
  in $\{\void\void,-\void,+\void,\pm\void\}$.
\item A {\em{second-biased}} interval consists only of nodes with
  types in $\{\void\void,\void-,\void+,\void\pm\}$.
\end{itemize}

Note that the cases are not exclusive.  An interval that is either
first- or second-biased will be just called {\em{biased}}.  Note that
a biased interval can be simultaneously positive and negative. If a
first-biased (resp.\ a second-biased) interval $J$ is neither positive
nor negative (that is, it includes a symbol $\pm$ or both symbols $+$
and $-$), then $J$ is called {\em{mixed-first-biased}} (resp.\
{\em{mixed-second-biased}}).

For a node $y\in P(x)$, let $J(y)$ be the longest valid interval in
$\Types(x)$ that starts at the position corresponding to the node $y$.
Then we define a partition $\Blocks(x)=\{A_1,A_2,\ldots\}$ of
$\Types(x)$ into subsequences, called {\em blocks}, via the following
greedy procedure: if $P(x)=y_1y_2\ldots y_\ell$, then
\begin{itemize}[nosep]
 \item $A_1=J(y_1)$;
 \item $A_2=J(y_{i_1+1})$, where $y_{i_1}$ is the last element of
   $A_1$;
 \item $A_3=J(y_{i_2+1})$, where $y_{i_2}$ is the last element of
   $A_2$, and so on.
\end{itemize}
The construction finishes once all the nodes of $P(x)$ are placed in
the blocks.  The blocks of $\Blocks(x)$ are naturally ordered as in
$\Types(x)$, i.e. $A_1$ contains $y_1$ that is the root of $Y$.

The following lemma shows that the number of blocks in the sequence
$\Blocks(x)$ is always bounded in terms of $h$ --- the order of the
half-graph that is forbidden in the graphs from $\Cc$. This is the key
observation of the proof and, up to a technical reasoning, it follows
from \cref{cl:small-alternation}: many blocks give rise to a large
half-graph in the generated graph.

\begin{Alemma}
\label{cl:num-blocks}
$\Blocks(x)$ contains at most $60h+9$ blocks.
\end{Alemma}
\begin{proofEnd}
  For contradiction suppose $\Blocks(x)$ contains more than $60h+9$
  blocks. Call a block {\em{mixed}} if it is either fully mixed, or
  mixed-first-biased, or mixed-second-biased.
 
 \begin{zclaim}
   $\Blocks(x)$ contains at most $12h+2$ fully-mixed blocks.
\end{zclaim}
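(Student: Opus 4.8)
$\Blocks(x)$ contains at most $12h+2$ fully-mixed blocks.

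The plan is to argue by contradiction, assuming there are strictly more than $12h+2$ fully-mixed blocks, each of which by definition consists of a single node whose $(\gamma_0,\gamma_1)$-type contains no $\void$ but either contains $\pm$ or contains both $+$ and $-$. The first step is to extract, from this large collection of fully-mixed blocks, a long subsequence that is ``coherent'' in the sense needed to invoke \cref{cl:small-alternation}. Concretely, each fully-mixed node $y$ has a type $s_0s_1$ with $s_0\in\{+,-,\pm\}$ and $s_1\in\{+,-,\pm\}$, and ``fully mixed'' forces that it is not simultaneously positive and negative, so (at least) one of the following holds: $s_0\in\{+,\pm\}$ together with $s_1\in\{-,\pm\}$, or $s_0\in\{-,\pm\}$ together with $s_1\in\{+,\pm\}$. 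By pigeonhole, among more than $12h+2$ such nodes, at least half of them — more than $6h+1$, hence at least $3h+1$ in the relevant counting — fall into the same one of these two cases; relabel so that case one holds for all of them.

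The second step is to interleave these nodes to feed \cref{cl:small-alternation}. Having $\ge 3h+1$ nodes $w_1\succ_Y w_2\succ_Y\cdots$ (ordered, say, from deepest to shallowest along $P(x)$), each with first coordinate in $\{+,\pm\}$ and second coordinate in $\{-,\pm\}$, I would set $y_i := w_{2i-1}$ and $z_i := w_{2i}$ for $i$ up to roughly $\ell = \lceil (3h+1)/2\rceil > 3h/2$ — wait, this only gives $\ell$ on the order of $3h/2$, so to actually contradict the bound $\ell\le 3h$ I need to be more careful with constants: starting from $12h+2$ blocks, after halving by pigeonhole I have $>6h+1$ nodes of one case, and pairing them into $(y_i,z_i)$ pairs gives more than $3h$ pairs, i.e. $\ell>3h$ — contradicting \cref{cl:small-alternation}. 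So the arithmetic works out precisely because we pair consecutive nodes after the pigeonhole split. (Note that \cref{cl:small-alternation} requires the $z_i,y_i$ to satisfy $z_\ell\prec_Y y_\ell\prec_Y z_{\ell-1}\prec_Y\cdots\prec_Y z_1\prec_Y y_1$, which is exactly the strict nesting order inherited from $P(x)$, and the two bulleted conditions there match our two pigeonhole cases.)

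The main obstacle I anticipate is not the counting but making sure the hypotheses of \cref{cl:small-alternation} are literally met: that lemma speaks of the first coordinate of $\tp^x_{\gamma_0,\gamma_1}(y_i)$ and the second coordinate of $\tp^x_{\gamma_0,\gamma_1}(z_i)$, so I must verify that after the pigeonhole split the ``$y$-role'' nodes are exactly the ones contributing the constraint on the first coordinate and the ``$z$-role'' nodes the constraint on the second coordinate — which is automatic since in case one \emph{every} selected node simultaneously has first coordinate in $\{+,\pm\}$ and second coordinate in $\{-,\pm\}$, so any node can play either role. Hence the pairing is unconstrained and the contradiction is immediate once the constants are tracked. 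I would then conclude the claim, and the overall proof of \cref{cl:num-blocks} would proceed by separately bounding the mixed-first-biased and mixed-second-biased blocks (again via \cref{cl:small-alternation} applied to the biased patterns) and the non-mixed blocks (using that consecutive blocks cannot both be extendable, by maximality of the valid intervals $J(y)$ in the greedy construction), summing all the contributions to reach the global bound $60h+9$.
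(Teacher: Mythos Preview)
Your proposal is correct and follows essentially the same approach as the paper: partition the fully-mixed types into those satisfying (first coordinate in $\{+,\pm\}$ and second in $\{-,\pm\}$) versus (first in $\{-,\pm\}$ and second in $\{+,\pm\}$), apply pigeonhole to get $\geq 6h+2$ nodes of one kind from $\geq 12h+3$ fully-mixed blocks, then pair them into $\geq 3h+1$ pairs to contradict \cref{cl:small-alternation}. Your mid-proof self-correction on the constants lands exactly where the paper does; the only cosmetic difference is that the paper does not explicitly pair the nodes but simply notes that $6h+2$ nodes of one kind already ``form a structure that is forbidden by \cref{cl:small-alternation}''.
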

\begin{claimproof}
  Suppose there are at least $12h+3$ fully mixed blocks in
  $\Blocks(x)$.  Recall that each fully mixed block consists of a
  single node of type belonging to
  $\{+-,-+,\pm-,\pm+,-\pm,+\pm,\pm\pm\}$.  Hence, we may either find
  at least $6k+2$ nodes with types in the set
  $\{+-,+\pm,\pm-,\pm\pm\}$, or at least $6h+2$ nodes with types in
  the set $\{-+,-\pm,\pm+,\pm\pm\}$.  In both cases, these at least
  $6h+2$ nodes form a structure that is forbidden by
  \cref{cl:small-alternation}, a contradiction.
\end{claimproof}

\begin{zclaim}
  $\Blocks(x)$ contains at most $6h+1$ mixed-first-biased blocks and
  at most $6h+1$ mixed-second-biased blocks.
\end{zclaim}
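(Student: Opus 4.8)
The plan is to deduce the bound from \cref{cl:small-alternation}: if there were too many mixed-first-biased blocks, I would extract from them the long alternating configuration that \cref{cl:small-alternation} forbids. The mild obstacle is that inside a first-biased interval every type has second coordinate $\void$, whereas both alternatives of \cref{cl:small-alternation} refer to the second coordinate of some of the nodes, so nodes taken purely from mixed-first-biased blocks cannot feed \cref{cl:small-alternation}. The resolution is to pair in-block witnesses (carrying the first-coordinate symbols $+$, $-$, $\pm$) with the nodes that occur immediately after these blocks in $P(x)$, which I claim must carry a non-$\void$ second coordinate. Concretely, I would first record two facts about a single mixed-first-biased block $B$. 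Since $B$ is first-biased but neither positive nor negative, it contains a node whose type has first coordinate in $\{+,\pm\}$ and a node whose type has first coordinate in $\{-,\pm\}$ (possibly the same node, of type $\pm\void$). And, unless $B$ is the last block of $\Blocks(x)$, the node $g(B)$ immediately following $B$ in $P(x)$ has type with second coordinate $\neq\void$: for $B$ equals the longest valid interval $J(\cdot)$ starting at its first node, so $B\cup\{g(B)\}$ is not valid, yet if the second coordinate of the type of $g(B)$ were $\void$ then all nodes of $B\cup\{g(B)\}$ would have type in $\{\void\void,+\void,-\void,\pm\void\}$, making $B\cup\{g(B)\}$ first-biased and hence valid --- a contradiction.

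Assuming for contradiction that $\Blocks(x)$ has at least $6h+2$ mixed-first-biased blocks, I would enumerate them as $B^{(1)},\dots,B^{(N)}$ in the order they appear along $P(x)$ and drop $B^{(N)}$, keeping $N-1\ge 6h+1$ blocks each followed by a node $g(B^{(j)})$ whose type has second coordinate in $\{-,\pm\}$ or in $\{+,\pm\}$ (a second coordinate $\pm$ counting for both). By pigeonhole at least $\lceil (6h+1)/2\rceil=3h+1$ of these following nodes have second coordinate in a common one of the two sets; say, in $\{-,\pm\}$, the other case being symmetric. For each such following node $g(B^{(j)})$ I pick, from the next mixed-first-biased block below it (which exists since $j\le N-1$), a node whose type has first coordinate in $\{+,\pm\}$. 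Since the chosen witnesses lie inside their blocks while the nodes $g(\cdot)$ lie just after their blocks, inspecting the order in which all these nodes appear along $P(x)$ shows they assemble into a chain $z_\ell\prec_Y y_\ell\prec_Y\dots\prec_Y z_1\prec_Y y_1$ with $\ell=3h+1$, where the $y_t$'s are the first-coordinate-$\{+,\pm\}$ witnesses and the $z_t$'s are the $g(\cdot)$ nodes. This is exactly the first alternative of \cref{cl:small-alternation}, which then forces $\ell\le 3h$ --- contradicting $\ell=3h+1$. The other pigeonhole case is identical, using first-coordinate-$\{-,\pm\}$ witnesses together with the second alternative of \cref{cl:small-alternation}.

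Hence $\Blocks(x)$ has at most $6h+1$ mixed-first-biased blocks, and the bound on mixed-second-biased blocks follows by the mirror-image argument, exchanging the two coordinates of types throughout (now the blocks themselves carry the second-coordinate symbols while their following nodes carry the non-$\void$ first coordinate). The only delicate point --- and where I would be most careful --- is verifying that the interleaving of the ``in-block'' and ``just-after-block'' nodes is genuinely strictly increasing in the descendant direction along $P(x)$, and that choosing a witness from ``the next block below'' never overruns the block list; these are precisely the reasons the threshold is $6h+2$ blocks, yielding the stated bound $6h+1$. No combinatorial input beyond \cref{cl:small-alternation} and the definitions of the valid intervals is needed.
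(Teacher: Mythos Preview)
Your proposal is correct and follows essentially the same approach as the paper's proof: both arguments observe that the node immediately following a mixed-first-biased block must have non-$\void$ second coordinate (else the block could be extended, contradicting the greedy construction of $\Blocks(x)$), then pigeonhole on the sign of that second coordinate, pair the resulting nodes with first-coordinate witnesses taken from the mixed blocks themselves, and invoke \cref{cl:small-alternation}. The only cosmetic difference is that the paper pairs the node $z_i$ following $X_i$ with a witness $y_i$ chosen \emph{inside} $X_i$, whereas you pair $g(B^{(j)})$ with a witness from the \emph{next} mixed-first-biased block $B^{(j+1)}$; your choice in fact matches the alternation pattern of \cref{cl:small-alternation} slightly more directly, but the two constructions are interchangeable.
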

\begin{claimproof}
  We prove the bound on the number of mixed-first-biased blocks.  The
  bound for mixed-second-biased blocks follows analogously with the
  roles of $\gamma_0$ and $\gamma_1$ exchanged.

  Suppose for contradiction that there are more than $6h+1$
  mixed-first-biased blocks in $\Blocks(x)$.  Let
  $X_1,\ldots,X_{6h+2}$ be any $6h+2$ of them, ordered as in
  $\Types(x)$.  For $i\in [6h+1]$, let $z_i$ be the node of $P(x)$
  that immediately follows the last node of $X_i$.  Note that by the
  construction of $\Blocks(x)$, the second coordinate of the type of
  $z_i$ cannot be $\void$, for otherwise $z_i$ would be in $X_i$.  In
  particular, $z_i\notin X_{i+1}$ and $z_i$ lies in $P(x)$ strictly
  before~$X_{i+1}$.

  As argued, for each $i\in [6h+1]$ the second coordinate of the type
  of $z_i$ belongs to $\{-,+,\pm\}$.  Therefore, there exists a subset
  of indices $I\subseteq [6h+1]$ of size $3h+1$ such that either for
  each $i\in I$, the second coordinate of the type of $z_i$ belongs to
  $\{-,\pm\}$, or for each $i\in I$, the second coordinate of the type
  of $z_i$ belongs to $\{+,\pm\}$.  Assume the former case, as the
  proof in the latter case is symmetric.

  Since each $X_i$ is a mixed-first-biased block, for each $i\in I$ we
  may find a node $y_i\in X_i$ such that the first coordinate of the
  type of $y_i$ belongs to $\{+,\pm\}$. Now the nodes
  $\{y_i,z_i\colon i\in I\}$ form a structure forbidden by
  \cref{cl:small-alternation}, a contradiction.
\end{claimproof}

By the above claims, the total number of mixed blocks is at most
$24k+4$.  Call a block {\em{unaffected}} if it is not mixed and the
block succeeding it exists and is not mixed either.  Then the total
number of unaffected blocks is larger than
$(60h+9)-2\cdot (24h+4)-1=12h$.  Out of these, there are either more
than $6h$ unaffected positive blocks, or more than $6h$ unaffected
negative blocks.  Assume the former case, as the proof in the latter
case is symmetric.
 
Let then $B_1,\ldots,B_{6h+1}$ be any $6h+1$ unaffected positive
blocks, and let $C_1,\ldots,C_{6h+1}$ be the successors of blocks
$B_1,\ldots,B_{6h+1}$, respectively.  Since $B_1,\ldots,B_{6h+1}$ are
unaffected and positive, it follows that $C_1,\ldots,C_{6h+1}$ are
negative blocks.  Observe that for each $i\in [6h+1]$, it cannot
happen that for all the nodes $t\in B_i\cup C_i$, the first coordinate
of the type of $t$ is~$\void$. Indeed, then $B_i\cup C_i$ would be a
first-biased interval, and therefore it would be a valid interval that
would contain the block $B_i$ as a prefix.  Similarly, for each
$i\in [6h+1]$, it cannot happen that the second coordinate of the type
of $t$ is $\void$ for all $t\in B_i\cup C_i$. We conclude that for
each $i\in [6h+1]$, we may find nodes $y_i\in B_i$ and $z_i\in C_i$
such that one of the following alternatives holds:
 \begin{itemize}[nosep]
 \item the first coordinate of the type of $y_i$ is not $\void$ (and
   therefore must be $+$) and the second coordinate of the type of
   $z_i$ is not $\void$ (and therefore must be $-$); or
 \item the second coordinate of the type of $y_i$ is not $\void$ (and
   therefore must be $+$) and the first coordinate of the type of
   $z_i$ is not $\void$ (and therefore must be $-$).
 \end{itemize}
 By the pigeonhole principle, one of these two alternatives holds for
 at least $3h+1$ indices $i\in [6h+1]$.  Suppose this is the first
 alternative, as the proof in the other case proceeds analogously with
 the roles of $\gamma_0$ and $\gamma_1$ exchanged.  It now follows
 that if $I\subseteq [6h+1]$ is a set of size $3h+1$ such that the
 first alternative holds for each $i\in I$, then the nodes
 $\{y_i,z_i\colon i\in I\}$ form a structure forbidden by
 \cref{cl:small-alternation}, a contradiction.
\end{proofEnd}
\medskip

\noindent
For a node $x$ of $Y$, we define the following:
\begin{itemize}[nosep]
\item $Q(x)$ is the set consisting of $x$ and the parent of $x$ in
  $Y$, if existent;
\item $S(x)$ is the set containing, for each block $A\in \Blocks(x)$,
  the $\preceq_Y$-minimal element of~$A$, the $\preceq_Y$-minimal
  element of $A$ whose type belongs to $\{-,\pm\}$ (if existent), and
  the $\preceq_Y$-minimal element of $A$ whose type belongs to
  $\{+,\pm\}$ (if existent);
\item for each $\gamma\in \Gamma$, $g_\gamma(x)$ is the
  $\preceq_Y$-maximal ancestor of $x$ such that there exists a
  $\gamma$-vertex $w$ satisfying $g_\gamma(x)\preceq_Y \varpi(w)$, or
  $g_\gamma(x)=\bot$ if no such ancestor exists.
\end{itemize}
Further, let
$$\LL(x)=Q(x)\cup S(x).$$
By \cref{cl:num-blocks}, we have
$$|\LL(x)|\leq 2+3\cdot (60h+9)\leq 209h.$$ Intuitively,
$\LL(x)\cup \{g_{\gamma}(x)~:~\gamma\in \Gamma\}$ contains all vertices
that are interesting from the point of view of~$x$.

\paragraph*{Recovering edges: combinatorial analysis.}
Let us fix two vertices $u_0,u_1\in U$.  Let
\begin{eqnarray*}
  x_0=\varpi(u_0),& \qquad x_1=\varpi(u_1),\\
  \gamma_0=\gamma(u_0),&\qquad \gamma_1=\gamma(u_1). 
\end{eqnarray*}
Adopting the notation from the previous section, we have sets $P(x_0)$
and $P(x_1)$ and their partitions $\Blocks(x_0)$ and $\Blocks(x_1)$.
Intuitively, our goal is to show that given sets $\LL(x_0)$ and
$\LL(x_1)$, we may either directly infer whether $u_0$ and $u_1$ are
adjacent in $G$, or locate the node $z=x_0\wedge_Y x_1$, that is, the
lowest common ancestor of $x_0$ and $x_1$. In the subsequent section
we will implement this mechanism in first-order
logic. \cref{cl:sync_type} implies that the sequences of types
$\Types(x_0)$ and $\Types(x_1)$ agree on the prefix up to the
grandparent of $z$.

Let $Z$ be the set consisting of:
\begin{itemize}[nosep]
 \item $z$;
 \item the parent of $z$, if existent;
 \item the child of $z$ that is an ancestor of $x_0$, if existent; and
 \item the child of $z$ that is an ancestor of $x_1$, if existent.
\end{itemize}

We will further work under the following assumption:
\begin{equation}\label{p:lca-not-discovered}
  \LL(x_0)\cap Z=\emptyset\qquad\textrm{or}\qquad \LL(x_1)\cap Z=\emptyset. \tag{$\ast$}
\end{equation}
Intuitively, if assumption~\eqref{p:lca-not-discovered} is not
satisfied, then both $\LL(x_0)$ and $\LL(x_1)$ contain either $z$ or its
neighbor in~$Y$, and then locating $z$ will be easy.

Note that the root of $Y$ always belongs to $\LL(x_0)\cap \LL(x_1)$.
Hence, assuming~\eqref{p:lca-not-discovered}, $z$ is neither the root
of~$Y$ nor a child of the root of~$Y$.  Then both $P(x_0)$ and
$P(x_1)$ are non-empty, implying that also $\Blocks(x_0)$ and
$\Blocks(x_1)$ are non-empty.  Let
$$\Blocks(x_0)=\{A_1,A_2,\ldots,A_p\}\qquad\textrm{and}\qquad\Blocks(x_1)=\{B_1,B_2,\ldots,B_q\},$$
where blocks $A_i$ and $B_j$ are ordered naturally by the ancestor
order so that the root of $Y$ belongs to $A_1$ and $B_1$.  For a block
$A_i$, let $\top(A_i)$ be the first (i.e. $\preceq_Y$-minimal) node of
$A_i$; define $\top(B_j)$ analogously.

Let $i$ be the largest index such that $\top(A_i)=\top(B_i)$. Note
that $i$ is well-defined, because $\top(A_1)=\top(B_1)$.  Let
$t=\top(A_i)=\top(B_i)$. Since $t$ is both an ancestor of $x_0$ and of
$x_1$, we have $t\preceq_Y z$.  Furthermore, since
$t\in \LL(x_0)\cap \LL(x_1)$, from~\eqref{p:lca-not-discovered} we infer
that $t\notin Z$.
\begin{Alemma}
\label{lem:z}
The node $z$ has the following properties:
\begin{enumerate}
\item\label{enum:tpz1} $z\in Q(x_0)$ or the first coordinate of
  $\tp^{x_0}_{\gamma_0,\gamma_1}(z)$ is not equal to $\void$;
\item\label{enum:tpz2} $z\in Q(x_1)$ or the second coordinate of
  $\tp^{x_1}_{\gamma_0,\gamma_1}(z)$ is not equal to $\void$;
\item\label{enum:covered} $z\in A_i\cup B_i$.
\end{enumerate}
\end{Alemma}
\begin{proofEnd}
 The first two points follow directly from the existence of vertices $u_0$ and $u_1$. We are left with arguing that $z\in A_i\cup B_i$.
%
  Suppose otherwise. Then both $A_{i+1}$ and $B_{i+1}$ exist, and
  moreover $\top(A_{i+1})\prec_Y z$ and $\top(B_{i+1})\prec_Y z$.  By
  the maximality of $i$ we have $\top(A_{i+1})\neq \top(B_{i+1})$.

  By \cref{cl:sync_type} and the construction of $\Blocks(x_1)$ and
  $\Blocks(x_2)$, every ancestor of the grandparent of $z$ is the top
  vertex of a block in $\Blocks(x_1)$ if and only if it is the top
  vertex of a block in $\Blocks(x_2)$.  Therefore,
  $\top(A_{i+1})\neq \top(B_{i+1})$ together with
  $\top(A_{i+1})\prec_Y z$ and $\top(B_{i+1})\prec_Y z$ implies that
  $\top(A_{i+1})\in Z$ and $\top(B_{i+1})\in Z$. As
  $\top(A_{i+1})\in \LL(x_0)$ and $\top(B_{i+1})\in \LL(x_1)$, this
  contradicts assumption~\eqref{p:lca-not-discovered}.
\end{proofEnd}

\noindent
Let
$R\coloneqq \{\,r~:~t\preceq_Y r \preceq_Y z\textrm{ and } r\notin
Z\,\}$.  Note that $t\in R$, hence $R$ is
non-empty. By~\cref{cl:sync_type}, we have
\begin{equation}\label{eq:sameR}
  \tp^{x_0}_{\gamma_0,\gamma_1}(r)=\tp^{x_1}_{\gamma_0,\gamma_1}(r)\qquad\textrm{for each }r\in R. 
\end{equation}
From the construction of $\Blocks(x_0)$ and of $\Blocks(x_1)$ it then
follows that
\begin{equation}\label{eq:Rcontained}
R\subseteq A_i\cap B_i. 
\end{equation}
We now observe the following.
\begin{Alemma}
\label{cl:R-non-void}
 There exists $r\in R$ such that
 $$\tp^{x_0}_{\gamma_0,\gamma_1}(r)=\tp^{x_1}_{\gamma_0,\gamma_1}(r)\neq \void\void.$$
\end{Alemma}
\begin{proofEnd}
  Suppose otherwise:
  $\tp^{x_0}_{\gamma_0,\gamma_1}(r)=\tp^{x_1}_{\gamma_0,\gamma_1}(r)=\void\void$
  for all $r\in R$.  By 
  \cref{lem:z}, we either have $z\in Q(x_0)$, or
  $\tp^{x_0}_{\gamma_0,\gamma_1}(z)\neq \void\void$.  The latter
  condition implies that either $A_{i+1}$ exists and
  $\top(A_{i+1})\in Z$, or the $\preceq_Y$-minimal element of block
  $A_i$ whose type features a non-$\void$ symbol belongs to $Z$.  In
  each of these three cases we have $\LL(x_0)\cap Z\neq \emptyset$. A
  symmetric reasoning shows that also $\LL(x_1)\cap Z\neq\emptyset$.
  This is a contradiction with
  assumption~\eqref{p:lca-not-discovered}.
\end{proofEnd}

\noindent

We introduce the following notation.  For $\gamma\in \Gamma$ and
$y\in V(Y)$, if there is a unique grandchild~$y'$ of $y$ in $Y$ such
that for every $\gamma$-vertex $v$ satisfying $y\preceq_Y \varpi(v)$
we have $y'\preceq_Y \varpi(v)$, then we set $h_\gamma(y)=y'$.  If
there is no such grandchild, we set $h_\gamma(y)=\bot$.

\begin{Alemma}
\label{lem:thecases}
None of the blocks $A_i$ or $B_i$ is fully mixed. Moreover, depending
on the kinds the blocks~$A_i$ and $B_i$ belong to, we have the
following cases:
\begin{enumerate}
\item\label{enum:A_not_biased} If $A_i$ is not biased,
  then \begin{itemize}
  \item either $A_i$ is positive and $u_0u_1\in E(G)$,
  \item or $A_i$ is negative and $u_0u_1\notin E(G)$.
  \end{itemize}
\item\label{enum:B_not_biased} If $B_i$ is not biased,
  then \begin{itemize}
  \item either $B_i$ is positive and $u_0u_1\in E(G)$,
  \item or $B_i$ is negative and $u_0u_1\notin E(G)$.
  \end{itemize}
\item\label{enum:both_biased} If both $A_i$ and $B_i$ are biased, then
  \begin{itemize}
  \item either both $A_i$ and $B_i$ are first-biased, and then
    $h_{\gamma_0}(z)\neq\bot$,
  \item or both $A_i$ and $B_i$ are second-biased and then
    $h_{\gamma_1}(z)\neq\bot$.
  \end{itemize}
\end{enumerate}
\end{Alemma}
\begin{proofEnd}
 First, we observe the following.

\begin{zclaim}
 None of the blocks $A_i$ or $B_i$ is fully mixed.
\end{zclaim}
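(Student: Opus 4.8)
**Proof plan for the claim "None of the blocks $A_i$ or $B_i$ is fully mixed" and the subsequent case analysis in Lemma~\ref{lem:thecases}.**

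The plan is to first dispose of the "fully mixed" claim, then analyze the three cases by exploiting \cref{cl:R-non-void} together with the existence of the vertices $u_0,u_1$. For the fully-mixed claim: recall that a fully mixed block is a single-node block whose type contains $\pm$ or both $+$ and $-$, and in particular features a non-$\void$ symbol in each coordinate. Suppose $A_i$ were fully mixed. If $z = \top(A_i)$, then since $A_i$ is a single node, $z\in A_i$ and $z = \top(A_i)\in \LL(x_0)$ (as $S(x_0)$ contains the $\preceq_Y$-minimal element of each block); but $z$ is forced to lie outside $Z$ by \eqref{p:lca-not-discovered}, so $z\notin Z$, yet we want to show a contradiction by locating an element of $\LL(x_0)\cap Z$. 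Actually the cleaner route: by \cref{lem:z}\eqref{enum:covered}, $z\in A_i\cup B_i$, so if $A_i$ is a single-node block and $z\in A_i$ then $z=\top(A_i)\in \LL(x_0)\cap\LL(x_1)$ (it is also $\top(B_i)=t$? no — we must be careful). The right argument: since $A_i$ is fully mixed, it is a single node, namely $\top(A_i)$, and $A_{i+1}$ exists with $\top(A_{i+1})$ being the child-or-lower neighbor. If $z\in A_i$, then $z=\top(A_i)=t\in\LL(x_0)\cap\LL(x_1)$, contradicting $t\notin Z$ only if $z\in Z$ — but $z\in Z$ always (it's the lca), so $t=z\in Z$, contradicting the earlier established $t\notin Z$. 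If instead $z\notin A_i$, then $z\in B_i$ by \cref{lem:z}\eqref{enum:covered}; symmetric handling via $B_i$ being a single node gives the same contradiction, or one argues that since $R\subseteq A_i$ is a single-element set $\{t\}$ with $t=\top(A_i)$ having a non-$\void$ type, \cref{cl:R-non-void} plus the block-construction forces $z$ close to $t$, again landing in $Z\cap\LL$.

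For the main case analysis, I would argue as follows. By \cref{cl:R-non-void} fix $r\in R$ with $\tp^{x_0}_{\gamma_0,\gamma_1}(r)=\tp^{x_1}_{\gamma_0,\gamma_1}(r)\neq\void\void$; by~\eqref{eq:Rcontained} this $r$ lies in both $A_i$ and $B_i$. \emph{Case 1: $A_i$ is not biased.} Since $A_i$ is a valid non-biased interval containing a node ($r$) with a non-$\void$ symbol, and $A_i$ is not fully mixed, $A_i$ must be either positive or negative (the biased interval types are the only remaining "non-void-producing" valid kinds). Suppose $A_i$ is positive. Then every node of $A_i$ has type in $\{\void\void,+\void,\void+,++\}$; in particular the first coordinate of every type in $A_i$ is in $\{+,\void\}$. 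Now invoke \cref{lem:z}\eqref{enum:tpz1}: either $z\in Q(x_0)$, or the first coordinate of $\tp^{x_0}_{\gamma_0,\gamma_1}(z)$ is $+$ (since it is non-$\void$ and $z\in A_i\cup B_i$; if $z\in A_i$ this forces $+$; if $z\in B_i$ we use that $B_i$ agrees with $A_i$ on the common prefix $R$ and then track the type of $z$). Translating "$z$ is $(\gamma_0,z)$-adjacent to $u_1$" via the interpretation of $(\gamma,y)$-adjacency stated right after \cref{lem:class} — namely, $u_0$ ($=$ a $\gamma_0$-vertex with $\varpi(u_0)=x_0$, $x_0\wedge_Y\varpi(u_1)=z$, and indeed $z\in P(x_0)$ or $z\in Q(x_0)$) is adjacent to $u_1$ in $G$ iff $x_0$ is $(\gamma_0,z)$-adjacent to $u_1$, which is exactly what the $+$ in the first coordinate of $\tp^{x_0}_{\gamma_0,\gamma_1}(z)$ records — we conclude $u_0u_1\in E(G)$. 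The case $z\in Q(x_0)$ (i.e. $z=x_0$ or $z=x_0^\uparrow$) is degenerate and handled directly: then $u_0$ sits in or just above the factor at $z$ and adjacency is read off the local $\eta$-data, but under our running assumptions \eqref{p:lca-not-discovered} rules this out since $x_0\in\LL(x_0)$ and $x_0^\uparrow=z\in Z$. The negative subcase is symmetric, yielding $u_0u_1\notin E(G)$.

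\emph{Case 2} is literally Case~1 with $x_0,\gamma_0,A_i$ replaced by $x_1,\gamma_1,B_i$ and the second coordinate of types playing the role of the first — here "$u_1$ adjacent to $u_0$" is read off the second coordinate of $\tp^{x_1}_{\gamma_0,\gamma_1}(z)$ — so I would just say "symmetrically." \emph{Case 3: both $A_i$ and $B_i$ are biased.} First, $A_i$ and $B_i$ must be biased "in the same direction": if $A_i$ were first-biased and $B_i$ second-biased, then on the common portion $R$ both first and second coordinates of types range over $\{+,-,\pm,\void\}$ subject to the constraint that first-biased forces second coordinate always $\void$ and second-biased forces first coordinate always $\void$; combined with \eqref{eq:sameR} this forces every type in $R$ to be $\void\void$, contradicting \cref{cl:R-non-void}. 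So both are first-biased or both are second-biased; say both first-biased. Then for the $\gamma_0$-vertex $u_0$ we have $z\preceq_Y\varpi(u_0)$ trivially ($z$ is an ancestor of $x_0=\varpi(u_0)$), and likewise $z\preceq_Y\varpi(u_1)$; the point is to produce the \emph{unique grandchild} $y'=h_{\gamma_0}(z)$. The two grandchildren of $z$ that matter are the one on the $x_0$-side and the one on the $x_1$-side; I claim every $\gamma_0$-vertex $v$ with $z\preceq_Y\varpi(v)$ in fact has $\varpi(v)$ below \emph{one fixed} grandchild, namely the $x_0$-side one. Suppose not: then there is a $\gamma_0$-vertex $v$ with $z\preceq_Y\varpi(v)$ but $\varpi(v)$ not below the $x_0$-side grandchild, i.e. $x_0\wedge_Y\varpi(v) = z$ or $x_0\wedge_Y\varpi(v)$ equals that grandchild only if $v$ is on the $x_0$ side. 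The existence of such a $v$ witnesses, via the type definition, that the first coordinate of $\tp^{x_0}_{\gamma_0,\gamma_1}(z)$ can be "split", but more to the point it creates an element of $S(x_0)$ — the $\preceq_Y$-minimal node of a block with a $\pm$ or sign — lying in $Z$, or forces $g_{\gamma_0}(x_0)\in Z$, contradicting \eqref{p:lca-not-discovered}. Hence $h_{\gamma_0}(z)\neq\bot$. The both-second-biased subcase is symmetric with $\gamma_1$.

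The main obstacle I expect is Case~3: cleanly extracting "the unique grandchild" from the biased-block structure and the assumption \eqref{p:lca-not-discovered} requires carefully matching the combinatorial notion $h_\gamma(z)$ against what $S(x_0)$, $Q(x_0)$, and $g_\gamma(x_0)$ actually certify about vertices landing below the two relevant grandchildren of $z$; the bookkeeping of which side ($x_0$ vs $x_1$) a problematic $\gamma_0$-vertex can lie on, and why that would have forced an $\LL$-element into $Z$, is the delicate part. The fully-mixed claim and Cases~1--2 should be routine once the interpretation of $(\gamma,y)$-adjacency in terms of edges of $G$ (stated after \cref{lem:class}) is invoked together with \cref{lem:z}.
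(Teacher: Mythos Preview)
Your argument for the fully-mixed claim has a genuine gap. You assume $A_i$ is fully mixed, case-split on whether $z\in A_i$ or $z\in B_i$, and in the second case write ``symmetric handling via $B_i$ being a single node gives the same contradiction.'' But you have not established that $B_i$ is a single node: you only assumed $A_i$ is fully mixed. The fallback you sketch (``$R\subseteq A_i=\{t\}$ forces $z$ close to $t$, landing in $Z\cap\LL$'') is vague and, if carried out, only yields $\LL(x_0)\cap Z\neq\emptyset$; you would still owe an argument that $\LL(x_1)\cap Z\neq\emptyset$ to contradict~\eqref{p:lca-not-discovered}.

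The paper's proof is a two-line observation that fills exactly this gap. A fully mixed block is a single node whose type has no $\void$ symbol; but $A_i$ and $B_i$ share the same top node $t=\top(A_i)=\top(B_i)$, and since $t\in R$ we have $\tp^{x_0}_{\gamma_0,\gamma_1}(t)=\tp^{x_1}_{\gamma_0,\gamma_1}(t)$ by~\eqref{eq:sameR}. Whether the greedy block starting at $t$ is fully mixed depends only on this type, so $A_i$ is fully mixed if and only if $B_i$ is, and then $A_i=B_i=\{t\}$. Now \cref{lem:z}\eqref{enum:covered} gives $z\in A_i\cup B_i=\{t\}$, i.e.\ $z=t$, contradicting the already-established fact that $t\notin Z$. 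The missing idea in your write-up is precisely this symmetry: the type at the shared top node $t$ forces both blocks to be fully mixed simultaneously, collapsing your case split.
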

\begin{claimproof}
  Recall that a fully mixed block consists of one node whose type does
  not feature symbol~$\void$, but features either $\pm$ or both $+$
  and $-$.  Therefore, if any of $A_i$ or $B_i$ was fully mixed, then
  both of them would be, implying that $A_i=B_i=\{t\}$.  This stands
  in contradiction with \cref{lem:z}. 
\end{claimproof}

Next, we treat the case when  $A_i$ or $B_i$ is not biased.

\begin{zclaim}
  Suppose $A_i$ is not biased. Then exactly one of the following
  holds: $A_i$ is positive and $u_0u_1\in E(G)$, or $A_i$ is negative
  and $u_0u_1\notin E(G)$.  Symmetrically, supposing $B_i$ is not
  biased, exactly one of the following holds: $B_i$ is positive and
  $u_0u_1\in E(G)$, or $B_i$ is negative and $u_0u_1\notin E(G)$.
\end{zclaim}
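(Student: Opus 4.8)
The plan is to read off whether $u_0u_1\in E(G)$ directly from the kind of the block $A_i$, using the characterization of adjacency in terms of types recorded right after the definition of $(\gamma,y)$-adjacency. First I would clear the bookkeeping: by the preceding claim $A_i$ is not fully mixed, and an all-$\void\void$ interval is both first- and second-biased by definition, so if $A_i$ is not biased then it contains a non-$\void\void$ type and is \emph{exactly one} of positive or negative. Hence it suffices to prove that $A_i$ positive implies $u_0u_1\in E(G)$ and $A_i$ negative implies $u_0u_1\notin E(G)$; the statement for $B_i$ then follows by exchanging the roles of $(x_0,u_0,\gamma_0)$ and $(x_1,u_1,\gamma_1)$.

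For the main case, recall from \cref{lem:z} that $z\in A_i\cup B_i$, and suppose $z\in A_i$; then in particular $z\in P(x_0)$. Since $u_1$ is a $\gamma_1$-vertex with $x_0\wedge_Y\varpi(u_1)=z$, the first coordinate of $\tp^{x_0}_{\gamma_0,\gamma_1}(z)$ is not $\void$; as $z\in A_i$ and $A_i$ is positive (resp.\ negative), that coordinate is forced to be $+$ (resp.\ $-$), i.e.\ $x_0$ is $(\gamma_0,z)$-adjacent (resp.\ $(\gamma_0,z)$-non-adjacent) to $u_1$. The observation following the definition of $(\gamma,y)$-adjacency — applied with $u_0$ a $\gamma_0$-vertex and $x_0\wedge_Y\varpi(u_1)=z\in P(x_0)$, using \cref{lem:class} to see that the relevant recoloring is independent of the chosen element of $\gamma_0$ — then yields $u_0u_1\in E(G)$ (resp.\ $u_0u_1\notin E(G)$), as desired.

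The remaining case is $z\in B_i\setminus A_i$, which in particular includes $z\in Q(x_0)$, where $\tp^{x_0}_{\gamma_0,\gamma_1}(z)$ is undefined. Here I would first observe, using \cref{cl:sync_type} and the maximality built into the definition of the blocks, that $Z$ meets $\LL(x_0)$: either $z\in Q(x_0)\subseteq\LL(x_0)$ directly, or $A_i$ must end exactly at the grandparent of $z$ and its successor block starts at $z^\uparrow\in Z\cap\LL(x_0)$. Assumption~\eqref{p:lca-not-discovered} then forces $\LL(x_1)\cap Z=\emptyset$, so in particular $z\notin\LL(x_1)$. Since in this case $z\in B_i\subseteq P(x_1)$, one has $u_0u_1\in E(G)$ iff $x_1$ is $(\gamma_1,z)$-adjacent to $u_0$, so it remains to pin down the second coordinate of $\tp^{x_1}_{\gamma_0,\gamma_1}(z)$, which is non-$\void$ because $u_0$ is a $\gamma_0$-vertex with $x_1\wedge_Y\varpi(u_0)=z$. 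I would argue this coordinate is again determined by the kind of $A_i$: the block $B_i$ cannot be first-biased (else that coordinate would be $\void$); a node $r\in R\subseteq A_i\cap B_i$ of non-$\void\void$ type (\cref{cl:R-non-void}) has the same type seen from $x_0$ or from $x_1$ (\cref{cl:sync_type}), which together with $A_i$ being positive rules out $B_i$ being negative; and if the said coordinate of $\tp^{x_1}_{\gamma_0,\gamma_1}(z)$ were the ``wrong'' sign ($-$ or $\pm$), then minimality in the definition of $S(x_1)$ would place $z$ or $z^\uparrow$ into $S(x_1)\cap Z$ (the only candidate strictly above $z^\uparrow$ lies in $A_i$, which is positive and hence has no $-$ or $\pm$), contradicting $\LL(x_1)\cap Z=\emptyset$. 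With the sign pinned to $+$, the symmetric form of the adjacency observation (roles of $x_0$ and $x_1$ exchanged) gives $u_0u_1\in E(G)$; the case $A_i$ negative is symmetric.

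I expect this last case to be the main obstacle: essentially all the subtlety is concentrated there. One has to juggle the possible kinds of $B_i$ — which need not be positive or negative even when $A_i$ is — against the synchronization of types near $z$ from \cref{cl:sync_type}, the membership bookkeeping for $Z$, $\LL(x_0)$, $\LL(x_1)$ and assumption~\eqref{p:lca-not-discovered}, and the precise set of nodes recorded in $S(\cdot)$. Making the minimality argument cover every possible position of $z$ inside $B_i$, and correctly handling the degenerate situations where $z\in Q(x_0)$ or $z$ sits near the root of $Y$, is where the care is needed; everything else is routine unwinding of the definitions of types, blocks, and the generated graph.
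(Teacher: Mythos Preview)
Your argument is essentially the same as the paper's: case-split on whether $z\in A_i$ (then the first coordinate of the type at $z$ determines adjacency directly) or $z\in B_i\setminus A_i$ (then show $\LL(x_0)\cap Z\neq\emptyset$, use~\eqref{p:lca-not-discovered} to get $\LL(x_1)\cap Z=\emptyset$, and derive a contradiction from the minimality clause in $S(x_1)$ if the second coordinate of the type at $z$ has the wrong sign). Two small imprecisions are worth tightening: in the case $z\notin A_i$ with $z\in P(x_0)$, the block $A_i$ need not end \emph{exactly} at $z^{\uparrow\uparrow}$ --- it could also end at $z^\uparrow$, so $\top(A_{i+1})\in\{z^\uparrow,z\}\subseteq Z$ in either case (this follows from $R\subseteq A_i$ and the block-synchronization used in the proof of \cref{lem:z}); and the detours ruling out $B_i$ being first-biased or negative are unnecessary --- the paper simply observes that the types on $R\subseteq B_i$ feature only $\void$ and $+$ (by~\eqref{eq:sameR}, \eqref{eq:Rcontained}, and positivity of $A_i$), so the $\preceq_Y$-minimal element of $B_i$ whose type contains $-$ or $\pm$ must be $z$ or $z^\uparrow$.
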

\begin{claimproof}
  We prove the first assertion; the reasoning proving the second one
  is symmetric.
 
  By the previous claim and the assumption, $A_i$ is neither fully
  mixed, nor first-biased, nor second-biased.  Therefore, $A_i$ is
  either positive or negative. Note that by \cref{cl:R-non-void}
  and~\eqref{eq:Rcontained}, $A_i$ cannot be both positive and
  negative at the same time.  It remains to prove that if $A_i$ is
  positive, then $u_0u_1\in E(G)$; the proof that $A_i$ being negative
  entails $u_0u_1\notin E(G)$ is symmetric.
 
  Note that if we have $z\in A_i$, then $A_i$ being positive
  immediately implies that $u_0u_1\in E(G)$.  Therefore, suppose that
  $z\notin A_i$, which implies that
$z\preceq \top(A_{i+1})$ and as
    $\top(B_{i+1}) \neq \top(A_{i+1})$ (by definition of $i$),
    $z\in \LL(x_0)$ and thus $\LL(x_0)\cap Z\neq \emptyset$.  By
  \cref{lem:z}, 
  we have $z\in B_i$.  Suppose for contradiction that
  $u_0u_1\notin E(G)$.  Then the second coordinate of
  $\tp^{x_1}_{\gamma_0,\gamma_1}(z)$ has to be either $-$ or $\pm$.
  However, since~$A_i$ is positive, from~\eqref{eq:sameR}
  and~\eqref{eq:Rcontained} we infer that types
  $\tp^{x_1}_{\gamma_0,\gamma_1}(r)$ for $r\in R$ feature only symbols
  $\void$ and $+$. Therefore, the $\preceq_Y$-minimal element of $B_i$
  that contains symbol $-$ or $\pm$ is either $z$ or its parent,
  implying that $\LL(x_1)\cap Z\neq \emptyset$. Together with
  $\LL(x_0)\cap Z\neq \emptyset$, this contradicts
  assumption~\eqref{p:lca-not-discovered}.
\end{claimproof}

We are left with the case when both $A_i$ and $B_i$ are biased. First,
we observe that they need to be biased in the same direction.

\begin{zclaim}
  If both $A_i$ and $B_i$ are biased, then exactly one of the
  following holds: both $A_i$ and $B_i$ are first-biased, or both
  $A_i$ and $B_i$ are second-biased.
\end{zclaim}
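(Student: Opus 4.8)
The plan is to extract the dichotomy from a single well-chosen witness node. By \cref{cl:R-non-void} together with \eqref{eq:Rcontained} there is a node $r\in A_i\cap B_i$ with $\tp^{x_0}_{\gamma_0,\gamma_1}(r)=\tp^{x_1}_{\gamma_0,\gamma_1}(r)\neq\void\void$; write this common type as $s_0s_1$, so that $(s_0,s_1)\neq(\void,\void)$. (Here I use \eqref{eq:sameR} to identify the type of $r$ seen from $x_0$ with the one seen from $x_1$.) The key point is that $r$ lies in both blocks and hence simultaneously constrains the kinds of $A_i$ and of $B_i$.

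First I would record a preliminary observation: a block that is at the same time first-biased and second-biased has all of its types equal to $\void\void$, simply because the intersection of the two admissible type sets is $\{\void\void\}$. Since $r\in A_i$ has a type different from $\void\void$, the block $A_i$ cannot be of both kinds; as $A_i$ is biased by hypothesis, it is exactly one of first-biased or second-biased, and likewise for $B_i$. This already shows that the two alternatives in the statement are mutually exclusive, so it remains only to prove that at least one of them holds.

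Suppose not; then, up to exchanging the roles of $x_0$ and $x_1$, we may assume $A_i$ is first-biased and $B_i$ is second-biased. Because $r\in A_i$ and $A_i$ is first-biased, the second coordinate of $\tp^{x_0}_{\gamma_0,\gamma_1}(r)$ equals $\void$, i.e.\ $s_1=\void$; because $r\in B_i$ and $B_i$ is second-biased, the first coordinate of $\tp^{x_1}_{\gamma_0,\gamma_1}(r)$ equals $\void$, i.e.\ $s_0=\void$. Hence $s_0s_1=\void\void$, contradicting the choice of $r$, and the claim follows. I do not expect a real obstacle here: the only thing needing care is that \cref{cl:R-non-void} and the inclusion $R\subseteq A_i\cap B_i$ are actually available, which holds because we are still working under assumption~\eqref{p:lca-not-discovered} and $R$ is non-empty (it contains $t$); everything else is bookkeeping with the five type symbols.
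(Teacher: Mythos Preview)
Your proof is correct and follows exactly the route the paper intends: the paper's own proof is the one-line ``Follows directly from \cref{cl:R-non-void} together with~\eqref{eq:Rcontained}'', and you have simply unpacked that sentence by exhibiting the witness $r\in A_i\cap B_i$ with non-$\void\void$ type and reading off the coordinate constraints imposed by first- and second-biasedness. No substantive difference.
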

\begin{claimproof}
  Follows directly from \cref{cl:R-non-void} together
  with~\eqref{eq:Rcontained}.
\end{claimproof}

We now show how to locate $z$ in this case.

\begin{zclaim}
  Suppose $A_i$ and $B_i$ are both first-biased.  Then $z\in A_i$ and
  $z=g_{\gamma_0}(x_1)$; in particular $z\in P(x_0)$.  Moreover, there
  exists a grandchild $z_0$ of $z$ such that for every
  $\gamma_0$-vertex $v$ satisfying $z\preceq_Y \varpi(v)$, we in fact
  have $z_0\preceq_Y\varpi(v)$.  Also, there exist $\gamma_0$-vertices
  satisfying this condition.
 
 In other words, $h_{\gamma_0}(z)=z_0$.
\end{zclaim}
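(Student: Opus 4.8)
The plan is to combine assumption~\eqref{p:lca-not-discovered} with the first-biasedness of $A_i$ and $B_i$, \cref{lem:z}, and~\eqref{eq:Rcontained}, in four moves: locate $z$ relative to the blocks, strengthen~\eqref{p:lca-not-discovered} to $\LL(x_0)\cap Z=\emptyset$, identify $g_{\gamma_0}(x_1)$, and finally push the argument one level below $z$ to obtain the grandchild $z_0$. First I would determine the position of $z$. By \cref{lem:z} we have $z\in A_i\cup B_i$. If $z\in B_i$, then $z\in P(x_1)$ and, as $B_i$ is first-biased, the second coordinate of $\tp^{x_1}_{\gamma_0,\gamma_1}(z)$ is $\void$, i.e.\ no $\gamma_0$-vertex $v$ satisfies $x_1\wedge_Y\varpi(v)=z$; but $u_0$ is a $\gamma_0$-vertex with $x_1\wedge_Y\varpi(u_0)=x_1\wedge_Y x_0=z$, a contradiction. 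Hence $z\in A_i$ and $z\notin B_i$; in particular $z\in P(x_0)$, and since $A_i$ is first-biased the second coordinate of $\tp^{x_0}_{\gamma_0,\gamma_1}(z)$ is $\void$, so no $\gamma_0$-vertex $v$ satisfies $x_0\wedge_Y\varpi(v)=z$. I will refer to this fact as $(\dagger)$.

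Next I would strengthen~\eqref{p:lca-not-discovered}. Since $t\notin Z$ while $z,z^\uparrow\in Z$ lie on the path from $t$ to $z$, we have $R=\{r:t\preceq_Y r\preceq_Y z^{\uparrow\uparrow}\}$, so $z^{\uparrow\uparrow}\in R\subseteq B_i$ by~\eqref{eq:Rcontained}. Together with $z\notin B_i$, this means block $B_i$ ends at $z^{\uparrow\uparrow}$ or at $z^\uparrow$, so the next block exists and $\top(B_{i+1})\in\{z^\uparrow,z\}\subseteq Z$; as $\top(B_{i+1})\in S(x_1)\subseteq\LL(x_1)$, we get $\LL(x_1)\cap Z\neq\emptyset$ (in the alternative case $z\in Q(x_1)$ one has directly $z\in\LL(x_1)\cap Z$). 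By~\eqref{p:lca-not-discovered} this forces $\LL(x_0)\cap Z=\emptyset$. To pin down $g_{\gamma_0}(x_1)$: since $u_0$ is a $\gamma_0$-vertex with $z\preceq_Y x_0=\varpi(u_0)$ and $z\preceq_Y x_1$, we have $z\preceq_Y g_{\gamma_0}(x_1)\preceq_Y x_1$; were this strict, a $\gamma_0$-vertex $w$ realizing $g_{\gamma_0}(x_1)$ would have $\varpi(w)$ below the child of $z$ toward $x_1$, giving $x_0\wedge_Y\varpi(w)=z$ and contradicting $(\dagger)$. Hence $z=g_{\gamma_0}(x_1)$.

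The hard part is producing $z_0$. Let $c_0$ be the child of $z$ toward $x_0$. From $\LL(x_0)\cap Z=\emptyset$ and $c_0\in Z$ we get $c_0\notin Q(x_0)$ (so $c_0\prec_Y x_0^\uparrow$, whence $c_0\in P(x_0)$) and $c_0\notin S(x_0)$ (so $c_0$ is not the top of its block). As $c_0$ is the successor of $z\in A_i$ in the sequence $\Types(x_0)$ and is not a block-top, $c_0\in A_i$, so first-biasedness of $A_i$ yields that no $\gamma_0$-vertex $v$ satisfies $x_0\wedge_Y\varpi(v)=c_0$; call this $(\ddagger)$. Now let $z_0$ be the child of $c_0$ toward $x_0$ (it exists since $c_0\prec_Y x_0$); it is a grandchild of $z$. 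For any $\gamma_0$-vertex $v$ with $z\preceq_Y\varpi(v)$: if $\varpi(v)$ is not a descendant of $c_0$, then $x_0\wedge_Y\varpi(v)=z$, contradicting $(\dagger)$, so $c_0\preceq_Y\varpi(v)$; and if moreover $\varpi(v)$ is not a descendant of $z_0$, then $x_0\wedge_Y\varpi(v)=c_0$, contradicting $(\ddagger)$; thus $z_0\preceq_Y\varpi(v)$. Since $u_0$ is a $\gamma_0$-vertex with $z\preceq_Y\varpi(u_0)$, such vertices exist and $z_0$ is the unique grandchild of $z$ with this property, i.e.\ $h_{\gamma_0}(z)=z_0$.

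I expect the main obstacle to be this last step, and specifically the need for $\LL(x_0)\cap Z=\emptyset$ from the second step. Without it, one cannot exclude the degenerate configurations where $c_0\in\{x_0,x_0^\uparrow\}$ or where $c_0$ is the top of its block; in those cases a $\gamma_0$-vertex $v$ with $x_0\wedge_Y\varpi(v)=c_0$ could exist, and then no single grandchild of $z$ would lie below all $\gamma_0$-vertices that lie below $z$, so $h_{\gamma_0}(z)$ could be $\bot$. The crux is therefore the interplay between~\eqref{p:lca-not-discovered}, the fact that $z$ lies in the first-biased block $A_i$ but not in $B_i$, and the definition of $\LL$, which together localize the relevant adjacency structure to exactly one level below $z$.
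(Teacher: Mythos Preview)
Your proof is correct and follows essentially the same route as the paper: use \cref{lem:z} and first-biasedness to place $z$ in $A_i\setminus B_i$, derive $\LL(x_1)\cap Z\neq\emptyset$ so that~\eqref{p:lca-not-discovered} forces $\LL(x_0)\cap Z=\emptyset$, then push the child $c_0$ of $z$ toward $x_0$ into $A_i$ and use first-biasedness again at levels $z$ and $c_0$ to obtain the grandchild $z_0$. If anything, your version is more careful: you explicitly prove $z=g_{\gamma_0}(x_1)$ (which the paper leaves implicit) and you separate the cases $z\in P(x_1)$ versus $z\in Q(x_1)$ when establishing $\LL(x_1)\cap Z\neq\emptyset$, whereas the paper's one-line ``from \cref{lem:z}(2) we infer $z\in Q(x_1)$'' is a bit terse.
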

\begin{claimproof}
  Since $B_i$ is first-biased, from~\cref{lem:z}(2) we infer that
  $z\in Q(x_1)$. It implies that $\LL(x_1)\cap Z \neq \emptyset$ and
  that $z\notin P(x_1)$ thus $z\notin B_i$. By \cref{lem:z}(3),
  $z\in A_i$.
  As $z\notin B_i$ and $R\subseteq B_i$, we have
  $\LL(x_1)\cap Z\neq \emptyset$.  Therefore, from
  assumption~\eqref{p:lca-not-discovered} we conclude that
  $\LL(x_0)\cap Z=\emptyset$.
 
  As $z\in A_i$, we in particular have $z\in P(x_0)$, hence $z$ is
  neither $x_0$ nor the parent of $x_0$.  Let then $z_0$ be the
  grandchild of $z$ such that $z_0\preceq_Y x_0$.  Further, let $z'_0$
  be the parent of $z_0$.  Note that $z'_0\in Z$. Since
  $\LL(x_0)\cap Z=\emptyset$, we must have $z'_0\in A_i$.
 
  Since $A_i$ is first-biased and $z,z'_0\in A_i$, the second
  coordinates of $\tp^{x_0}_{\gamma_0,\gamma_1}(z)$ and of
  $\tp^{x_0}_{\gamma_0,\gamma_1}(z_0')$ are both~$\void$.  Therefore,
  there are no $\gamma_0$-vertices $v$ satisfying
  $x_0\wedge_Y \varpi(v)=z$ or $x_0\wedge_Y \varpi(v)=z_0'$, which
  means that for every $\gamma_0$-vertex $v$ satisfying
  $z\preceq_Y \varpi(v)$, we in fact have $z_0\preceq_Y\varpi(v)$.
  That there exist $\gamma_0$-vertices satisfying this condition is
  witnessed by~$u_0$.
\end{claimproof}

A symmetric reasoning yields the following.

\begin{zclaim}
  Suppose $A_i$ and $B_i$ are both second-biased. Then $z\in B_i$ and
  $z=g_{\gamma_1}(x_0)$; in particular $z\in P(x_1)$.  Moreover, there
  exists a grandchild $z_1$ of $z$ such that for every
  $\gamma_1$-vertex $v$ satisfying $z\preceq_Y \varpi(v)$, we in fact
  have $z_1\preceq_Y\varpi(v)$.  Also, there exist $\gamma_1$-vertices
  satisfying this condition.
 
 In other words, $h_{\gamma_1}(z)=z_1$.
\end{zclaim}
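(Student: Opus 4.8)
The plan is to replay verbatim the argument just given for the first-biased case, with the two sides of the picture swapped: exchange $\gamma_0\leftrightarrow\gamma_1$, $x_0\leftrightarrow x_1$, $A_i\leftrightarrow B_i$, and ``first'' $\leftrightarrow$ ``second'' everywhere. There is no new idea; the only thing needing attention is bookkeeping about which object sits on which side after the swap.

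First I would pin down $z$. Since $A_i$ is second-biased, every type in $A_i$ (seen from $x_0$) has $\void$ in its first coordinate, so \cref{lem:z}\,(\ref{enum:tpz1}) forces $z\notin A_i$ --- otherwise $z$ would belong to the disjoint sets $P(x_0)$ and $Q(x_0)$. With \cref{lem:z}\,(\ref{enum:covered}) this gives $z\in B_i$, hence $z\in P(x_1)$, while among the ancestors of $x_0$ we have $z\in Q(x_0)\subseteq\LL(x_0)$; as $z\in Z$, this means $\LL(x_0)\cap Z\neq\emptyset$, and assumption~\eqref{p:lca-not-discovered} then yields $\LL(x_1)\cap Z=\emptyset$. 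Now $z\in P(x_1)$ means $z$ is neither $x_1$ nor its parent, so there is a grandchild $z_1\preceq_Y x_1$ of $z$ on the path to $x_1$; let $z_1'$ be its parent, a child of $z$. Then $z_1'\in Z$, so $z_1'\notin\LL(x_1)=Q(x_1)\cup S(x_1)$; in particular $z_1'\notin Q(x_1)$, so $z_1'\in P(x_1)$, where it is the immediate successor of $z$. Since $z\in B_i$ and $z_1'\notin S(x_1)$ (so $z_1'\neq\top(B_{i+1})$), we get $z_1'\in B_i$. Because $B_i$ is second-biased and $z,z_1'\in B_i$, the first coordinates of $\tp^{x_1}_{\gamma_0,\gamma_1}(z)$ and of $\tp^{x_1}_{\gamma_0,\gamma_1}(z_1')$ are both $\void$, i.e.\ no $\gamma_1$-vertex $v$ satisfies $x_1\wedge_Y\varpi(v)\in\{z,z_1'\}$; consequently every $\gamma_1$-vertex $v$ with $z\preceq_Y\varpi(v)$ in fact has $z_1\preceq_Y\varpi(v)$, and $u_1$ witnesses that such vertices exist. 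This is precisely $h_{\gamma_1}(z)=z_1$.

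It remains to identify $z=g_{\gamma_1}(x_0)$. The node $z$ is an ancestor of $x_0$, and $u_1$ is a $\gamma_1$-vertex with $z\preceq_Y\varpi(u_1)=x_1$, so $z$ is a legal candidate. Every strictly larger ancestor of $x_0$ is a descendant of the child $c_0$ of $z$ lying on the path to $x_0$, and $c_0\neq z_1'$ because $z=x_0\wedge_Y x_1$. If some $\gamma_1$-vertex $w$ satisfied $c_0\preceq_Y\varpi(w)$, then $z\preceq_Y\varpi(w)$, hence $z_1\preceq_Y\varpi(w)$ by the previous paragraph; but $z_1$ and $c_0$ lie on the two distinct branches of $z$ (towards $x_1$ and towards $x_0$) and are therefore incomparable, contradicting that both are ancestors of $\varpi(w)$. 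Hence no $\gamma_1$-vertex sits below $c_0$, so $z$ is the $\preceq_Y$-maximal ancestor of $x_0$ having a $\gamma_1$-vertex below it, i.e.\ $z=g_{\gamma_1}(x_0)$, and $z\in P(x_1)$ as already noted. The only mildly delicate point in the whole argument is this last one --- checking that $c_0$ and $z_1'$ are distinct children of $z$ --- which is immediate from $z$ being the least common ancestor of $x_0$ and $x_1$.
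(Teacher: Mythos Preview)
Your proposal is correct and follows exactly the approach the paper takes: the paper's own proof of this claim is the single sentence ``A symmetric reasoning yields the following,'' and you have carried out that symmetric swap of the first-biased argument faithfully. One small remark: your assertion $z\in Q(x_0)$ is not fully derived from the preceding steps (you only showed $z\notin A_i$, not $z\notin P(x_0)$), but this mirrors the identical shortcut the paper takes when asserting $z\in Q(x_1)$ in the first-biased case, and in both places the only fact actually used downstream---namely $\LL(x_0)\cap Z\neq\emptyset$ here---holds regardless, since if $z\in P(x_0)\setminus A_i$ then $\top(A_{i+1})\in Z\cap S(x_0)$.
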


The presented claims verify all the assertions from the lemma statement.
\end{proofEnd}

\paragraph*{Recovering edges: logical implementation.}
We now define a structure $H_{\Tf}$ which encodes all the relevant
information about the $k$-NLC-tree $\Tf$ and its factorization $\Pp$.
Intuitively, $H_{\Tf}$ encodes $\Tf$ in the natural way, plus in
addition we enrich it with pointers encoding sets $\LL(x)$ and
functions $g_\gamma(x),h_\gamma(x)$.

Formally, the universe of $H_{\Tf}$ is just $V(T)$; note that the set
$U$ will {\em{not}} be directly encoded.  In $H_{\Tf}$ we will use
only unary predicates and unary (partial) functions. Of course, the
latter can be replaced by suitable functional binary relations in
order to make the signature purely relational.  In the following,
whenever we encode some node $y$ that belongs to the quotient tree
$Y$, we represent it using $\top(y)$.  For instance, the parent
function in $Y$ is represented as a partial function on the nodes of
$T$ that maps $\top(x)$ to $\top(x')$ whenever $x'$ is the parent of
$x$ in $Y$.

\newcommand{\wL}{\widehat{\LL}}
\newcommand{\wg}{\widehat{g}}

For $x\in V(Y)$, let $\wL(x)\subseteq V(Y)$ be the set containing
every ancestor of $x$ that:
\begin{itemize}[nosep]
 \item belongs to $\LL(x)$,
 \item is the parent of a node of $\LL(x)$,
 \item is the child of a node of $\LL(x)$ on $P(x)$, or
 \item is the grandchild of a node of $\LL(x)$ on $P(x)$.
\end{itemize}
Recalling that $|\LL(x)|\leq 209h$, we have $|\wL(x)|\leq 836h$.  Also,
for $x\in V(Y)$ and $\gamma\in \Gamma$, we let $\wg_\gamma(x)$ be the
child of $g_\gamma(x)$ that is an ancestor of $x$.  In case
$g_\gamma(x)=x$, we set $\wg_\gamma(x)=\bot$.

In the following encoding, all values featuring $\bot$ are removed
from the domains of corresponding mappings.  Then, in $H_{\Tf}$ we
encode:
\begin{itemize}[nosep]
\item the parent function of the tree $T$;
\item the parent function of the tree $Y$;
\item the mapping $a\mapsto \rho(e(a))$, where $a$ is a node of $T$
  and $e(a)$ is the edge of $T$ connecting $a$ with its parent;
\item the mapping $x\mapsto \varrho(e(x))$, where $x$ is a node of $Y$
  and $e(x)$ is the edge of $Y$ connecting $x$ with its parent;
\item the mappings $a\mapsto \top(x(a))$ and
  $a\mapsto \rho(\pth_T(a,\top(x(a))))$, where $a$ is a node of $T$
  and $x(a)$ is the node of $Y$ such that $a\in x(a)$;
 \item for each $\gamma\in \Gamma$, the mappings $x\mapsto g_\gamma(x)$, $x\mapsto \wg_\gamma(x)$, and $x\mapsto h_\gamma(x)$;
 \item for each $\gamma\in \Gamma$, the mapping
   $x\mapsto \varrho(\pth_Y(x,\wg_{\gamma}(x)))$;
 \item the mapping $x\mapsto \wL(x)$, together with relevant data about the elements of $\wL(x)$; and
 \item for every node $x$ of $Y$ and $y\in \wL(x)$, the value
   $\varrho(\pth_Y(x,y))$.
\end{itemize}
Here, the last two points require more explanation.  Recall that
$|\wL(x)|\leq 836h$ for each $x\in V(Y)$. Therefore, to encode the
mapping $x\mapsto \wL(x)$ we use $836h$ distinct unary functions,
where the $i$th function maps a node $x\in V(Y)$ to the $i$th element
of $\wL(x)$, sorted by the ancestor order.  The relevant data about a
node $y\in \wL(x)$ includes whether $y$ is the $\preceq_Y$-minimal
node of some block of $\Blocks(x)$ and if so, what kind of block it is
(positive or negative, first-biased or second-biased, etc.). This
information can be encoded using unary predicates at $x$.  Similarly,
to encode the values $\varrho(\pth_Y(x,y))$ for $y\in \wL(x)$, we use
$836h$ distinct unary predicates at $x$, where the $i$th predicate
encodes $\varrho(\pth_Y(x,y))$ where $y$ is the $i$th element of
$\wL(x)$.

We later use some properties of $H_\Tf$ that follow from the
synchronization property expressed by \cref{cl:sync_type}. For this,
for a node $a$ of $T$, we define $N^\uparrow(a)$ to be the set of all
nodes $b$ of $T$ such that $b\prec_T a$ and there is a function $f$ in
$H_\Tf$ such that $b=f(a)$ or $a=f(b)$. Then we have the following.

\begin{Alemma}	
\label{cl:SReach-splendid}
 For each $a\in V(T)$, 
 $$\Bigl|\{b\in V(T)\colon b\prec_T a\}\cap \bigcup_{a' \succeq_T a} N^\uparrow(a')\Bigr|\leq 836h+2k+4.$$	
\end{Alemma}
\begin{proofEnd}
  Let $x\in V(Y)$ be such that $a\in x$.  From \cref{cl:sync_type} and
  the construction of the blocks it follows that for all $x',x''\in V(Y)$ such that $\top(x'),\top(x'')\succeq_T a$, we have
  $$\LL(x')\cap P(x)= \LL(x'')\cap P(x).$$ Thus,
  $$\wL(x')\cap P(x^\uparrow)= \wL(x'')\cap P(x^\uparrow)$$ for all such $x',x''$.
  Let $M_0$ be this common subset of~$P(x^\uparrow)$; note that $\wL(x')\cap P(x)\subseteq M_0\cup\{x^{\uparrow\uparrow}\}$. Let $M_0'=\{\top(z):z\in M_0\}\cup\{\top(x^{\uparrow\uparrow})\}$; then $|M_0'|\leq 836h+1$.
  
 Similarly, for all $x',x''$ as above, we have
 $$\{g_\gamma(x'),\wg_\gamma(x')\colon \gamma\in \Gamma\}\cap P(x)=\{g_\gamma(x''),\wg_\gamma(x'')\colon \gamma\in \Gamma\}\cap P(x),$$
 so let $M_1$ be this common subset of $P(x)$ and let $M_1'=\{\top(z): z\in M_1\}$. 
 Note that
 $|M_1'|\leq 2|\Gamma|\leq 2k$.  It can now be easily seen from the
 construction of $H_\Tf$ that for each $a'\succeq_T a$, we have
 $$\{b\in V(T)\colon b\preceq_T a\}\cap N^\uparrow(a')\subseteq M_0'\cup M_1'\cup \{a^\uparrow,\top(x),\top(x^\uparrow)\}.$$
 Since the set on the right hand side has size at most $836h+2k+4$, the claim follows.
\end{proofEnd}

Our next goal is to implement the combinatorial analysis described in
the previous section using first-order formulas working over $H_\Tf$.
Before we do this, let us see how the information about elements of
$U$ can be recovered from $H_\Tf$. Suppose $u\in U$ is a vertex for
which we know that $\pi(u)=a$ and $\chi(u)=c$. Then $\varpi(u)$ can be
easily inferred as $\top(x(a))$. Similarly, the color
$\kappa_{\Tf}(u,\varpi(u))$ can be obtained by applying
$\rho(\pth_T(a,\top(x(a))))$ to $c$. This in particular gives the
value of $\gamma(u)$. Finally, whenever for some ancestor~$y$ of
$x=\varpi(u)$, the value of $\varrho(\pth_Y(x,y))$ is stored in
$H_\Tf$, then the color $\kappa_{\Tf}(u,\top(y))$ can be obtained by
applying $\varrho(\pth_Y(x,y))$ to $\kappa_{\Tf}(u,\varpi(u))$. This
may happen when $y=\wg_\gamma(x)$ for some $\gamma\in \Gamma$, or when
$y\in \wL(x)$.

\bigskip

We are now ready to provide the promised implementation.

\begin{Alemma}
\label{cl:decode-splendid}
Fix $c_0,c_1\in [k]$. Then there are
formulas
$$\varphi_{c_0,c_1}(p_0,p_1),\quad
\psi_{c_0,c_1}(p_0,p_1),\quad\textrm{and}\quad
\{\,\zeta_{c_0,c_1,d_0,d_1}(p_0,p_1,q,q_0,q_1)\ \colon\ d_0,d_1\in
[k]\,\}$$ in the vocabulary of $H_\Tf$ such that the following holds
for all distinct $u_0,u_1\in U$ satisfying $\chi(u_0)=c_0$ and
$\chi(u_1)=c_1$, where $a_0=\pi(u_0)$ and $a_1=\pi(u_1)$.
 \begin{itemize}
 \item If $H_\Tf\models \varphi_{c_0,c_1}(a_0,a_1)$, then $u_0$ and
   $u_1$ are adjacent in $G$ if and only if
   $H_\Tf\models \psi_{c_0,c_1}(a_0,a_1)$.
 \item If $H_\Tf\not\models \varphi_{c_0,c_1}(a_0,a_1)$, then there is
   a unique $5$-tuple $(d_0,d_1,t,t_0,t_1)\in [k]^2\times V(T)^3$ such
   that $H_\Tf\models \zeta_{c_0,c_1,d_0,d_1}(a_0,a_1,t,t_0,t_1)$:
   \begin{itemize}
   \item $t=\top(\varpi(u_0)\wedge_Y\varpi(u_1))$;
   \item $t_0$ is the $\preceq_T$-maximum node of
     $\varpi(u_0)\wedge_Y\varpi(u_1)$ satisfying
     $t_0\preceq_T \pi(u_0)$;
   \item $t_1$ is the $\preceq_T$-maximum node of
     $\varpi(u_0)\wedge_Y\varpi(u_1)$ satisfying
     $t_1\preceq_T \pi(u_1)$;
   \item $d_0=\kappa_{\Tf}(u_0,t_0)$; and
   \item $d_1=\kappa_{\Tf}(u_1,t_1)$.
   \end{itemize}
 \end{itemize}
\end{Alemma}
\begin{proofEnd}
  We explain how, given $a_0,a_1\in V(T)$, $c_0,c_1\in [k]$, and
  access to the information present in $H_{\Tf}$, to either determine
  whether $u_0$ and $u_1$ are adjacent in $G$ or not, or find the
  $5$-tuple $(d_0,d_1,t,t_0,t_1)$ descibed in the statement. It is
  straightforward to encode the explained mechanism in first-order
  logic, which gives rise to the postulated first-order formulas.
 
 
  Let us adopt the notation from the previous section for $u_0$ and
  $u_1$.  In particular, $u_0$ is a $\gamma_0$-vertex, $u_1$ is a
  $\gamma_1$-vertex, $\varpi(u_0)=x_0$, $\varpi(u_1)=x_1$, and
  $z=x_0 \wedge_Y x_1$. As argued, $\gamma_0,\gamma_1,x_0,x_1$ can be
  inferred from $c_0,c_1,a_0,a_1$ given access to $H_\Tf$.
 
  As the first step, we find the $\preceq_Y$-maximal element of
  $\wL(x_0)\cap \wL(x_1)$. Call it $\tilde{z}$.  First, we consider
  the corner case when $x_0=x_1=\tilde{z}$. Then we have:
  \begin{itemize}[nosep]
  \item $t=\top(x_0)=\top(x_1)$;
  \item $t_0=a_0$;
  \item $t_1=a_1$;
  \item $d_0=c_0$; and
  \item $d_1=c_1$.
 \end{itemize}
 
 Second, we check whether both $\wL(x_0)$ and $\wL(x_1)$ contain a
 child of $\tilde{z}$.  Suppose for a moment that this is the case,
 and let $z'_0$ and $z'_1$ be these children, respectively.  Then by
 the maximality of $\tilde{z}$, we must have $z'_0\neq z'_1$, implying
 $z=\tilde{z}$.  It follows that:
 \begin{itemize}[nosep]
 \item $t=\top(z)$;
 \item $t_0$ is the parent in $T$ of $\top(z'_0)$;
 \item $t_1$ is the parent in $T$ of $\top(z'_1)$;
 \item
   $d_0=\rho(e(\top(z'_0)))\left(\kappa_{\Tf}(u_0,\top(z'_0))\right)$;
   and
 \item
   $d_1=\rho(e(\top(z'_1)))\left(\kappa_{\Tf}(u_1,\top(z'_1))\right)$.
 \end{itemize}
 As we argued, these values can be retrieved from $H_\Tf$ given
 $c_0,c_1,a_0,a_1$.
 
 Next, we consider a mix of the two cases above: $x_0=\tilde{z}$ and
 $\tilde{z}$ has a child $z'_1$ that belongs to $\wL(x_1)$.  Then
 again we have $z=\tilde{z}$ and:
 \begin{itemize}[nosep]
  \item $t=\top(z)$;
  \item $t_0=a_0$;
  \item $t_1$ is the parent in $T$ of $\top(z'_1)$;
  \item $d_0=c_0$; and
  \item $d_1=\rho(e(\top(z'_1)))\left(\kappa_{\Tf}(u_1,\top(z'_1))\right)$.
 \end{itemize}
 The case when $x_1=\tilde{z}$ and $\tilde{z}$ has a child $z'_0$ that
 belongs to $\wL(x_0)$ is symmetric.
 
 We claim that the four cases considered above cover all the
 situations when assumption~\eqref{p:lca-not-discovered} is not
 satisfied, that is, when $\LL(x_0)\cap Z\neq \emptyset$ and
 $\LL(x_1)\cap Z\neq \emptyset$.  Indeed, if this is the case, then
 $\wL(x_0)$ and $\wL(x_1)$ both contain $z$.  Moreover, $\wL(x_0)$
 contains the child of $z$ that is an ancestor of $x_0$, if existent,
 and similarly $\wL(x_1)$ contains the child of $z$ that is an
 ancestor of $x_1$, if existent. Then $z=\tilde{z}$ and in either way,
 one of the four cases considered above applies.
 
 Hence, from now on we proceed under the assumption
 that~\eqref{p:lca-not-discovered} holds.  Consequently, all the
 claims presented in the previous section can be applied.
 
 Denoting $\Pp(x_0)=\{A_1,\ldots,A_p\}$ and
 $\Pp(x_1)=\{B_1,\ldots,B_q\}$, we find the largest index $i$ such
 that $\top(A_i)=\top(B_i)$.  Note that $i$ and the kinds to which
 blocks $A_i$ and $B_i$ belong can be retrieved using the information
 stored along with sets $\wL(x_0)$ and $\wL(x_1)$.
 
 By \cref{lem:thecases},
 none of the blocks $A_i$ or $B_i$ can be fully mixed.  If either
 $A_i$ or $B_i$ is not biased, we may use
 \cref{lem:thecases}-\eqref{enum:A_not_biased} and
 \cref{lem:thecases}-\eqref{enum:B_not_biased} to directly infer
 whether $u_0$ and $u_1$ are adjacent in $G$ or not.  We are left with
 the case when both $A_i$ and $B_i$ are biased.  By
 \cref{lem:thecases}-\eqref{enum:both_biased}, 
 they are either both first-biased, or both second-biased.
 
 Suppose that both $A_i$ and $B_i$ are first-biased. Then, by 
 \cref{lem:thecases},
 we have:
 \begin{itemize}[nosep]
 \item $z=g_{\gamma_0}(x_1)$;
 \item $z_0=h_{\gamma_0}(z)\neq\bot$;
 \item if $z_0'$ is the parent in $Y$ of $z_0$, then $t_0$ is the
   parent in $T$ of $\top(z_0')$; and
 \item if $d_0'$ is the unique element of $\varrho(e(z_0))(\gamma_0)$,
   then $d_0=\rho(e(\top(z_0')))(d_0')$.
 \end{itemize}
 Here, the fact that $\varrho(e(z_0))(\gamma_0)$ consists of exactly
 one element of $\gamma_0$ is implied by the fact that $\Tf/\Pp$ is
 splendid, as asserted by \cref{lem:class}. It remains to retrieve
 $t_1$ and $d_1$.  For this, by~
 \cref{lem:thecases} we observe that if $\wg_{\gamma_0}(x_1)=\bot$
 then $x_1=z$ and we have
 \begin{itemize}[nosep]
  \item $t_1=a_1$ and
  \item $d_1=c_1$.
 \end{itemize}
 Otherwise, if $\wg_{\gamma_0}(x_1)\neq \bot$, then
 $\wg_{\gamma_0}(x_1)$ is the ancestor of $x_1$ that is a child of $z$
 and we have:
 \begin{itemize}[nosep]
 \item $t_1$ is the parent in $T$ of $\top(\wg_{\gamma_0}(x_1))$ and
 \item
   $d_1=\rho(e(\wg_{\gamma_0}(x_1)))\left(\kappa_{\Tf}(u_1,\top(\wg_{\gamma_0}(x_1)))\right)$.
 \end{itemize}
 
 The case when both $A_i$ and $B_i$ are second-biased is symmetric.
 As in all the cases we have either concluded whether $u_0$ and $u_1$
 are adjacent or not, or we have determined the $5$-tuple
 $(d_0,d_1,t,t_0,t_1)$, this finishes the proof.
\end{proofEnd}


\subsubsection{Shallow case}

We now treat the case when the quotient tree $(Y,U,\varrho,\varpi)$ is
shallow; recall that this means that $Y$ has height $1$.  As in the
previous section, we encode $\Tf$ in a structure $H_\Tf$ whose
universe is $V(T)$.  We encode the following information in $H_\Tf$:
\begin{itemize}[nosep]
\item the parent function of the tree $T$;
\item the mapping $a\mapsto \rho(e(a))$, where $a$ is a node of $T$
  and $e(a)$ is the edge of $T$ connecting $a$ with its parent;
\item the mapping $a\mapsto \top(x(a))$, where $a$ is a node of $T$
  and $x(a)$ is the node of $Y$ such that $a\in x(a)$; and
\item the mapping $a\mapsto \rho(\pth_T(a,\top(x(a))))$.
\end{itemize}
For $a\in V(T)$ we define $N^\uparrow(a)$ as before: $N^\uparrow(a)$
comprises all strict ancestors of $a$ in $T$ that are bound to~$a$ via
functions present in $H_\Tf$. We have the following analogue of
\cref{cl:SReach-splendid}.

\begin{lemma}\label{cl:SReach-shallow}
 For each $a\in V(T)$, 
 $$\Bigl|\{b\in V(T)\colon b\preceq_T a\}\cap \bigcup_{a' \succeq_T a} N^\uparrow(a')\Bigr|\leq 2.$$
\end{lemma}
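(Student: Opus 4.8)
The plan is to observe that in the shallow case the structure $H_\Tf$ carries only four families of functions: the parent function of $T$, the edge-labeling $a\mapsto \rho(e(a))$, the ``home block'' map $a\mapsto \top(x(a))$, and the ``path to the top of the home block'' map $a\mapsto \rho(\pth_T(a,\top(x(a))))$. The last two are data-valued in the sense that their relevant pointer component points only at $\top(x(a))$, and the first two point at most one step up in $T$. So $N^\uparrow(a)$, the set of strict ancestors of $a$ bound to $a$ by a function of $H_\Tf$, is contained in $\{a^\uparrow, \top(x(a))\}$.

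First I would make precise which ancestors of $a$ lie in $N^\uparrow(a)$. The parent function contributes $a^\uparrow$ (and, read the other way, a node $b$ with $b^\uparrow = a$, but that is a descendant, not an ancestor, so it is irrelevant here). The map $a\mapsto\rho(e(a))$ is a unary function whose pointer part is again $a^\uparrow$. The map $a\mapsto\top(x(a))$ points at $\top(x(a))$, which is a (non-strict) ancestor of $a$; it is a strict ancestor precisely when $a\neq\top(x(a))$, i.e.\ when $a$ is not itself the top of its block. Finally $a\mapsto\rho(\pth_T(a,\top(x(a))))$ is data-valued with no pointer component beyond what is already recorded, so it contributes nothing new. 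Hence $N^\uparrow(a)\subseteq\{a^\uparrow,\top(x(a))\}$, a set of size at most $2$.

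Next I would use that $Y$ has height $1$ to control the union $\bigcup_{a'\succeq_T a} N^\uparrow(a')$ restricted to ancestors-or-$a$, i.e.\ to $\{b : b\preceq_T a\}$. Because $Y$ is shallow, every node of $T$ belongs to a block which is either the root block of $Y$ or a depth-$1$ block; consequently, for any $a'\succeq_T a$, the node $\top(x(a'))$ is either $\top(x(a))$ (when $a'$ lies in the same block as $a$) or an ancestor of $\top(x(a))$, and in the latter case it is the root $\top(T)$, which sits at the top of the whole decomposition. The point is that among all $a'\succeq_T a$, the pointers of the form $a'^\uparrow$ that land inside $\{b:b\preceq_T a\}$ can only be $a^\uparrow$ itself (for $a'=a$), since for $a'\succ_T a$ the parent $a'^\uparrow$ is still $\succeq_T a$ and is not strictly below $a$; and the pointers of the form $\top(x(a'))$ that land inside $\{b:b\preceq_T a\}$ are all equal to $\top(x(a))$. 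So the intersection in the statement is contained in $\{a^\uparrow,\top(x(a))\}$, giving the bound $2$.

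The main obstacle — a mild one compared with the splendid case — is bookkeeping: being careful about which direction each functional relation is read in (so that only genuine strict ancestors of $a$ are counted, not descendants), and checking that the data-valued maps $a\mapsto\rho(e(a))$ and $a\mapsto\rho(\pth_T(a,\top(x(a))))$ contribute no new pointer targets. Once those are pinned down, the shallowness of $Y$ does all the work, collapsing the set $\{\top(x(a')) : a'\succeq_T a\}\cap\{b:b\preceq_T a\}$ to the single node $\top(x(a))$, and the claimed bound of $2$ follows immediately.
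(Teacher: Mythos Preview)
Your proposal is correct and takes exactly the same approach as the paper: the paper's own proof is the single sentence ``The only nodes that may be contained in the involved set are $a^\uparrow$ and $\top(x(a))$,'' which is precisely the conclusion you reach. Your expanded analysis of the four functions encoded in $H_\Tf$ and of how the union over descendants collapses is sound and simply spells out what the paper leaves implicit.
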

\begin{proof}
  The only nodes that may be contained in the involved set are
  $a^\uparrow$ and $\top(x(a))$.
\end{proof}

We may also prove the following analogue of \cref{cl:decode-splendid}.

\begin{Alemma}
\label{cl:decode-shallow}
Fix $c_0,c_1\in [k]$. Then there
formulas
$$\{\zeta_{c_0,c_1,d_1,d_2}(p_0,p_1,q,q_0,q_1)\colon d_0,d_1\in
[k]\}$$ in the vocabulary of $H_\Tf$ such that the following holds for
all distinct $u_0,u_1\in U$ satisfying $c_0=\chi(u_0)$ and
$c_1=\chi(u_1)$, where $a_0=\pi(u_0)$ and $a_1=\pi(u_1)$. 
There is a 
unique $5$-tuple $(d_0,d_1,t,t_0,t_1)\in [k]^2\times V(T)^3$ such that
$H_\Tf\models \zeta_{c_0,c_1,d_0,d_1}(a_0,a_1,t,t_0,t_1)$:
\begin{itemize}
\item $t=\top(\varpi(u_0)\wedge_Y\varpi(u_1))$;
\item $t_0$ is the $\preceq_T$-maximum node of
  $\varpi(u_0)\wedge_Y\varpi(u_1)$ satisfying $t_0\preceq_T \pi(u_0)$;
\item $t_1$ is the $\preceq_T$-maximum node of
  $\varpi(u_0)\wedge_Y\varpi(u_1)$ satisfying $t_1\preceq_T \pi(u_1)$;
\item $d_0=\kappa_{\Tf}(u_0,t_0)$; and
\item $d_1=\kappa_{\Tf}(u_1,t_1)$.
\end{itemize}
\end{Alemma}
\begin{proofEnd}
  As in the proof of \cref{cl:decode-splendid}, we describe a
  mechanism of determining $(d_0,d_1,t,t_0,t_1)$ from
  $c_0,c_1,a_0,a_1$ given access to $H_\Tf$.  It is straightforward to
  formulate this mechanism in first-order logic, which gives rise to
  the postulated formulas.
 
  Let $x_0=\varpi(u_0)$ and $x_1=\varpi(u_1)$; note that $x_0$ and
  $x_1$ can be inferred from $a_0$ and $a_1$.  First, we check whether
  $x_0=x_1$. If this is the case, then we have
 \begin{itemize}[nosep]
  \item $t=\top(x_0)=\top(x_1)$;
  \item $t_0=a_0$;
  \item $t_1=a_1$;
  \item $d_0=c_0$; and
  \item $d_1=c_1$.
 \end{itemize}
 Otherwise, $x_0\wedge_Y x_1$ is equal to the root $r$ of $Y$. 
 Then:
 \begin{itemize}[nosep]
  \item $t=\top(r)$ is the root of $T$;
  \item $t_0=a_0$ if $x_0=r$, or $t_0$ is the parent of $\top(x_0)$ in $T$ otherwise;
  \item $t_1=a_1$ if $x_1=r$, or $t_1$ is the parent of $\top(x_0)$ in $T$ otherwise;
  \item $d_0=c_0$ if $x_0=r$, or
    $d_0=\rho(e(\top(x_0)))\circ \rho(\pth_T(a_0,\top(x_0)))(c_0)$
    otherwise; and
  \item $d_1=c_1$ if $x_1=r$, or
    $d_1=\rho(e(\top(x_1)))\circ \rho(\pth_T(a_1,\top(x_1)))(c_1)$
    otherwise.
 \end{itemize}
 This concludes the proof.
\end{proofEnd}

\subsubsection{Completing the induction}

We now utilize the understanding obtained in the previous sections to
complete the proof of \cref{thm:main} through an induction scheme.
Let $\ell \leq 3k^k$ be the length of the sequence of classes provided
by \cref{lem:factorization}.

Recall that we work with a $k$-NLC-tree
$\Tf=(T,U,\rho,\pi, \eta, \chi)$ generating $G$. We define a sequence
of factorizations $\Qq_1,\ldots,\Qq_\ell$ of $\Tf$ though backward
induction as follows:
\begin{itemize}[nosep]
\item $\Qq_\ell$ consists of one factor, being the whole tree $T$
  itself; and
\item for $i<\ell$, $\Qq_i$ is obtained from $\Qq_{i+1}$ by replacing
  each factor $F\in \Qq_i$ with all the factors of~$\Pp(\Tf_F)$.
\end{itemize}
Thus, \cref{lem:factorization} asserts that $\Qq_1$ is a factorization
of $\Tf$ into single-node factors.

Next, for each $i\in [\ell]$ and factor $F\in \Qq_i$ we define a
structure $J_F$. Intuitively, $J_F$ encodes the structure~$H_{\Tf_F}$
that we defined in the previous section, as well as all the structures
$J_{F'}$ for $F'\in \Pp(\Tf_F)$, constructed in the previous step of
the induction. Thus, the universe of $\Tf_F$ is $V(F)$, while the
relations in~$\Tf_F$ are defined by induction on $i$ as follows.

For $i=1$, the tree $F$ has exactly one node, say $a$. Structure $J_F$
stores only the value $\eta(a)$, encoded using unary relations on $a$.

\newcommand{\rt}{\mathsf{root}}

For $i>1$, the structure $J_F$ is constructed as a superposition of
the structure $H_{\Tf_F}$ and structures~$J_{F'}$ for
$F'\in \Pp(\Tf_F)$ as follows. First, consider the induced
$k$-NLC-tree $\Tf_F$ and construct the structure~$H_{\Tf_F}$ for it as
in the previous section. This structure has $V(F)$ as its
universe. Next, for each factor \mbox{$F'\in \Pp(\Tf_F)$}, consider
the structure $J_{F'}$ constructed in the previous step of induction
and add all the tuples from all the relations of $J_{F'}$ to $J_F$.
While doing this, we reuse relation names: we assume that all the
structures $J_{F'}$ are over the same vocabulary, so to obtain a
relation $R$ from this vocabulary in $J_F$ we take the union of
relations $R$ taken from structures $J_{F'}$ for $F'\in
\Pp(\Tf_F)$. Note here that the universes of structures $J_{F'}$ are
pairwise disjoint, and the vocabulary used for encoding $H_{\Tf_F}$ is
assumed to be disjoint from the vocabulary used for encoding
structures $J_{F'}$.  Finally, for technical reasons we add to $J_F$ a
function $\rt_i(\cdot)$ that maps each node $a\in V(F)$ to the root of
$F$.

Let now $J_{\Tf}\coloneqq J_T$, where $T$ is the unique factor of
$\Qq_\ell$. Further, let $J_{\Tf}^\star$ be the structure obtained
from~$J_\Tf$ by adding $U$ to the universe, together with unary and
binary relations encoding mappings $u\mapsto \pi(u)$ and
$u\mapsto \chi(u)$, for $u\in U$.

First, we verify that $J_{\Tf}^\star$ contains all the information
needed to reconstruct $G$.

\begin{Alemma}
\label{cl:edge-relation}
There is a first-order formula $\alpha(p_0,p_1)$ over the vocabulary
of $J_{\Tf}^\star$ such that for all $u_0,u_1\in U$, we have
$J_{\Tf}^\star\models \alpha(u_0,u_1)$ if and only if
$u_0u_1\in E(G)$.
\end{Alemma}
\begin{proofEnd}
  For a pair of vertices $a_0,a_1\in V(T)$, let the {\em{level}} of
  $(a_0,a_1)$ be the smallest integer $i$ such that $a_0$ and $a_1$
  belong to the same factor of $\Qq_i$.  As $\Qq_\ell$ consists of one
  factor --- the whole tree $T$ --- the level of every pair is upper
  bounded by $\ell$.  We shall inductively define formulas
  $\beta^i_{c_0,c_1}(p_0,p_1)$ for $c_0,c_1\in [k]$ and $i\in [\ell]$
  satisfying the following property: for every pair
  $(a_0,a_1)\in V(T)^2$ of level at most $i$, if there are vertices
  $u_0,u_1\in U$ satisfying $\pi(u_0)=a_0$, $\pi(u_1)=a_1$,
  $\chi(u_0)=c_0$, and $\chi(u_1)=c_1$, then
  $J_{\Tf}^\star\models \beta^i_{c_0,c_1}(a_0,a_1)$ iff
  $u_0u_1\in E(G)$. If we succeed in this, then formula
  $\alpha(u_0,u_1)$ can be written by first defining $a_0=\pi(u_0)$,
  $a_1=\pi(u_1)$, $c_0=\chi(u_0)$, and $c_1=\chi(u_1)$, and then
  applying $\beta^{\ell}_{c_0,c_1}(a_0,a_1)$.
 
  Consider first the base case $i=1$. As factorization $\Qq_1$ places
  every node of $T$ in a different factor, then condition that
  $(a_0,a_1)$ has level at most $1$ boils down to $a_0=a_1$. Hence
  $\beta^1_{c_0,c_1}(a_0,a_1)$ only needs to check that $a_0=a_1$ and
  that $(c_0,c_1)\in \eta(a_0)$.
 
  We proceed to the induction step. Let $F$ be the factor of $\Qq_i$
  that contains both $a_0$ and $a_1$. We shall assume that the
  quotient tree $\Tf_F/\Pp(\Tf_F)$ is splendid, hence we will use
  formulas provided by \cref{cl:decode-shallow} for the $k$-NLC-tree
  $\Tf_F$. Note here that the structure $H_{\Tf_F}$ encoding $\Tf_F$
  is contained in $J_{\Tf}^\star$. Hence, these formulas may be
  applied in $J_{\Tf}^\star$ in the same manner as in~$H_{\Tf_F}$,
  provided that we appropriately relativize them to the elements of
  $V(F)$; these can be distinguished as elements mapped to the root of
  $F$ by $\rt_i(\cdot)$.  The reasoning in the other case, when
  $\Tf_F/\Pp(\Tf_F)$ is shallow, proceeds in the same way and is even
  simpler, as we may use \cref{cl:decode-shallow} instead of
  \cref{cl:decode-splendid}.
 
  We first check whether $\varphi_{c_0,c_1}(a_0,a_1)$ holds in
  $H_{\Tf_F}$. If this is the case, then we may immediately determine
  whether $u_0$ and $u_1$ are adjacent in $G$ by checking whether
  $\psi_{c_0,c_1}(a_0,a_1)$ holds in $H_{\Tf_F}$. Otherwise, using
  formulas $\zeta_{c_0,c_1,d_0,d_1}(p_0,p_1,q,q_0,q_1)$ we can find
  suitable colors $d_0,d_1\in [k]$ and nodes $t,t_0,t_1\in V(F)$, as
  described in \cref{cl:decode-splendid}. Note here that if $F'$ is
  the factor of $\Pp(\Tf_F)$ that contains the least common ancestor
  of $a_0$ and $a_1$, then
  \begin{itemize}[nosep]
  \item $t=\top(F')$;
  \item $t_0=\pi_{F'}(u_0)$;
  \item $t_1=\pi_{F'}(u_1)$;
  \item $d_0=\chi_{F'}(u_0)$; and
  \item $d_1=\chi_{F'}(u_1)$.
 \end{itemize}
 Hence, to decide whether $u_0u_1\in E(G)$, it suffices to check
 whether $J_{\Tf}^\star\models \alpha^{i-1}_{d_0,d_1}(t_0,t_1)$, which
 is a formula that we constructed in the previous step of induction.
\end{proofEnd}

\newcommand{\Gaif}{\mathsf{Gaif}}

Recall that the {\em{Gaifman graph}} of a structure $A$ is the
undirected graph $\Gaif(A)$ whose vertex set is the universe of $A$,
and where two elements are considered adjacent if and only if they
appear simultaneously in a tuple in a relation in $A$.  Define
$$\Dd\coloneqq \{\,\Gaif(J_{\Tf}^\star)\ \colon\ \Tf\textrm{ is a }k\textrm{-NLC-tree generating a graph from }\Cc\ \}.$$


That the class $\Dd$ has bounded treewidth is then proved using the
characterization of treewidth through the {\em{strong reachability}}
relation, with the help of \cref{cl:SReach-splendid} and \cref{cl:SReach-shallow}.

  For the proof of \cref{cl:bnd-tw}, we need several definitions.

  Let $G$ be a graph and let $\leq$ be a {\em{vertex ordering}} of
  $G$, that is, a linear order on the vertex set of $G$. For a vertice
  $u$ and $v$ of $G$, we say that $v$ is {\em{strongly reachable}}
  from $u$ in $\leq$ if $v\leq u$ and in $G$ there exists a path $P$
  from $u$ to $v$ such that $u<w$ for every internal vertex $w$ of
  $P$. Then, we define the {\em{strong reachability set}} of $u$,
  denoted $\SReach_{\infty}[G,\leq,u]$ as the set of all vertices of
  $G$ that are strongly reachable from $u$ in $\leq$. The {\em{strong
      $\infty$-coloring number}} of $G$ is defined as
$$\scol_\infty(G)=\min_{\leq}\max_{u\in V(G)}|\SReach_{\infty}[G,\leq,u]|,$$
where the minimum ranges over all vertex orderings of $G$. It is
folklore that the strong $\infty$-coloring number essentially
coincides with treewidth.

\begin{theorem}[see e.g. Chapter~1, Theorem~1.19 of~\cite{notes}]\label{thm:scol-tw}
  For every graph $G$, the treewidth of $G$ is equal to
  $\scol_{\infty}(G)-1$.
\end{theorem}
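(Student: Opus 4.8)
The plan is to prove both inequalities via the classical correspondence between treewidth and chordal completions. I will use the standard fact that $\mathrm{tw}(G)=\min\{\omega(H)-1\}$, where $H$ ranges over all chordal graphs on vertex set $V(G)$ with $E(H)\supseteq E(G)$ and $\omega$ denotes the clique number. I will also use that a chordal graph $H$ admits a \emph{perfect elimination ordering} $v_1,\dots,v_n$ (each $v_i$ simplicial in $H[\{v_i,\dots,v_n\}]$), together with the classical \emph{shortcut lemma}: if $x$ and $y$ are joined in $H$ by a path all of whose internal vertices precede both $x$ and $y$ in the elimination ordering, then $xy\in E(H)$ --- indeed, take the earliest internal vertex on the path; it is simplicial among the later vertices, so its two path-neighbours are adjacent, giving a shorter path, and one iterates.

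For the inequality $\scol_\infty(G)\le \mathrm{tw}(G)+1$, I would fix a chordal supergraph $H$ of $G$ with $\omega(H)=\mathrm{tw}(G)+1$ and a perfect elimination ordering $v_1,\dots,v_n$ of $H$, and take the vertex ordering $\leq$ of $G$ in which $v_1$ is the \emph{largest} and $v_n$ the smallest. If $v_j$ is strongly reachable from $v_i$ in $(G,\leq)$, then $v_j\leq v_i$, i.e.\ $j\geq i$, and either $j=i$ or there is a $v_i$--$v_j$ path in $G\subseteq H$ all of whose internal vertices are $>v_i$, hence lie in $\{v_1,\dots,v_{i-1}\}$ and so precede both $v_i$ and $v_j$ in the ordering; the shortcut lemma then gives $v_iv_j\in E(H)$. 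Thus $\SReach_\infty[G,\leq,v_i]\subseteq\{v_i\}\cup\bigl(N_H(v_i)\cap\{v_{i+1},\dots,v_n\}\bigr)$, which is a clique of $H$ and therefore has size at most $\omega(H)=\mathrm{tw}(G)+1$. Taking the maximum over $i$ yields $\scol_\infty(G)\le\mathrm{tw}(G)+1$.

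For the reverse inequality $\mathrm{tw}(G)\le\scol_\infty(G)-1$, I would take a vertex ordering $\leq$ attaining $\scol_\infty(G)$ and run the \emph{elimination game}: process the vertices from $\leq$-largest to $\leq$-smallest, and when processing $u$ add edges so that the set of neighbours of $u$ in the current graph that are $\leq$-smaller than $u$ becomes a clique; let $G^+\supseteq G$ be the resulting graph. It is classical that $G^+$ is chordal and that the processing order is a perfect elimination ordering of it, so $\mathrm{tw}(G)\le\mathrm{tw}(G^+)=\omega(G^+)-1$, and $\omega(G^+)$ equals one plus the maximum over $u$ of the number of $\leq$-smaller neighbours of $u$ present at the moment $u$ is processed. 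The heart of the argument is the invariant: after processing exactly the vertices that are $\leq$-larger than a fixed threshold $w$, two not-yet-processed vertices $x,y$ are adjacent in the current graph if and only if there is an $x$--$y$ path in $G$ whose internal vertices are all $\leq$-larger than $\max(x,y)$ (equivalently, all processed so far). Preserving this through one processing step is a short induction on path length, concatenating the two witnessing paths at the newly processed vertex; once established, it shows that the $\leq$-smaller neighbours of $u$ at the time $u$ is processed are exactly $\SReach_\infty[G,\leq,u]\setminus\{u\}$. Hence $\omega(G^+)=\max_u|\SReach_\infty[G,\leq,u]|=\scol_\infty(G)$, and $\mathrm{tw}(G)\le\scol_\infty(G)-1$.

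The only part that is not pure bookkeeping, and hence the main obstacle, is this path-based invariant in the second half: it is where one connects the iterative fill-in produced by the elimination game to the static ``path with large internal vertices'' description of strong reachability. The two classical inputs used above (treewidth as the minimum clique number over chordal completions, and the elimination game producing a chordal graph whose processing order is a perfect elimination ordering) may simply be cited; if a fully self-contained treatment is preferred, both the chordality of $G^+$ and the identity $\omega(G^+)=\scol_\infty(G)$ follow from the same invariant combined with the shortcut-lemma reasoning already used in the first half.
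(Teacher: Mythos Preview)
The paper does not prove this theorem at all; it is quoted as folklore with a pointer to external lecture notes, and then immediately used. So there is no in-paper proof to compare against. Your argument via chordal completions and the elimination game is the standard one and is correct in substance.

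One imprecision worth fixing: the invariant you state for the elimination game is not literally true with ``internal vertices $>\max(x,y)$''. After processing exactly the vertices that are $\leq$-larger than $w$, two unprocessed vertices $x,y\le w$ are adjacent in the current graph if and only if there is an $x$--$y$ path in $G$ whose internal vertices are all $>w$ (your parenthetical ``all processed so far''), not merely $>\max(x,y)$. For a counterexample to the $\max(x,y)$ version, take $G$ the path $a\text{--}b\text{--}c$ with $a<c<b$ and set $w=b$: nothing has been processed, the current graph is $G$, yet the path $a\text{--}b\text{--}c$ has its unique internal vertex $b>\max(a,c)=c$ while $ac$ is not an edge. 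This does not damage your conclusion, because you only apply the invariant at the moment $u$ is about to be processed, with threshold $w=u$ and $\max(x,y)=u$, where the two conditions coincide; but the statement (and its inductive proof) should use ``$>w$'' throughout.
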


We now use Theorem~\ref{thm:scol-tw} together with
\cref{cl:SReach-splendid} and \cref{cl:SReach-shallow} to prove the
following.
\begin{Alemma}
\label{cl:bnd-tw}
For every graph $G\in \Dd$, the treewidth of $G$ is at most
$3k^k\cdot (836h+2k+4)$.
\end{Alemma}
\begin{proofEnd}
  By Theorem~\ref{thm:scol-tw}, it suffices to give a vertex ordering
  of $G$ where each strong reachability set has size at most
  $3k^k\cdot (836h+2k+4)+1$.  Let $G=\Gaif(J_{\Tf}^\star)$, where
  $\Tf=(T,U,\rho,\pi, \eta, \chi)$ is a $k$-NLC-tree that generates a
  graph from $\Cc$. Then $V(G)=U\cup V(T)$. Let $\leq$ be a vertex
  ordering of $G$ constructed as follows: first put all the nodes of
  $T$ in any order that extends $\preceq_T$ (that is, $u\preceq_T v$
  entails $u\leq v$), and then put all the vertices of $U$ in any
  order. Our goal is to establish an upper bound on the sizes of
  strong reachability sets with respect to the ordering $\leq$.
 
  Observe that for $u\in U$, we have
  $\SReach_\infty[G,\leq,u]=\{u,\pi(u)\}$, so this is a set of size
  $2$. Consider then any $a\in V(T)$. From the construction of $J_\Tf$
  it follows that all the edges of $G$ which connect two nodes $V(T)$
  in fact connect a node of $T$ with its ancestor. Hence, we have
 $$\SReach_\infty[G,\leq,a]\subseteq \{a\}\cup \bigcup_{i=1}^\ell N^\uparrow_i(a),$$
 where $N^\uparrow_i(a)$ is the set $N^\uparrow(a)$ evaluated in the structure $H_{\Tf_{F_i}}$, where $F_i$ is the factor from $\Qq_i$ that contains $a$. By \cref{cl:SReach-splendid} and \cref{cl:SReach-shallow}, each of the sets $N^\uparrow_i(a)$ has size at most $836h+2k+4$, so $$|\SReach_\infty[G,\leq,a]|\leq \ell\cdot (836h+2k+4)+1=3k^k\cdot (836h+2k+4)+1,$$ as required.
\end{proofEnd}

The bound obtained in \cref{cl:bnd-tw} is not optimal, and could be easily reduced. Note that it is not known whether there is a collapse in the hierarchy of classes with bounded treewidth with respect to first-order transductions, that is, whether there exist integers $k<k'$ with the property that the class of graphs with treewidth at most $k'$ can be transduced from the class of graphs with treewidth at most $k$. We conjecture that this is not the case.

\newcommand{\wM}{\widehat{M}}
\newcommand{\wG}{\widehat{G}}
\newcommand{\wDd}{\widehat{\Dd}}

\medskip
We are now able to prove \cref{thm:main}, which we restate below.

\mainthm*
\begin{proof}
  For a graph $G$, let $\wG$ be the graph obtained from $G$ by
  subdividing every edge $uv$ twice, that is, replacing it with a path
  $u-s^{u}_{uv}-s^{v}_{uv}-v$. Let $\wDd=\{\wG\colon G\in \Dd\}$. As
  subdividing edges does not increase the treewidth and $\Dd$ has
  bounded treewidth by \cref{cl:bnd-tw}, the same bound also applies
  to $\wDd$.

  We now prove that there is a transduction from $\wDd$ onto $\Cc$,
  hence establishing the only non-trivial implication of the theorem.

  Consider any graph $G\in \Cc$. Let $\Tf$ be any $k$-NLC-tree that
  generates $G$.  Let $M=\Gaif(J_{\Tf}^\star)$. We argue that $G$ can
  be transduced from $\wM\in \wDd$ using a fixed transduction that
  depends only on $k$.

  We first argue that the structure $J^\star_\Tf$ can be transduced
  from $\wM$.  First, we add colors to distinguish the original
  vertices of $M$ from the subdividing vertices (i.e. vertices
  $s^{u}_{uv}$ and $s^{v}_{uv}$ introduced when constructing~$\wM$
  from $M$). Now, recall that the vocabulary of $J^\star_\Tf$ consists
  only of unary relations and partial functions. Unary relations
  present in $J^\star_\Tf$ can be introduced directly. For every
  partial function~$f$ present in~$J^\star_\Tf$, we transduce it as
  follows. First, we introduce a unary predicate $Z_f$ which selects
  vertices~$s^u_{u\,f(u)}$ for $u$ ranging over the domain of
  $f$. Then it is straightforward to interpret $f$ using a first-order
  formula involving~$Z_f$. Thus, we have introduced all the relations
  present in $J^\star_\Tf$, and it remains to use a universe
  restriction formula to dispose of all the subdividing vertices,
  which should not be included in the universe of $J^\star_\Tf$.

  Now that $J^\star_\Tf$ has been transduced from $\wM$, we can use
  formula $\alpha(p_0,p_1)$ provided by \cref{cl:edge-relation} to
  interpret the edge relation of $G$ in $J^\star_\Tf$. Restricting the
  universe to $U$ finishes the construction of $G$ from $\wM$ by means
  of a transduction.
\end{proof}

Finally, let us discuss the algorithmic aspects of the proof. Given a graph $G\in \Cc$, we can compute a $k$-NLC-tree generating $G$ in cubic time~\cite{oum2006approximating}, for some constant $k$. The hierarchical factorization provided by \cref{lem:factorization} can be computed in polynomial time, because the result of Colcombet~\cite{colcombet2007combinatorial} is effective. It is straightforward forward to see that all the further elements of the construction, like determining the types, partitioning into blocks, etc., which amount to the construction of the structure~$J^\star_\Tf$, can be carried out in polynomial time. Thus, given $G\in \Cc$, we can in polynomial time compute a graph of bounded treewidth $H$ from which $G$ can be transduced, together with a suitable monadic extension of $H$. The interpretation yielding $G$ from this monadic extension of $H$ can be computed as~well.

\section{Some combinatorial consequences of Theorem~\ref{thm:main}}\label{sec:consequences}

\cref{thm:main} asserts that each class with bounded rankwidth and
stable edge relation is a transduction of a class with bounded
treewidth. We now derive some consequences of this result.

Classes with bounded treewidth are examples of classes with bounded
expansion \cite{nevsetvril2008grad}.  Recall that a class $\Cc$ has
{\em bounded expansion} if there exists a function
$f\colon\mathbb N\rightarrow\mathbb N$ with the property that every
graph~$H$ such that a subdivision of $H$ with edges subdivided at
most $r$ times is a subgraph of a graph in $\Cc$ has average degree
at most $f(r)$.  (The reader is referred to \cite{Sparsity} for an
in-depth study of these classes.)

These classes are characterized by the existence of special covers.
Let {\sf complexity} be a graph parameter, such as treewidth or
rankwidth. A class $\mathscr C$ has {\em low {\sf complexity} covers}
if for each positive integer $p$ there exists a constant $C_p$ and a
class $\mathscr X_p$ with bounded {\sf complexity}, such that each
graph $G\in\mathscr C$ can be covered by $C_p$ induced subgraphs
$H_1,\dots,H_{C_p}\in\mathscr X_p$ in such a way that every subset of~$p$ vertices of $G$ are jointly covered by some $H_i$
($1\leq i\leq C_p$).

\medskip
Recall that the {\em treedepth} of a graph $G$ \cite{Sparsity} is the
minimum number of levels of a rooted forest $Y$ such that $G$ is a
subgraph of the ancestor-descendant closure of $Y$. Equivalently, the treedepth of a graph
$G$ is the minimum clique number of a supergraph of $G$ that is a
trivially perfect graph.  The following result
follows from the characterization of bounded expansion in terms of low
treedepth colorings.

\begin{theorem}[\cite{nevsetvril2008grad}]
\label{thm:BEcov}
A class has bounded expansion if and only if it has low treedepth covers.
\end{theorem}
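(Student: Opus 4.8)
\cref{thm:BEcov} is a classical result; I sketch a proof for completeness. Throughout, say that \emph{$H$ is a depth-$r$ topological minor of $G$} if some subdivision of $H$ in which every edge is subdivided at most $r$ times is a subgraph of $G$, as in the definition of bounded expansion above; thus $\Cc$ has bounded expansion iff for every $r$ there is a bound on the average degree of depth-$r$ topological minors of graphs in $\Cc$. I treat the two implications separately.

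\emph{Low treedepth covers $\Rightarrow$ bounded expansion.} Fix $r$ and $G\in\Cc$, and let $H$ be a depth-$r$ topological minor of $G$; the goal is to bound $|E(H)|/|V(H)|$. Fix a model: distinct branch vertices $B\subseteq V(G)$ with $|B|=|V(H)|$, and for each edge $e=uv\in E(H)$ an internally disjoint $u$--$v$ path $P_e$ in $G$ with at most $r$ internal vertices, hence with at most $r+2$ vertices in total. Apply the low treedepth covers property with parameter $p=r+2$: there are a constant $C=C_p$ and a class $\mathscr{X}_p$ of treedepth at most some $t=t_p$ such that $G$ is covered by induced subgraphs $H_1,\dots,H_C\in\mathscr{X}_p$ with every $p$ vertices jointly contained in some $H_i$. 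In particular each $V(P_e)$ lies in some $H_{\iota(e)}$, so by pigeonhole there is an index $i$ with at least $|E(H)|/C$ edges $e$ satisfying $\iota(e)=i$. Restricting the model to those edges, the corresponding paths lie entirely inside the induced subgraph $H_i$, and hence witness that the subgraph $H''\subseteq H$ they span is a depth-$r$ topological minor of $H_i$, with $|E(H'')|\geq |E(H)|/C$ and $|V(H'')|\leq |V(H)|$. Since a topological minor is a minor and treedepth is minor-monotone, $H''$ has treedepth at most $t$; a graph of treedepth at most $t$ on $n$ vertices has at most $(t-1)n$ edges (each vertex is adjacent only to its at most $t-1$ proper ancestors in an elimination forest), so $|E(H'')|\leq (t-1)|V(H'')|$. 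Combining the inequalities gives $|E(H)|\leq C(t-1)|V(H)|$, so every depth-$r$ topological minor of a graph in $\Cc$ has average degree at most $2C_{r+2}(t_{r+2}-1)$, which is a bound depending only on $r$. Hence $\Cc$ has bounded expansion.

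\emph{Bounded expansion $\Rightarrow$ low treedepth covers.} This is the substantial direction; the plan is to first produce low treedepth \emph{colorings} and then pass to covers. A low treedepth coloring with parameter $p$ is a vertex coloring using a bounded number $N_p$ of colors such that the union of any $j\leq p$ color classes induces a graph of treedepth at most $j$. To obtain one from bounded expansion, I would invoke Zhu's theorem that bounded expansion is equivalent to having bounded weak $r$-coloring number for every $r$ (see \cite{Sparsity}), and then convert a bound $\mathrm{wcol}_{2^{p-1}}(G)\leq k$ into a $p$-centered coloring of $G$ with a number of colors bounded in terms of $k$ and $p$ --- equivalently, a low treedepth coloring with parameter $p$ --- by a greedy peeling of color classes along a witnessing vertex ordering; alternatively one can run the original argument of \cite{nevsetvril2008grad} through transitive fraternal augmentations of $G$. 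Once a low treedepth coloring with $N_p$ colors is available, let $\mathscr{X}_p$ be the class of all graphs of treedepth at most $p$ and, for every choice of $p$ of the $N_p$ color classes, take as a cover graph the induced subgraph of $G$ spanned by their union; there are $C_p=\binom{N_p}{p}$ such graphs, each in $\mathscr{X}_p$, and every $p$ vertices of $G$ meet at most $p$ classes, hence lie jointly in one of them. This is exactly a system of low treedepth covers.

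The main obstacle is the middle step of the forward direction: extracting, from the purely numerical density bounds defining bounded expansion, a coloring with boundedly many colors that controls the treedepth of small unions of color classes. This is precisely the content of the generalized-coloring-number / centered-coloring machinery of sparsity theory (or, historically, of transitive fraternal augmentations). The reverse implication and the passage from low treedepth colorings to low treedepth covers are routine pigeonhole and counting.
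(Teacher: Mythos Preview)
The paper does not give its own proof of \cref{thm:BEcov}; it is quoted verbatim from \cite{nevsetvril2008grad} as a known characterization and used as a black box. So there is nothing to compare against on the paper's side.

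As for your sketch itself: the direction ``low treedepth covers $\Rightarrow$ bounded expansion'' is correct as written. The pigeonhole step cleanly localizes at least a $1/C_{r+2}$ fraction of the branch paths inside a single cover piece $H_i$, and since $H_i$ is an induced subgraph these paths (with their endpoints) remain internally disjoint paths in $H_i$, so the spanned subgraph $H''$ is genuinely a depth-$r$ topological minor of $H_i$; minor-monotonicity of treedepth and the edge bound $(t-1)n$ then finish. One cosmetic point: you only need the covers at the single parameter $p=r+2$, and your final bound depends on $r$ through both $C_{r+2}$ and $t_{r+2}$, exactly as written.

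For ``bounded expansion $\Rightarrow$ low treedepth covers'' you are not really proving anything: you invoke either Zhu's weak-coloring-number characterization plus a conversion to $p$-centered colorings, or the original transitive-fraternal-augmentation argument of \cite{nevsetvril2008grad}, and then do the routine passage from a low treedepth \emph{coloring} to low treedepth \emph{covers} by taking all $\binom{N_p}{p}$ unions of $p$ color classes. That last passage is fine (modulo the trivial case $N_p<p$, handled by padding), but the middle step is a literature pointer, not an argument. Since the paper itself also just cites \cite{nevsetvril2008grad}, this is entirely adequate for the present purpose; just be aware that the precise radius in the weak-coloring-number route and the details of the greedy conversion to a $p$-centered coloring are where the actual work lives, and your sketch does not supply them.
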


An extension of this result gives a characterization of the graph
classes that are transductions of classes with bounded expansion. Following~\cite{SBE_TOCL}, we say that such classes have \emph{structurally bounded expansion}.

\begin{theorem}[\cite{SBE_TOCL}]
\label{thm:SBE}
A class has structurally bounded
expansion if and only it has low shrubdepth covers.
\end{theorem}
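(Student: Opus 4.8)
The plan is to \emph{transport} Theorem~\ref{thm:BEcov} across transductions, using two auxiliary facts that I will treat as known: that low treedepth covers characterise bounded expansion (this is Theorem~\ref{thm:BEcov} itself), and that a class has bounded shrubdepth if and only if it is a first\nobreakdash-order transduction of a class of bounded treedepth. Write $\mathscr{C}$ for the class in question; whenever $\mathscr{C}$ is assumed structurally bounded expansion, fix a bounded\nobreakdash-expansion class $\mathscr{B}$ and an interpretation $\mathsf{I}=(\nu,\eta)$ with $\mathscr{C}=\mathsf{I}(\mathscr{B})$ (this is the definition of structurally bounded expansion given above).

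\textbf{From low shrubdepth covers to structurally bounded expansion.} Here I only need one fixed transduction from one fixed bounded\nobreakdash-expansion class, so the cover for $p=2$ suffices: every $G\in\mathscr{C}$ is covered by $C=C_2$ induced subgraphs $H_1,\dots,H_C$ from a class $\mathscr{X}$ of bounded shrubdepth, with every pair of vertices of $G$ (hence every edge and every non\nobreakdash-edge) jointly covered by some $H_i$. By the transduction characterisation of shrubdepth there are a single interpretation $\mathsf{J}=(\nu',\eta')$ and a class $\mathscr{Y}$ of treedepth at most $d$ with $\mathscr{X}\subseteq\mathsf{J}(\mathscr{Y})$; thus each $H_i$ equals $\mathsf{J}(T_i^{+})$ for some $T_i\in\mathscr{Y}$ with a fixed finite colouring, and we may view $V(H_i)=\nu'(T_i^{+})\subseteq V(T_i)$. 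I then build a skeleton $B=B(G)$ as the disjoint union of $T_1,\dots,T_C$, retaining their colourings and marking which copy each vertex belongs to, and adding for every vertex $v$ of $G$ a fresh ``identification'' vertex $v^{\star}$ made adjacent exactly to the (at most $C$) copies of $v$ occurring among the $T_i$. A disjoint union of $C$ graphs of treedepth at most $d$ still has treedepth at most $d$, and adding an independent set of new vertices each of degree at most $C$ preserves bounded expansion (such a gadget is a union of the base graph with $C$ star forests, and bounded expansion is closed under bounded unions); so $\{B(G):G\in\mathscr{C}\}$, suitably coloured, has bounded expansion. Finally $G$ is transduced from this coloured skeleton by taking the identification vertices as the universe and putting $u^{\star}v^{\star}$ into the edge relation precisely when there exist $a\sim u^{\star}$ and $b\sim v^{\star}$ lying in a common copy $T_i$ with $\eta'(a,b)$ evaluated inside that copy; correctness uses that each $H_i$ is an \emph{induced} subgraph of $G$ together with the joint covering of pairs.

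\textbf{From structurally bounded expansion to low shrubdepth covers.} This is the substantial direction, and it rests on a normal form for transductions of bounded\nobreakdash-expansion classes: after preprocessing $\mathscr{B}$ by a bounded number of vertex copies together with a low\nobreakdash-treedepth colouring --- an operation that keeps the class of bounded expansion --- the interpretation $\mathsf{I}$ can be replaced by a \emph{quantifier\nobreakdash-free} interpretation $\mathsf{I}'$. Granting this, fix $p$, pick $H^{+}$ witnessing $G=\mathsf{I}(H^{+})$, pass to the augmented coloured graph $H'$ with $G=\mathsf{I}'(H')$, and apply Theorem~\ref{thm:BEcov} to the bounded\nobreakdash-expansion class of such $H'$: we obtain a cover of $H'$ by $C_p$ induced subgraphs $H_1',\dots,H_{C_p}'$ of treedepth at most $d_p$ in which every $p$\nobreakdash-subset of $V(H')$ is jointly covered. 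Since a quantifier\nobreakdash-free interpretation commutes with passing to induced subgraphs, $\mathsf{I}'(H_j')$ is exactly the induced subgraph of $G$ on $V(H_j')\cap\nu'(H')$; each such graph, being a quantifier\nobreakdash-free interpretation of a graph of treedepth at most $d_p$, belongs to a fixed class of bounded shrubdepth (again by the transduction characterisation of shrubdepth), the graphs $\mathsf{I}'(H_j')$ cover $G$, and every $p$\nobreakdash-subset of $V(G)$ is jointly covered because it was already jointly covered in $H'$. Hence $\mathscr{C}$ has low shrubdepth covers.

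\textbf{Main obstacle.} The difficulty concentrates entirely in the quantifier\nobreakdash-elimination step of the forward direction. A first\nobreakdash-order transduction is not local --- e.g.\ it turns a star (bounded expansion) into a clique (not bounded expansion, but of bounded shrubdepth), and one therefore cannot simply add all distance\nobreakdash-$\leq r$ edges to $\mathscr{B}$ and hope to stay in a bounded\nobreakdash-expansion class. The reduction to a quantifier\nobreakdash-free interpretation must instead go through the sparsity toolkit (transitive fraternal augmentations and generalised colouring numbers), the Gaifman locality of $\nu$ and $\eta$, and the fact that $r$\nobreakdash-local first\nobreakdash-order types of bounded quantifier rank can be recorded by a bounded colouring that does not spoil bounded expansion. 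Once this normal form is in place, both directions are the bookkeeping described above; the only secondary point needing care is the closure claim in the backward direction, namely that the identification gadget keeps the skeleton class of bounded expansion, which holds because the number of copies is a fixed constant and the added vertices form an independent set of bounded degree.
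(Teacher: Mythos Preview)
First, note that the paper does not give its own proof of this statement; it is quoted as a result of \cite{SBE_TOCL} and used as a black box. So there is nothing in the present paper to compare against, and what follows is an assessment of your argument on its own merits.

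Your forward direction (structurally bounded expansion $\Rightarrow$ low shrubdepth covers) follows the right strategy and matches the approach of \cite{SBE_TOCL}: reduce to a quantifier-free interpretation via the bounded-expansion quantifier-elimination machinery, then push low treedepth covers of the preimage class through it. You are correct that essentially all the work sits in the quantifier-elimination step.

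The backward direction has a genuine gap. You justify that the skeleton class $\{B(G):G\in\mathscr{C}\}$ has bounded expansion by asserting that ``bounded expansion is closed under bounded unions''. This closure property is \emph{false}: the $1$-subdivision of $K_n$ is the edge-union of two star forests (two-colour the pair of edges at each subdivision vertex; each colour class has treedepth~$2$), yet the class of $1$-subdivisions of cliques has unbounded $\nabla_1$. So the stated reason does not establish what you need.

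Worse, the conclusion itself can fail for the construction as you describe it, because you do not control the choice of the $T_i$. Take $\mathscr{C}$ to be the class of edgeless graphs (trivially admitting low shrubdepth covers with $C_2=2$ and $H_1=H_2=G$). For $G$ edgeless on $m^2$ vertices choose $T_1=T_2$ to be a disjoint union of $m$ copies of $K_{1,m}$, interpret $G$ as the edgeless graph on the leaves (so $\eta'$ is identically false and $\nu'$ selects the leaves), and identify the $j$th leaf of the $i$th star in $T_1$ with the $i$th leaf of the $j$th star in $T_2$. Taking as branch sets each star of $T_1$ together with its attached identification vertices (radius~$2$) and each star of $T_2$ alone (radius~$1$), one obtains $K_{m,m}$ as a $2$-shallow minor of $B(G)$, whence $\nabla_2(B(G))\geq m/2$. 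Thus $\{B(G)\}$ is not of bounded expansion for this (perfectly legitimate) instantiation of your scheme. The proof in \cite{SBE_TOCL} constructs the bounded-expansion preimage by a different route that does not hinge on the $p=2$ cover alone, and in particular does not rely on any ``closed under unions'' claim.
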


Recall that a class $\mathscr S$ has {\em bounded shrubdepth} if there
exist constants $m$ and $h$ such that for every graph $G\in\mathscr S$
there is a rooted tree $Y$ with set of leaves $L(Y)=V(G)$, a coloring
$c:L(Y)\rightarrow [m]$ and an assignment $v\mapsto f_v$ of a
symmetric function $f_v\colon [m]\times[m]\rightarrow\{0,1\}$ to each
internal node $v$ of~$Y$, in such a way that two vertices
$u,v\in V(G)$ are adjacent in $G$ if and only if
$f_{u\wedge_Y v}(c(u),c(v))=1$~\cite{Ganian2012, Ganian2017}. In
particular, the subgraph of $G$ induced by each single color class is a cograph. Since cographs are perfect, in particular we have $\chi(G)\leq m\,\omega(G)$. We deduce the following corollary
of \cref{thm:SBE}.

\begin{corollary}
\label{cor:SBEchi}
For every structurally bounded expansion class $\Cc$ there exists a
constant $C$ such that the vertex set of every $G\in\Cc$ can be
partitioned into at most $C$ classes, each inducing a cograph.
	
In particular, every structurally bounded expansion class is linearly
$\chi$-bounded.
\end{corollary}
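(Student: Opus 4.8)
The plan is to combine \cref{thm:SBE} with the structural description of classes of bounded shrubdepth recalled just before the corollary. First I would invoke \cref{thm:SBE} to conclude that $\Cc$ has low shrubdepth covers, and then apply the definition of low shrubdepth covers with $p=1$: this yields a constant $C_1$ and a class $\mathscr X_1$ of bounded shrubdepth such that every $G\in\Cc$ is the union of induced subgraphs $H_1,\dots,H_{C_1}$, each isomorphic to a member of $\mathscr X_1$, with $V(G)=\bigcup_{i=1}^{C_1}V(H_i)$.

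Next I would exploit the bounded shrubdepth of $\mathscr X_1$: there are constants $m$ and $h$ so that for each $H_i$ there is a colouring $c_i\colon V(H_i)\to[m]$ whose colour classes each induce a cograph in $H_i$. Since each $H_i$ is an \emph{induced} subgraph of $G$, every such colour class --- and every subset of it --- induces a cograph in $G$ as well, cographs being hereditary. Merging these colourings is then routine: for $v\in V(G)$ let $i(v)$ be the least index with $v\in V(H_{i(v)})$, and give $v$ the label $(i(v),c_{i(v)}(v))\in[C_1]\times[m]$. This partitions $V(G)$ into at most $C\coloneqq C_1\cdot m$ classes (a constant depending only on $\Cc$), and each class, being contained in a colour class of some $H_i$, induces a cograph in $G$. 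This proves the first assertion.

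For the ``in particular'' part, recall that cographs are perfect, so each of the $C$ classes $P$ satisfies $\chi(G[P])=\omega(G[P])\le\omega(G)$; summing over the classes yields $\chi(G)\le C\cdot\omega(G)$, which is precisely linear $\chi$-boundedness with ratio $C$.

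I do not expect a genuine obstacle here: the argument is essentially bookkeeping on top of \cref{thm:SBE}. The only point deserving a moment's care is the transfer of cograph-ness --- that a colour class of $H_i$ induces a cograph not merely in $H_i$ but in $G$, and that this persists after restricting to a sub-class --- which is immediate once one uses that the covering subgraphs $H_i$ are induced and that the class of cographs is closed under induced subgraphs.
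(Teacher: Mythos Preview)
Your proposal is correct and follows essentially the same route as the paper: invoke \cref{thm:SBE} to obtain low shrubdepth covers, use the colouring with $m$ colours guaranteed by bounded shrubdepth so that each colour class induces a cograph, and combine via the $p=1$ cover to get at most $C_1\cdot m$ parts; linear $\chi$-boundedness then follows from perfection of cographs. The paper leaves the argument implicit in the paragraph preceding the corollary, and you have simply spelled out the bookkeeping (in particular the transfer of cograph-ness through induced subgraphs) that the paper takes for granted.
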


Note that a class has bounded shrubdepth if and only if it can be
transduced from a class with bounded treedepth \cite{Ganian2012}.

In an effort to generalize low treedepth coverings further, classes
with low rankwidth covers have been studied in \cite{KwonPS20}.  As a
direct consequence of \cref{thm:main} and \cref{cor:SBEchi}, we have:

\thmrwcovchi*
\begin{proof}
 Let $\Cc$ be the class in question.
 Taking $p=1$ in the definition, for every graph $G\in \Cc$, we can partition the vertex set of $G$ into a bounded number of parts, each of which induces a subgraph that belongs to a class $\Dd$ that has bounded rankwidth and a stable edge relation. By \cref{thm:main}, $\Dd$ can be transduced from a class of bounded treewidth, hence it has structurally bounded expansion. By \cref{cor:SBEchi} we conclude that $\Dd$ is linearly $\chi$-bounded, so it follows that $\Cc$ is linearly $\chi$-bounded as well.
\end{proof}

\newcommand{\Oh}{{\mathcal{O}}}

It is known that the chromatic number of graphs
with (linear) cliquewidth at most $k$ cannot be computed in
$f(k)n^{2^{o(k)}}$ time for any computable function $f$, unless {\sf ETH}
fails~\cite{golovach2018cliquewidth}. However, it follows from what
precedes that for each class $\Cc$ with bounded rankwidth and
stable edge relation there is an $\Oh(n^3)$-time algorithm, which gives
a constant factor approximation for the chromatic number. Indeed, given a graph $G$ from the considered class, we can first use the result of Oum and Seymour~\cite{oum2006approximating} to compute in cubic time a $k$-NLC-tree of $G$ for some constant $k$ (or any equivalent decomposition, such as a clique expression). Then, using standard dynamic programming we can compute the clique number of the graph in linear time. By \cref{thm:rwcovchi}, this clique number is a constant-factor approximation of the chromatic number.

We also deduce the following result.
\thmrwcov*
\begin{proof}
  If a class has structurally bounded expansion, then it has low
  shrubdepth covers~\cite{GajarskyHOLR16}, which are special instances of low rankwidth
  covers. Moreover, as bounded expansion classes are nowhere dense,
  they are monadically stable \cite{adler2014interpreting}, hence
  structurally bounded expansion classes have a stable edge relation.
	
  Conversely, assume a class $\Cc$ has low rankwidth covers and stable
  edge relation. Then for each integer~$p$ there exists a constant
  $C_p$ and a class $\mathscr R_p$ with bounded rankwidth such that
  each graph $G\in\Cc$ can be covered by $C_p$ induced subgraphs
  $H_1,\dots,H_{C_p}\in\mathscr R_p$ in such a way that every subset
  of~$p$ vertices of $G$ are jointly covered by some $H_i$
  ($1\leq i\leq C_p$). As $\Cc$ has a stable edge relation, it
  excludes some half-graph~$F$. Obviously, we can require that
  $\mathscr R_p$ contains only induced subgraphs of graphs in
  $\Cc$. Thus graphs in $\mathscr R_p$ exclude $F$ as well, so
  $\mathscr R_p$ has a stable edge relation. By \cref{thm:main}, $\mathscr R_p$ can be transduced from a class with bounded
  treewidth, hence $\mathscr R_p$ has structurally bounded expansion. It follows from
  \cref{thm:SBE} that there exists $C_p'$ and a class $\mathscr T_p$
  with bounded shrubdepth such that each graph $H_i$ can be covered by
  $C_p'$ induced subgraphs $T_{i,1},\dots,T_{i,C_p'}\in\mathscr T_p$
  in such a way that every subset of $p$ vertices of $H_i$ are jointly
  covered by some $T_{i,j}$. We deduce that $\Cc$ has low shrubdeth
  covers, so it has structurally bounded expansion.
\end{proof}

In \cite{SBE_TOCL}, it was stressed that one of the main difficulties
arising when considering low shrubdepth covers of structurally bounded
expansion classes (whose existence is asserted in \cref{thm:SBE}) is
that we do not know if they may be computed in polynomial time (and
that polynomial-time computation of these covers for $p=2$ ensures
that FO-model checking is FPT on the class). A consequence of this
paper is that for a class with structurally bounded treewidth (that is, a class with bounded rankwidth and stable edge relation), and for
each integer $p$, low shrubdepth covers with parameter $p$ can be
computed in polynomial time. Such a property also holds for
structurally bounded degree classes (that is, transductions of classes
with bounded degree)~\cite{GajarskyHOLR16}, as well as classes
obtained from bounded expansion classes by a transduction consisting a
bounded number of subgraph complementations
\cite{gajarsky2018recovering}. We conjecture that this holds in
general.

\begin{restatable}{conjecture}{conjcov}
  For every structurally bounded expansion class $\Cc$, computing a
  low shrubdepth cover of a graph $G\in\Cc$ at depth $p$ is fixed
  parameter tractable when parameterized by $p$. 
\end{restatable}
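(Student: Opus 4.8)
The plan is to reduce the conjecture to a single \emph{model-recovery} problem and to observe that, given a solution to that problem, everything else is already effective. Fix, once and for all, a bounded expansion class $\mathscr B$ and a transduction $\Tr=(\Sigma_\Tr,\mathsf I_\Tr)$ that is a transduction from $\mathscr B$ onto $\Cc$, as provided by the definition of structurally bounded expansion. Inspecting the proof of \cref{thm:SBE}, a low shrubdepth cover of a graph $G\in\Cc$ is produced from the following data: a graph $H\in\mathscr B$ and a $\Sigma_\Tr$-expansion $H^+$ of $H$ with $\mathsf I_\Tr(H^+)=G$; a low treedepth cover of $H$ (carrying the colours of $H^+$) at a suitable depth $p'=p'(p,\Tr)$, obtained from a low treedepth colouring of $H$, which exists since $\mathscr B$ has bounded expansion (\cref{thm:BEcov}); and the images under $\mathsf I_\Tr$ of the parts of that cover, each of which is an induced subgraph of $G$ of bounded shrubdepth, being transduced from a bounded treedepth graph (and bounded shrubdepth is exactly the transduction-closure of bounded treedepth). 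Low treedepth colourings of bounded expansion graphs are computable in linear fixed-parameter time by the sparsity toolbox, and the pushforward through $\mathsf I_\Tr$ is a fixed, depth-bounded manipulation; so every step here except the first --- computing the pair $(H,H^+)$ --- runs in time $f(p)\cdot n^{O(1)}$. Hence the conjecture follows as soon as one can, given $G\in\Cc$, compute in fixed-parameter time (parameterized by $p$, or even in polynomial time) a preimage $H\in\mathscr B$ together with the colouring $H^+$ witnessing $\mathsf I_\Tr(H^+)=G$.

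Recovering such a sparse model is precisely where I expect the real difficulty to lie, and is also why the statement is only conjectured. This recovery problem is the algorithmic heart of the whole programme: as remarked in the discussion following \cref{thm:rwcov}, even a polynomial-time algorithm computing covers at depth $p=2$ would already yield fixed-parameter tractable first-order model checking on $\Cc$, a notoriously open question. So far sparse models can be recovered only in special cases: for classes of bounded rankwidth with a stable edge relation this is exactly what the proof of \cref{thm:main} does (the structure $J^\star_\Tf$, together with \cref{cl:edge-relation}, reconstructs $G$ from a bounded treewidth host, all in polynomial time); analogous recovery is known for structurally bounded degree classes~\cite{GajarskyHOLR16} and for graphs obtained from bounded expansion classes by boundedly many subgraph complementations~\cite{gajarsky2018recovering}. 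Two routes look most promising in general. The first is the iterated colour-refinement, quantifier-elimination strategy behind \cref{thm:ms}: peel $\mathsf I_\Tr$ one layer at a time, at each step adjoining boundedly many unary predicates that ``explain'' the current layer, until only a sparse relational core remains; monadic dependence --- which holds here, since bounded expansion classes are monadically stable --- should again be what keeps each peeling step under control. The second is a structural-decomposition route, generalizing the factorization-and-blocks analysis of \cref{thm:main} from rankwidth to arbitrary bounded expansion hosts.

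Granting the recovery step, the finishing argument is routine and I would carry it out as follows. Given $G\in\Cc$ and $p$, first compute $(H,H^+)$ with $\mathsf I_\Tr(H^+)=G$. Next, via Gaifman's theorem applied to the (fixed) defining formulas of $\mathsf I_\Tr$ and a standard precomputation of local types as additional colours of $H$ --- computable in linear fixed-parameter time because $H$ lies in a bounded expansion class, and with the basic local sentences of the Gaifman normal form evaluated once on $H^+$ and folded into the interpretation --- one arranges that the edge relation of $G$ on a vertex set $S$ is determined by $H^+$ restricted to the union of the bounded-length shortest paths among the $\mathsf I_\Tr$-preimage of $S$. That union has size bounded by a function of $|S|$ and $\Tr$, so choosing $p'=p'(p,\Tr)$ accordingly and computing a low treedepth cover of the enriched $H$ at depth $p'$ (again linear fixed-parameter time, by \cref{thm:BEcov} and the sparsity toolbox), the image under $\mathsf I_\Tr$ of each part of the cover is an induced subgraph of $G$ of bounded shrubdepth, and together these images cover every $p$-subset of $V(G)$. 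The number of parts, the shrubdepth bound, and $p'$ depend only on $p$ and $\Cc$, and this second phase runs in time $f(p)\cdot n^{O(1)}$; so the overall running time is that of the recovery step plus $f(p)\cdot n^{O(1)}$, which is fixed-parameter tractable exactly when recovery is. Thus the remaining mathematical content is entirely concentrated in the model-recovery problem of the first paragraph.
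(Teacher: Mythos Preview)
The statement you are addressing is a \emph{conjecture} in the paper, not a theorem; the paper offers no proof and explicitly leaves it open. So there is nothing to compare your argument against on the paper's side.

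Your proposal is not a proof either, and to your credit you say so: you reduce the conjecture to the \emph{model-recovery} problem (given $G\in\Cc$, compute a bounded-expansion preimage $H$ and a colouring $H^+$ with $\mathsf I_\Tr(H^+)=G$) and then observe, correctly, that the remaining pipeline (low treedepth colourings of $H$, pushforward through $\mathsf I_\Tr$) is already effective. This reduction is exactly the reasoning the paper alludes to in the paragraph preceding the conjecture, where it notes that covers \emph{can} be computed in the special cases of structurally bounded treewidth (via \cref{thm:main}), structurally bounded degree, and bounded subgraph complementations, precisely because model recovery is solved there. So your outline matches the paper's implicit rationale for stating the conjecture, but it does not advance beyond it: the model-recovery step is the whole content of the conjecture, and you leave it open.

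One technical caution on your ``finishing argument'': even granting recovery of $(H,H^+)$, the Gaifman-locality step you sketch is more delicate than you suggest. The interpretation $\mathsf I_\Tr$ may restrict the universe via $\nu$, so elements of $H$ outside $\nu(H^+)$ can still influence adjacencies in $G$; one must cover not just the preimages of $p$-subsets of $V(G)$ but their bounded-radius neighbourhoods in $H$, and then argue that the $\mathsf I_\Tr$-image of a bounded-treedepth piece is an \emph{induced} subgraph of $G$. This is handled in \cite{SBE_TOCL}, but it is not quite as routine as your sketch implies.
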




\section{Monadic dependence meets stability}
\label{sec:ms}

In this section we prove \cref{thm:ms}, which shows that the equivalence of the first three
conditions of \cref{thm:main} (and \cref{thm:lrw}) is in fact a more
general phenomenon that occurs in every monadically dependent graph
class.
In our proof, we shall need the following classical
theorem.
\begin{theorem}[Canonical Ramsey Theorem \cite{erdos1950combinatorial}]
\label{thm:canonical}
For every integer $n$ there exists an integer~$N$ with the following
property: Suppose that all pairs $(a,b)$ of integers with
$1\leq a<b\leq N$ are arbitrarily distributed into classes. Then there
is an increasing sequence of integers $1\leq x_1<x_2<\dots<x_n\leq N$
such that one of the following four sets of conditions holds, where it
is assumed that $1\leq \alpha<\beta\leq n$;
$1\leq \gamma<\delta\leq n$:
\begin{enumerate}[nosep]
\item All $(x_\alpha,x_\beta)$ belong to the same class.
\item $(x_\alpha,x_\beta)$ and $(x_\gamma,x_\delta)$ belong to the
  same class if, and only if, $\alpha=\gamma$.
\item $(x_\alpha,x_\beta)$ and $(x_\gamma,x_\delta)$ belong to the
  same class if, and only if, $\beta=\delta$.
\item $(x_\alpha,x_\beta)$ and $(x_\gamma,x_\delta)$ belong to the
  same class if, and only if, $\alpha=\gamma$; $\beta=\delta$.
\end{enumerate}
\end{theorem}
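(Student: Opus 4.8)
The plan is to deduce the theorem from the ordinary finite Ramsey theorem for $4$-uniform hypergraphs. Write $\chi$ for the given distribution of the pairs $\{a,b\}$ with $1\le a<b\le N$ into classes, viewed as a map with $\chi(S)=\chi(T)$ meaning ``$S$ and $T$ lie in the same class''. For a $4$-element set $\{a<b<c<d\}\subseteq[N]$ define its \emph{equality pattern} $\Psi(\{a,b,c,d\})$ to record, for each of the $\binom{6}{2}=15$ unordered pairs of two-element subsets of $\{a,b,c,d\}$, whether $\chi$ places them in the same class. This is a coloring of $\binom{[N]}{4}$ by at most $2^{15}$ colors, so for $N=N(n)$ large enough the ordinary Ramsey theorem produces a set $Y\subseteq[N]$ of size $\max(n,6)$ all of whose $4$-subsets carry the same equality pattern. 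Since the eventual conclusion is inherited by subsets, it suffices to prove that $\chi$ restricted to the pairs inside $Y$ has one of the four stated forms; the required sequence $x_1<\dots<x_n$ is then any $n$ elements of $Y$.

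Homogeneity of $Y$ says exactly that whether $\chi$ identifies two distinct pairs from $Y$ depends only on their \emph{configuration type}, i.e.\ on how the (at most four) endpoints interleave: the two pairs may share their smaller endpoint, share their larger endpoint, be ``chained'' (the common point being the larger endpoint of one and the smaller endpoint of the other), or be disjoint. The core of the proof is a finite case analysis on these few yes/no answers showing that exactly one of the following holds on $Y$: (1) all pairs are $\chi$-equal; (2) two pairs are $\chi$-equal iff they have the same smaller endpoint; (3) two pairs are $\chi$-equal iff they have the same larger endpoint; (4) no two distinct pairs are $\chi$-equal. First, if some two disjoint pairs of $Y$ are $\chi$-equal, then linking any two pairs through a couple of spare elements of $Y$ lying to the right of both (this is where $|Y|\ge 6$ is used) shows that all pairs are $\chi$-equal, i.e.\ case (1). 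Otherwise no two disjoint pairs are $\chi$-equal; one then checks that ``chained'' pairs cannot be $\chi$-equal either (two chained identifications would compose into a disjoint one), and that ``same smaller endpoint'' and ``same larger endpoint'' cannot both be forced identifications (their composition would again be disjoint). Hence the surviving identifications are either exactly ``same smaller endpoint'' --- giving case (2), once one observes that distinct smaller endpoints must then yield distinct classes, else a ``same larger endpoint'' identification would appear --- or, symmetrically, exactly ``same larger endpoint'', giving (3), or none at all, giving (4).

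The only genuine work is the case analysis above. It is entirely elementary, but it has to be done carefully: one must enumerate the configuration types of a pair of pairs, chase $\chi$-identifications through auxiliary elements of $Y$, and verify that the four surviving patterns are pairwise incompatible and jointly exhaustive. The bound on $N(n)$ that comes out is of tower type, inherited from the $4$-uniform hypergraph Ramsey number; as only the existence of $N(n)$ is needed in what follows, no attempt is made to optimize it.
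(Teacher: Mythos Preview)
The paper does not prove this statement; it is quoted as the classical Canonical Ramsey Theorem of Erd\H{o}s and Rado and invoked as a black box inside the proof of \cref{thm:ms}. So there is no argument in the paper to compare your proposal against.

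That said, your proposal is the standard proof and is correct in outline. One small point worth being aware of: you treat ``disjoint'' as a single configuration type, but on a $4$-set there are three order types of disjoint pairs (separated $a<b<c<d$, interleaved $a<c<b<d$, nested $a<d$ with $b<c$ inside), and homogeneity of $Y$ a priori gives three independent yes/no bits for these. Your linking argument ``through a couple of spare elements of $Y$ lying to the right of both'' literally handles only the separated sub-type; one has to observe in addition that nested-equal (link through one spare element on each side) and interleaved-equal (chain three interleaved identifications on six consecutive elements to obtain a separated one) each also force everything to collapse to case~(1). This is exactly the kind of chasing you acknowledge in your final paragraph, so it is not a gap, but the case analysis is over six order types rather than four.
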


Let us now proceed to the proof of \cref{thm:ms}, restated below.
\msthm*

\newcommand{\N}{\mathbb{N}}

\begin{proof}
  Implications
  \enumref{enum:ms}$\Rightarrow$\enumref{enum:s}$\Rightarrow$\enumref{enum:hg}
  are obvious, so it remains to prove the following: if a class $\Cc$
  is monadically dependent but also monadically unstable, then in fact
  $\Cc$ has an unstable edge relation. Hence, assume that~$\Cc$ is
  monadically unstable.  In the following, we write $\binom{[n]}{2}$
  for the set of all pairs of integers $(i,j)$ such that
  $1\leq i<j\leq n$.
 
  A formula $\alpha(\tup x)$ is {\em{functional}} on a class if
  there is a variable $x\in \tup x$ such that for every $G$ in the class and
  $u\in V(G)$, there exists at most one tuple
  $\tup u\in V(G)^{\tup x}$ such that $G\models \alpha(\tup u)$ and
  $\tup u(x)=u$.  We shall say that a triple of formulas
  $\tau=(\alpha(\tup x),\beta(\tup y),\eta(\tup x,\tup y))$ in a
  monadic vocabulary of graphs is {\em{problematic}} if there exists a
  monadic expansion $\Cc^+$ of $\Cc$, whose vocabulary contains the
  vocabularies of $\alpha$, $\beta$, and $\eta$, such that~$\alpha$
  and $\beta$ are functional on $\Cc^+$, and for every $n\in \N$ there
  exists $G\in \Cc^+$ and tuples
  $\tup a_1,\ldots,\tup a_n\in V(G)^{\tup x}$ and
  $\tup b_1,\ldots,\tup b_n\in V(G)^{\tup y}$ satisfying the
  following:
 \begin{itemize}[nosep]
 \item for all $i\in [n]$ we have $G\models \alpha(\tup a_i)$ and
   $G\models \beta(\tup b_i)$; and
 \item for all $(i,j)\in \binom{[n]}{2}$ we have
   $G\models \eta(\tup a_i,\tup b_j)$ and
   $G\models \neg \eta(\tup a_j,\tup b_i)$.
 \end{itemize}
 Note that we do not specify whether $\eta(\tup a_i, \tup b_i)$ should
 hold or not in $G$.  The pair of sequences $\tup a_1,\ldots,\tup a_n$
 and $\tup b_1,\ldots,\tup b_n$ as above shall be called a
 {\em{$\tau$-ladder}} of length $n$ in $G$.  Observe that if in graphs
 from $\Cc^+$ one can find arbitrarily long $\tau$-ladders, then
 $\eta$ is unstable on $\Cc^+$.
 
 As $\Cc$ is monadically unstable, by \cref{thm:stable_trans} we know
 that there is a transduction from $\Cc$ onto the class of all finite
 half-graphs.  By the definition of a transduction, this implies that
 there exists a monadic expansion $\Cc^+$ of $\Cc$ and a formula
 $\phi(x,y)$ with two free variables $x$ and $y$ such that $\phi$ is
 unstable on $\Cc^+$.  By taking $\alpha(x)$ and $\beta(y)$ to be true
 formulas, we conclude the following.
 
 \begin{zclaim}
   There exists a problematic triple of formulas.
 \end{zclaim}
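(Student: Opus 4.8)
The plan is to produce an explicit problematic triple. Recall that the text has just established, via \cref{thm:stable_trans} and the definition of a transduction, that there is a monadic expansion $\Cc^+$ of $\Cc$ and a formula $\phi(x,y)$ in two free variables that is unstable on $\Cc^+$. I would then set $\tau \coloneqq (\alpha(x),\beta(y),\phi(x,y))$, where $\alpha(x)$ and $\beta(y)$ are the trivially true formulas $x=x$ and $y=y$, keep working over the same monadic expansion $\Cc^+$, and verify the two requirements in the definition of \emph{problematic}.

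First I would check functionality. Since $\alpha$ has the single free variable $x$, for any $G\in\Cc^+$ and any vertex $u$ of $G$ the only length-one tuple satisfying $\alpha$ whose coordinate equals $u$ is $u$ itself; hence $\alpha$ is functional on $\Cc^+$, and symmetrically so is $\beta$. Second I would build the ladders. Fix $n\in\N$. Instability of $\phi$ on $\Cc^+$ supplies $G\in\Cc^+$ together with vertices $a_1,\dots,a_n$ and $b_1,\dots,b_n$ such that $G\models\phi(a_i,b_j)$ if and only if $i\le j$. For every $i$ we trivially have $G\models\alpha(a_i)$ and $G\models\beta(b_i)$. For every pair $(i,j)$ with $1\le i<j\le n$ we have $i\le j$, hence $G\models\phi(a_i,b_j)$; and we have $j>i$, so $j\le i$ fails and thus $G\models\neg\phi(a_j,b_i)$. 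Therefore $(a_1,\dots,a_n)$ and $(b_1,\dots,b_n)$ form a $\tau$-ladder of length $n$. As $n$ was arbitrary, $\tau$ is problematic.

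I do not anticipate any real obstacle: this step merely repackages the unstable binary formula delivered by \cref{thm:stable_trans} into the bookkeeping shape used in the rest of the proof. The one point deserving a moment's attention is the mismatch between the inclusive convention $i\le j$ in the definition of instability and the strict convention $i<j$ (hidden in $\binom{[n]}{2}$) in the definition of a $\tau$-ladder; this is harmless precisely because the definition of a problematic triple leaves the diagonal entries $\phi(a_i,b_i)$ unconstrained, so we never need to decide whether $\phi(a_i,b_i)$ holds.
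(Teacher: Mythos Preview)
Your proposal is correct and matches the paper's approach exactly: the paper likewise takes $\alpha(x)$ and $\beta(y)$ to be the trivially true formulas and uses the unstable $\phi(x,y)$ provided by \cref{thm:stable_trans} as $\eta$. Your write-up simply spells out the easy verification that the paper leaves implicit, including the nice remark about the unconstrained diagonal.
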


 We now investigate the properties of problematic formulas.

 \begin{zclaim}
   If $\tau=(\alpha(\tup x),\beta(\tup y),\eta(\tup x,\tup y))$ is
   problematic, then so is
   $\tau'=(\alpha(\tup x),\beta(\tup y),\neg \eta(\tup x,\tup y))$.
 \end{zclaim}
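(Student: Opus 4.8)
The plan is to reverse a ladder. Suppose $\tau=(\alpha(\tup x),\beta(\tup y),\eta(\tup x,\tup y))$ is problematic, and let $\Cc^+$ be a monadic expansion witnessing this: $\alpha$ and $\beta$ are functional on $\Cc^+$, the vocabulary of $\Cc^+$ contains those of $\alpha,\beta,\eta$, and for every $n$ there is a $\tau$-ladder of length $n$ in some graph from $\Cc^+$. I would use the very same expansion $\Cc^+$ for $\tau'=(\alpha(\tup x),\beta(\tup y),\neg\eta(\tup x,\tup y))$: the formulas $\alpha$ and $\beta$ are unchanged and hence still functional, and $\neg\eta$ is a formula over the same vocabulary and with the same free variables as $\eta$, so the only thing left to check is that $\Cc^+$ contains arbitrarily long $\tau'$-ladders.

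To see this, fix $n$ and take a $\tau$-ladder $\tup a_1,\dots,\tup a_n$, $\tup b_1,\dots,\tup b_n$ of length $n$ in some $G\in\Cc^+$; thus $G\models\alpha(\tup a_i)\wedge\beta(\tup b_i)$ for all $i\in[n]$, and $G\models\eta(\tup a_i,\tup b_j)\wedge\neg\eta(\tup a_j,\tup b_i)$ for all $(i,j)\in\binom{[n]}{2}$. Define $\tup a'_i\coloneqq\tup a_{n+1-i}$ and $\tup b'_i\coloneqq\tup b_{n+1-i}$ for $i\in[n]$. The unary requirements for a $\tau'$-ladder, namely $G\models\alpha(\tup a'_i)\wedge\beta(\tup b'_i)$, are immediate. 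For the binary ones, fix $(i,j)\in\binom{[n]}{2}$ and set $p\coloneqq n+1-i$ and $q\coloneqq n+1-j$, so that $q<p$ and hence $(q,p)\in\binom{[n]}{2}$; therefore $G\models\eta(\tup a_q,\tup b_p)$ and $G\models\neg\eta(\tup a_p,\tup b_q)$. Rewriting in the primed notation, $\eta(\tup a'_j,\tup b'_i)=\eta(\tup a_q,\tup b_p)$ holds in $G$, so $G\models\neg\neg\eta(\tup a'_j,\tup b'_i)$; and $\eta(\tup a'_i,\tup b'_j)=\eta(\tup a_p,\tup b_q)$ fails in $G$, so $G\models\neg\eta(\tup a'_i,\tup b'_j)$. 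Hence $\tup a'_1,\dots,\tup a'_n$, $\tup b'_1,\dots,\tup b'_n$ is a $\tau'$-ladder of length $n$ in $G$. As $n$ was arbitrary, $\Cc^+$ witnesses that $\tau'$ is problematic.

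I do not expect any real obstacle here; the argument is a one-line symmetry observation. The only point that deserves care is that $\alpha$ and $\beta$ play asymmetric roles, so one cannot simply interchange the two sides of the ladder — but reversing each of the two sequences in place turns the pattern ``$\eta$ strictly above the diagonal, $\neg\eta$ strictly below'' into its negation, which is exactly a $\neg\eta$-ladder, and the (unspecified) behaviour on the diagonal is irrelevant in both cases.
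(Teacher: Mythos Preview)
Your proof is correct and follows exactly the paper's approach: the paper's proof is the single sentence ``It suffices to observe that reversing both sequences in a $\tau$-ladder yields a $\tau'$-ladder,'' and you have spelled out precisely this reversal with the index substitution $i\mapsto n+1-i$.
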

 \begin{claimproof}
   It suffices to observe that reversing both sequences in a
   $\tau$-ladder yields a $\tau'$-ladder.
 \end{claimproof}

 \begin{zclaim}
   If the triple
   $\tau=(\alpha(\tup x),\beta(\tup y),\eta_1(\tup x,\tup y)\vee
   \eta_2(\tup x,\tup y))$ is problematic, then at least one of the
   triples
   $\tau_1=(\alpha(\tup x),\beta(\tup y),\eta_1(\tup x,\tup y))$ and
   $\tau_2=(\alpha(\tup x),\beta(\tup y),\eta_2(\tup x,\tup y))$ is
   problematic.
 \end{zclaim}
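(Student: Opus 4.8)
The plan is to extract, from an arbitrarily long $\tau$-ladder, an arbitrarily long sub-ladder that witnesses one of $\tau_1$ or $\tau_2$ being problematic, via a two-colour Ramsey argument. I would fix once and for all a monadic expansion $\Cc^+$ of $\Cc$ witnessing that $\tau$ is problematic — so that $\alpha$ and $\beta$ are functional on $\Cc^+$ — and this same $\Cc^+$ will serve as the witness for whichever of $\tau_1,\tau_2$ ends up being problematic (its vocabulary already contains those of $\alpha$, $\beta$, $\eta_1$ and $\eta_2$). Given a target length $m$, I would take $n$ large enough that every $2$-colouring of $\binom{[n]}{2}$ admits a monochromatic subset of size $m$ (the Ramsey number for pairs; this also follows from \cref{thm:canonical}), and a $\tau$-ladder $\tup a_1,\dots,\tup a_n$, $\tup b_1,\dots,\tup b_n$ in some $G\in\Cc^+$.

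The key point I would use is that the negative side of a $\tau_c$-ladder comes for free: for $(i,j)\in\binom{[n]}{2}$ the $\tau$-ladder gives $G\models\neg(\eta_1\vee\eta_2)(\tup a_j,\tup b_i)$, hence $G\models\neg\eta_1(\tup a_j,\tup b_i)$ and $G\models\neg\eta_2(\tup a_j,\tup b_i)$ both hold. So only the positive side requires work, and there $G\models(\eta_1\vee\eta_2)(\tup a_i,\tup b_j)$ already guarantees that at least one of $\eta_1,\eta_2$ holds at $(\tup a_i,\tup b_j)$. Accordingly I would colour a pair $(i,j)\in\binom{[n]}{2}$ by colour $1$ if $G\models\eta_1(\tup a_i,\tup b_j)$, and by colour $2$ otherwise (so that $G\models\eta_2(\tup a_i,\tup b_j)$ in the latter case). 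Taking a monochromatic subset $\{i_1<\dots<i_m\}$ of colour $c$, the tuples $\tup a_{i_1},\dots,\tup a_{i_m}$ and $\tup b_{i_1},\dots,\tup b_{i_m}$ form a $\tau_c$-ladder of length $m$ in $G$: the conditions on $\alpha,\beta$ are inherited, $G\models\eta_c(\tup a_{i_p},\tup b_{i_q})$ for $p<q$ holds by the choice of colour, and $G\models\neg\eta_c(\tup a_{i_q},\tup b_{i_p})$ holds by the observation above.

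To finish I would fix the colour uniformly. The construction produces, for every $m$, a $\tau_{c(m)}$-ladder of length $m$ with $c(m)\in\{1,2\}$; by the pigeonhole principle some $c^\star\in\{1,2\}$ satisfies $c(m)=c^\star$ for infinitely many $m$, and since any $\tau_{c^\star}$-ladder contains $\tau_{c^\star}$-ladders of every smaller length, $\Cc^+$ contains $\tau_{c^\star}$-ladders of all lengths. With $\alpha,\beta$ functional on $\Cc^+$, this is precisely the statement that $\tau_{c^\star}$ is problematic, which proves the claim.

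I do not expect a genuine obstacle here: this is a routine extraction by Ramsey's theorem for pairs. The two things to be careful about are noticing that the negative half of the ladder condition is inherited automatically when one replaces $\eta_1\vee\eta_2$ by a single disjunct, and arranging the final pigeonhole so that the chosen disjunct does not depend on the ladder length.
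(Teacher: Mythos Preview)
Your proposal is correct and follows essentially the same approach as the paper: apply Ramsey's theorem to the positive side of the ladder (where $\eta_1\vee\eta_2$ holds) and observe that the negative side is inherited automatically for each disjunct. Your write-up is in fact slightly more careful than the paper's, which elides the final pigeonhole step fixing the disjunct across all lengths with a casual ``say the first one by symmetry''.
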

 \begin{claimproof}
   By assumption, there is a monadic expansion $\Cc^+$ of $\Cc$ such
   that there are arbitrarily long $\tau$-ladders in graphs from
   $\Cc$.  Suppose $\tup a_1,\ldots,\tup a_n$ and
   $\tup b_1,\ldots,\tup b_n$ is such a $\tau$-ladder in some
   $G\in \Cc$.  Observe that for all $(i,j)\in \binom{[n]}{2}$, we
   have $G\models \eta_1(\tup a_i,\tup b_j)$ or
   $G\models \eta_2(\tup a_i,\tup b_j)$.  By Ramsey's theorem and
   since $n$ can be chosen arbitrarily large, by restricting attention
   to a sub-ladder we may assume that one of these cases holds for
   every pair $(i,j)\in \binom{[n]}{2}$, say the first one by
   symmetry.  However, for all $(i,j)\in \binom{[n]}{2}$ we also have
   $G\models \neg(\eta_1(\tup a_j,\tup b_i)\vee \eta_2(\tup a_j,\tup
   b_i))$, which implies $G\models \neg \eta_1(\tup a_j,\tup b_i)$.
   We conclude that $\tup a_1,\ldots,\tup a_n$ and
   $\tup b_1,\ldots,\tup b_n$ form a $\tau_1$-ladder of length $n$.
   As $n$ can be chosen arbitrarily large, $\tau_1$ is problematic.
 \end{claimproof}

%

 \begin{zclaim}
   If a triple
   $\tau=(\alpha(\tup x),\beta(\tup y),\eta(\tup x,\tup y))$ is
   problematic and
   $\eta(\tup x,\tup y)=\exists z\ \zeta(\tup x,\tup y,z)$, then there
   is a problematic triple of the form
   $\tau'=(\alpha'(\tup x'),\beta'(\tup y'),\zeta(\tup x',\tup y'))$
   where either $(\tup x',\tup y')=(\tup x\cup \{z\},\tup y)$ or
   $(\tup x',\tup y')=(\tup x,\tup y\cup \{z\})$.
 \end{zclaim}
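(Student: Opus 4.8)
The plan is to start from a problematic triple $\tau=(\alpha(\bar x),\beta(\bar y),\eta(\bar x,\bar y))$ with $\eta(\bar x,\bar y)=\exists z\,\zeta(\bar x,\bar y,z)$, fix a monadic expansion $\Cc^+$ witnessing that $\tau$ is problematic (so $\alpha,\beta$ are functional on $\Cc^+$, say via variables $x_0\in\bar x$ and $y_0\in\bar y$, and $\Cc^+$ contains arbitrarily long $\tau$-ladders), and then, for each target length $m$, build a $\tau'$-ladder of length $m$ over a fixed finite monadic expansion of $\Cc$, where $\tau'=(\alpha',\beta',\zeta)$ has $z$ moved to one side and $\alpha',\beta'$ functional. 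Since only finitely many new unary symbols are introduced, this proves $\tau'$ is problematic. Concretely, given $m$, take a $\tau$-ladder $\bar a_1,\dots,\bar a_N,\bar b_1,\dots,\bar b_N$ in some $G\in\Cc^+$ with $N\gg m$, and for $i<j$ choose a witness $c_{ij}$ with $G\models\zeta(\bar a_i,\bar b_j,c_{ij})$. The cheap but crucial observation is that for $i<j$ the original non-edge $G\models\neg\eta(\bar a_j,\bar b_i)$ unfolds to $G\models\forall z\,\neg\zeta(\bar a_j,\bar b_i,z)$; hence no matter which vertex we later append to $\bar a_j$ or to $\bar b_i$, the ``below-diagonal'' condition of a $\tau'$-ladder holds automatically. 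So the whole task is to re-express the above-diagonal witnesses as a definable function of a single side.

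Next I would prune the index set by two Ramsey-type steps, keeping the length $\gg m$. First, iterated finite Ramsey to make $(\bar a_i,\bar b_i)_i$ locally order-indiscernible: the atomic type (in the monadic vocabulary, including all coincidences of coordinates and all adjacencies) of any bounded sub-tuple depends only on the order type of its indices. Second, the Canonical Ramsey Theorem (\cref{thm:canonical}) applied to the colouring of pairs $(i,j)$, $i<j$, by the atomic type of $c_{ij}$ over the bounded window $\bar a_{i-1},\bar a_i,\bar a_{i+1},\bar b_{j-1},\bar b_j,\bar b_{j+1}$, together with the coincidence pattern of $c_{ij}$ with the neighbouring witnesses $c_{i\pm1,j},c_{i,j\pm1}$. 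This colouring uses a bounded palette, so its fourth canonical pattern (colour injective on pairs) cannot occur on a long sub-ladder; thus on a sub-ladder we may assume this data depends only on $i$, or only on $j$ (or is constant).

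Now the case analysis splits on what the stabilized type says about $c_{ij}$. If it says $c_{ij}$ equals some fixed coordinate $x_k$ of $\bar a_i$, then $c_{ij}=\bar a_i[k]$ does not depend on $j$; put $\bar x'=\bar x\cup\{z\}$, $\alpha'(\bar x,z):=\alpha(\bar x)\wedge z=x_k$, $\bar a_i':=\bar a_i\cup\{\bar a_i[k]\}$. Then $\alpha'$ is functional via $x_0$, for $i<j$ we have $\zeta(\bar a_i',\bar b_j)\equiv\zeta(\bar a_i,\bar b_j,c_{ij})$ which holds, and below the diagonal $\zeta$ fails for free; so $(\bar a_i',\bar b_j)_{i,j}$ is a $\tau'$-ladder of length $m$. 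The symmetric case attaches $z$ to $\bar y$. The remaining, hard case is that the stabilized type declares $c_{ij}$ \emph{fresh}, i.e.\ not equal to any window coordinate. Here monadic dependence of $\Cc$ (\cref{thm:stable_dep}) must be used. The intended route is: mark the chosen witnesses by a new unary predicate $M$, and argue that after one further Ramsey reduction we may assume $c_{ij}$ depends on a single index, say $c_{ij}=c_i$; indeed, if the witnesses genuinely varied in both indices while remaining fresh and pairwise distinct, then (after a Ramsey step making $c_{ij}$ the unique marked witness of $(\bar a_i,\bar b_j)$) marking $\{c_{ij}:(i,j)\in B\}$ for an arbitrary $B\subseteq\binom{[m']}{2}$ and using $\exists z\,(M_B(z)\wedge\zeta(\bar x,\bar y,z))$ would transduce all finite graphs, a contradiction. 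Once $c_{ij}=c_i$, one uses that by local indiscernibility the atomic type of $(c_i,\bar a_i)$ is a fixed type $p$, and then — after recording, if necessary, the fibres of $i\mapsto c_i$ by finitely many unary predicates on $\{x_0$-coordinates of the $\bar a_i\}$ and marking the witness values — a first-order formula recovers $c_i$ from $\bar a_i$, so we may take $\bar a_i':=\bar a_i\cup\{c_i\}$ with the corresponding functional $\alpha'$ and conclude as before; symmetrically $c_{ij}=c_j$ attaches $z$ to $\bar y$.

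I expect this last step — converting monadic dependence into the assertion that, in the ``fresh witness'' case, the witnesses can be re-chosen to depend on a single index and then be pinned down from that side by a first-order formula using finitely many marking predicates — to be the main obstacle; everything else is Ramsey bookkeeping plus the automatic handling of below-diagonal non-edges. Granting that, the conclusion is immediate: in every case we have exhibited, for arbitrary $m$, a $\tau'$-ladder of length $m$ over a fixed finite monadic expansion of $\Cc$ with $\alpha',\beta'$ functional, so $\tau'$ is problematic.
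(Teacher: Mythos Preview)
Your overall architecture is right, and the observation that below-diagonal entries come for free is exactly the point. However, the ``fresh witness'' case has a genuine gap. When you mark $M_B=\{c_{ij}:(i,j)\in B\}$ and use $\exists z\,(M_B(z)\wedge\zeta(\bar x,\bar y,z))$, this formula may well hold at $(\bar a_i,\bar b_j)$ for $(i,j)\notin B$, because some other $c_{i'j'}\in M_B$ may also satisfy $\zeta(\bar a_i,\bar b_j,c_{i'j'})$. Your proposed ``Ramsey step making $c_{ij}$ the unique marked witness'' would have to control this for \emph{all} pairs of pairs at once, and no Ramsey colouring of pairs (nor your bounded-palette Canonical Ramsey step) gives that; indeed nothing you have recorded precludes $\zeta$ being essentially independent of $z$ on the ladder, in which case every marked vertex witnesses every above-diagonal pair. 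The same issue resurfaces when you try to make $\alpha'$ functional in the $c_{ij}=c_i$ sub-case: a constant atomic type of $c_i$ over $\bar a_i$ does not let a formula single out $c_i$ among the marked witnesses, since some $c_{i'}$ may realise the same type over $\bar a_i$.

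The paper fixes both problems with one idea: make the witness assignment itself first-order definable. After marking $A=\{a_i\}$ and $B=\{b_j\}$, the ladder structure yields a formula $\lambda$ encoding the lexicographic order $\prec$ on $\binom{[N]}{2}$. Pick a minimal witness set $C$; using $\lambda$ one defines a linear order $\sqsubset$ on $C$, and then each pair $(i,j)$ is assigned the \emph{unique} $\sqsubset$-least $c\in C$ with $\zeta(\bar a_i,\bar b_j,c)$; this assignment is expressed by a formula $\kappa(\bar x,\bar y,z)$. Now apply Canonical Ramsey to the \emph{unbounded}-palette colouring sending $(i,j)$ to its assigned $c$. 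In the injective case, $\exists z\,(P(z)\wedge\kappa(\bar x,\bar y,z))$ --- note $\kappa$, not $\zeta$ --- encodes any chosen $P\subseteq C$ exactly, contradicting monadic dependence. In the remaining cases, $\kappa$ together with a single marked vertex $b_{\max F}$ gives a functional $\alpha'(\bar x,z)$ directly.
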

 \begin{claimproof}
   Consider any $n\in \N$ and let $N$ be the integer given by the
   Canonical Ramsey Theorem (\cref{thm:canonical}) for $n$.  By
   assumption, there is a monadic expansion $\Cc^+$ of $\Cc$ such that
   there exist arbitrarily long $\tau$-ladders in graphs from $\Cc^+$.
   Hence, we can find a $\tau$-ladder
   $\tup a_1,\ldots,\tup a_{2N},\tup b_1,\ldots,\tup b_{2N}$ of 
   length~$2N$ in some $G\in \Cc^+$.  By restricting attention to a
   sub-ladder consisting of every odd element of the sequence
   $\tup a_1,\ldots,\tup a_{2N}$ and every even element of the
   sequence $\tup b_1,\ldots,\tup b_{2N}$, and appropriately
   reindexing, we find a $\tau$-ladder
   $\tup a_1,\ldots,\tup a_{N},\tup b_1,\ldots,\tup b_{N}$ of length-$N$ in $G$ such
   that $G\models \eta(\tup a_i,\tup b_i)$ for all $i\in [n]$.  Note
   that tuples $\tup a_1,\ldots,\tup a_N$ have to be pairwise
   different, because for each $i\in [N]$, the smallest $j\in [N]$
   satisfying $G\models \eta(\tup a_i,\tup b_j)$ is equal to
   $i$. Similarly, tuples $\tup b_1,\ldots,\tup b_N$ have to be
   pairwise different as well.
  
   Let $x\in \tup x$ and $y\in \tup y$ be the variables witnessing
   that $\alpha$ and $\beta$ are functional, respectively.  For
   $i\in [n]$, let $a_i=\tup a_i(x)$ and $b_i=\tup b_j(y)$. As
   $\alpha$ is functional, we conclude that vertices $a_1,\ldots,a_N$
   are pairwise different, and similarly vertices $b_1,\ldots,b_N$ are
   pairwise different as well.  Let $A=\{a_1,\ldots,a_N\}$ and
   $B=\{b_1,\ldots,b_N\}$, and let $G^{AB}$ be a monadic expansion of
   $G$ where $A$ and $B$ are additionally distinguished using unary
   predicates, which we shall respectively call $A$ and $B$ by a
   slight abuse of notation.
  
   Let $\prec$ be the (strict) lexicographic order on
   $\binom{[N]}{2}$.  Observe that there exists a formula
   $\lambda(\tup x,\tup y,\tup x^\circ,\tup y^\circ)$, where
   $\tup x^\circ$ and $\tup y^\circ$ are copies of $\tup x$ and
   $\tup y$, respectively, such that the following holds: if
   $(\tup a,\tup b)=(\tup a_i,\tup b_j)$ and
   $(\tup a^\circ,\tup b^\circ)=(\tup a_{i^\circ},\tup b_{j^\circ})$
   for some $(i,j),(i^\circ,j^\circ)\in \binom{[n]}{2}$, then
   $G^+\models \lambda(\tup a,\tup b,\tup a^\circ,\tup b^\circ)$ if
   and only if $(i,j)\prec (i^\circ,j^\circ)$.  Indeed, the formula
   $\forall\tup w\ \left[(B(w)\wedge \beta(\tup w)\wedge \eta(\tup
     a^\circ,\tup w))\rightarrow \eta(\tup a,\tup w)\right]$ (where
   the variable $w\in \tup w$ corresponds to the variable
   $y\in \tup y$) allows us to check the assertion $i\leq i^\circ$.  A
   formula expressing $j\leq j^\circ$ can be written in a symmetric
   way.  Then the condition $(i,j)\prec (i^\circ,j^\circ)$ can be
   expressed using a boolean combination of assertions
   $i\leq i^\circ$, $i\geq i^\circ$, $j\leq j^\circ$, and
   $j\geq j^\circ$.
  
   As for every pair $(i,j)\in \binom{[N]}{2}$ we have
   $G\models \exists z\ \zeta(\tup a_i,\tup b_j,z)$, there is a vertex
   $c\in V(G)$ such that $G\models \zeta(\tup a_i,\tup b_j,c)$.  Let
   $C$ be an inclusion-wise minimal subset of $V(G)$ such that for
   each $(i,j)\in \binom{[N]}{2}$ there exists $c\in C$ satisfying
   $G\models \zeta(\tup a_i,\tup b_j,c)$.  For every $c\in C$,
   define
   $$J(c)=\left\{\,(i,j)\in \binom{[N]}{2}\ \colon\ \zeta(\tup
     a_i,\tup b_j,c)\,\right\}.$$ Note that by the minimality of $C$,
   the sets $J(c)$ are pairwise not contained in one another.  Let
   $G^{ABC}$ be the monadic expansion of $G^{AB}$ where $C$ is
   additionally distinguished using a unary predicate $C$.
  
  Now, for $c,c'\in C$, we set
  $$c\sqsubset c'\qquad \textrm{if and only if}\qquad \min_{\prec}\,\left(J(c)\setminus J(c')\right)\ \prec\ \min_{\prec}\,\left(J(c')\setminus J(c)\right).$$
  It is straightforward to see that $\sqsubset$ is a (strict) linear
  order on $C$.  Let us partition pairs $(i,j)\in \binom{[N]}{2}$ into
  classes $\{I(c)\colon c\in C\}$ as follows:
  $$(i,j)\in I(c) \qquad \textrm{if and only if}\qquad c = \min_{\sqsubset}\,\{\,d\in C\ \colon\ (i,j)\in J(d)\,\}.$$
  Using the formula $\lambda$ we can easily write a formula
  $\kappa(\tup x,\tup y,z)$ with the following property: for all
  $(i,j)\in \binom{N}{2}$ and $c\in C$, we have
  $G^{ABC}\models \kappa(\tup a_i,\tup b_j,c)$ if and only if
  $(i,j)\in I(c)$.
  
  By the Canonical Ramsey Theorem (\cref{thm:canonical}) there exists
  $F\subseteq [N]$ such that $|F|=n$ and one of the following
  conditions is satisfied:
 \begin{enumerate}[(1)]
 \item all pairs $(i,j)\in\binom{F}{2}$ belong to the same class
   $I(c)$, for some $c\in C$;
 \item there exist pairwise different $c_i$ such that
   $(i,j)\in I(c_i)$ for all $(i,j)\in \binom{F}{2}$;
 \item there exist pairwise different $c_j$ such that
   $(i,j)\in I(c_j)$ for all $(i,j)\in \binom{F}{2}$;
 \item there exist pairwise different $c_{i,j}$ such that
   $(i,j)\in I(c_{i,j})$ for all $(i,j)\in \binom{F}{2}$.
 \end{enumerate}
 Let $G^{A'B'C}$ be the monadic expansion of $G^{ABC}$ where sets
 $A'=\{a_i\colon i\in F\}$ and $B'=\{b_i\colon i\in F\}$ are
 additionally distinguished using unary predicates $A'$ and $B'$.
  
 We first consider the second case above. Let $G^{A'B'CD}$ be a
 monadic expansion of $G^{A'B'C}$ that distinguishes the single vertex
 $b_{\max F}$ using a unary predicate $D$.  Consider the formula
  $$\alpha'(\tup x,z)=A'(x)\wedge C(z)\wedge \alpha(\tup x)\wedge
  \exists {\tup y}\ \left[D(y)\wedge \beta(\tup y)\wedge \kappa(\tup
    x,\tup y,z)\right].$$ Observe that for any
  $\tup u\in V(G)^{\tup x}$ and $w\in V(G)$, we have
  $G^{A'B'CD}\models \gamma(\tup u,w)$ if and only if
  $\tup u=\tup a_i$ for some $i\in F$ and $w=c_i$.  As
  $\alpha(\tup x)$ is functional, it follows that so is
  $\alpha'(\tup x,z)$.  It is now straightforward to see that
  $\{(\tup a_i,c_i)\colon i\in F\}$ and $\{\tup b_i\colon i\in F\}$
  form a $\tau'$-ladder in $G^{A'B'CD}$ of length $n$, where
  \mbox{$\tau'=(\alpha'(\tup x,z),\beta(\tup y),\zeta(\tup x,\tup
    y,z))$}.  Hence, if the second case occurs for infinitely many
  $n$, then $\tau'$ is problematic.

  The same argument applies if the first case occurs for infinitely
  many $n$, and a symmetric argument applies when the third case
  occurs for infinitely many $n$.  We are left with considering the
  situation where the fourth case occurs for infinitely many $n$.  Let
  $S=\{c_{i,j}\colon (i,j)\in \binom{F}{2}\}$. Observe that if we
  choose any subset $P\subseteq S$ and distinguish it using a unary
  predicate $P$ in a monadic expansion $G^{A'B'CP}$ of $G^{A'B'C}$,
  then the formula
  $$\xi(\tup x,\tup y)=A'(x)\wedge \alpha(\tup x)\wedge B'(y)\wedge
  \beta(\tup y)\wedge \exists z\ \left[P(z)\wedge \kappa(\tup x,\tup
    y,z)\right],$$ is true exactly for those tuples $\tup a_i$ and
  $\tup b_j$ for which $(i,j)\in \binom{F}{2}$ and $c_{i,j}\in P$.
  Hence, using $\xi$ and different choices of $P$ we may interpret in
  graphs $G^{A'B'CP}$ all subgraphs of a half-graph of order $n$.  It
  follows that there is a transduction from $\Cc$ onto the class of
  all bipartite graphs; this contradicts the assumption that $\Cc$ is
  monadically dependent.
 \end{claimproof}
 
 By the above claims we infer that there is a problematic triple
 $(\alpha(\tup x),\beta(\tup y),\eta(\tup x,\tup y))$ such that
 $\eta(\tup x,\tup y)$ is an atomic formula.  In particular, this
 means that there is a monadic expansion $\Cc^+$ of $\Cc$ such that
 $\eta$ is unstable on $\Cc^+$.  Since $\eta$ is atomic, it is of one
 of the following forms: a unary predicate applied to any variable;
 the equality relation applied to any pair of variables; or the edge
 relation $E(\cdot,\cdot)$ applied to any pair of variables.  The
 first two cases cannot happen, as such formulas are stable on every
 class of graphs.  We conclude that the last case occurs, hence $\Cc$
 has an unstable edge relation.
\end{proof}

\section{Conclusion and Perspectives}\label{sec:conclusion}

We have started to explore the theory of monadic dependence and
monadic stability from a graph theoretical point of view. Several
interesting questions and conjectures arise from our studies.  To put
our research in perspective, we show in \cref{fig:lattice_ext} the
following extended semi-lattice of property inclusions.

\begin{figure}[h!t]
  \centering
  \includegraphics[width=0.65\textwidth]{}
  \caption{The extended semi-lattice of property inclusions. }
  \label{fig:lattice_ext}
\end{figure}

A quick examination of the figure reveals an unresolved question of
prime importance. While \cref{thm:lrw} and \cref{thm:main} exactly
identify classes of structurally bounded pathwidth/treewidth as
monadically stable classes that have bounded (linear) rankwidth, the
chart does not specify the alignment of {\em structurally nowhere
  dense} classes (i.e.\ transductions of nowhere dense
classes). Clearly, every structurally nowhere dense class of graphs is
monadically stable, but the precise relationship between these notions
remains to be understood. It would be even consistent with our
knowledge if the two concepts coincided for classes of graphs.  If
this was true, it would reveal very strong structural qualities of
monadically stable classes of graphs, which could be used in the
algorithmic context.

\begin{conjecture}
  A graph class is monadically stable if and only if it is
  structurally nowhere dense.
\end{conjecture}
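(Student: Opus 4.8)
The implication ``structurally nowhere dense $\Rightarrow$ monadically stable'' is the routine direction and I would dispatch it first: by the Adler--Adler theorem~\cite{adler2014interpreting} every nowhere dense class is monadically stable, and monadic stability is preserved under first-order transductions (immediate from the transduction-based characterisation recalled in \cref{thm:stable_trans}); composing, any transduction of a nowhere dense class is monadically stable. All the substance lies in the converse, ``monadically stable $\Rightarrow$ structurally nowhere dense'', so that is where I would concentrate.

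For the converse I would try to follow, at a far higher level of generality, the ``sparsify, then decode'' strategy that drives \cref{thm:main} (and, before it, \cref{thm:lrw}): given $G$ in a monadically stable class $\Cc$, construct a vertex-coloured graph $G'$ lying in some fixed nowhere dense class, together with a fixed interpretation recovering $G$ from $G'$. Concretely the plan would have three stages. First, reduce the general case to classes of bounded ``shrubdepth-like'' complexity: show that a monadically stable class admits, for every $p$, a low cover whose parts are again monadically stable but additionally become nowhere dense once their short dense patterns are contracted --- the monadically stable analogue of low treedepth colourings and of \cref{thm:SBE}. Second, for a single such bounded-complexity part, run a Simon/Colcombet-style factorisation of a tree encoding, in the spirit of \cref{lem:factorization}, and use monadic stability --- i.e.\ the absence of arbitrarily long half-graph ladders --- to bound the number of ``interesting'' nodes on each root-to-leaf path of each quotient, exactly as in the block analysis of \cref{sec:mainthm}; this should yield an encoding in a graph whose strong-reachability structure is sparse in the nowhere dense sense. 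Third, assemble the per-part encodings along the cover and check that the decoding interpretation is first-order, mirroring the ``completing the induction'' argument of \cref{sec:rw}.

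The hard part will be Stage~1, and more precisely the absence of any structural scaffolding comparable to rankwidth. The proof of \cref{thm:main} is powered by \cref{lem:factorization}, which rests on the algebraic fact that a bounded-width NLC encoding lives in the \emph{finite} semigroup $\mathcal S_k$ and therefore admits a bounded-depth Simon factorisation; no such finite-algebra handle is known for an arbitrary monadically stable class, and indeed no ``width'' parameter is known whose boundedness characterises monadic stability. So one must first manufacture, purely from the model-theoretic hypothesis (no monadic transduction onto half-graphs), a canonical bounded-depth decomposition of $\Cc$ whose quotients are nowhere dense --- presumably by isolating a local combinatorial invariant equivalent to monadic stability (bounded sparse neighbourhood covers, a controlled Baldwin--Shelah indiscernible collapse, or a ``flip-width''-type quantity) and bootstrapping a decomposition from it. I expect essentially all the difficulty of the conjecture to sit in this step; once a bounded-depth sparse decomposition is available, the encode/decode technology of this paper should, with considerable but conceptually routine work, convert it into the desired transduction from a nowhere dense class.

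Finally, I would calibrate the plan on the cases already within reach --- classes of bounded rankwidth (where it specialises to \cref{thm:main}), classes of bounded degree, and classes obtained from nowhere dense ones by a bounded number of ``flips'' --- to stress-test Stages~2--3 before attacking Stage~1 in full generality.
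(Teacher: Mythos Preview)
The statement you are attempting to prove is presented in the paper as an open \emph{conjecture} (in \cref{sec:conclusion}), not as a theorem; the paper gives no proof and explicitly writes that ``the precise relationship between these notions remains to be understood'' and that ``it would be even consistent with our knowledge if the two concepts coincided for classes of graphs.'' Hence there is no paper proof against which to compare your proposal.

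Your treatment of the easy direction (structurally nowhere dense $\Rightarrow$ monadically stable) is correct and matches what the paper itself observes just before stating the conjecture: nowhere dense classes are monadically stable by~\cite{adler2014interpreting}, and monadic stability is preserved under transductions. For the converse, what you have written is a research plan rather than a proof, and you are candid about this: you correctly identify that the entire difficulty sits in your Stage~1, where there is no finite-semigroup scaffolding analogous to \cref{lem:factorization} to drive a Simon/Colcombet factorisation. The paper offers no argument for this direction at all, so there is nothing to compare; your outline is a reasonable articulation of why the techniques of \cref{sec:rw} do not directly generalise and what kind of new ingredient would be required, but it should be understood as a programme, not a proof.
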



\medskip
Obviously, besides classes of bounded pathwidth or treewidth, there
are multiple other notions of sparsity whose structural analogs could
be investigated. For instance, can we characterize {\em{structurally
    planar}} classes, that is, images of the class of planar graphs
under transductions? More generally, one may consider images under
transductions of classes with forbidden minors or with forbidden
topological minors.  So far, suitable characterizations have been
given for classes with {\em structurally bounded
  degree}~\cite{GajarskyHOLR16} and with {\em structurally bounded
  expansion}~\cite{SBE_TOCL}.  Such characterizations, if efficiently
constructive, are very helpful in the design of fixed-parameter
algorithms for the FO model-checking problem, as was done in the case
of classes with structurally bounded degree~\cite{GajarskyHOLR16}.
Based on the understanding revealed
in~\cite{GajarskyHOLR16,SBE_TOCL}, we hypothesize that such
characterizations may rely on the concept of {\em{covers}} (see
\cref{sec:consequences}). For instance, transductions of classes with
bounded expansion are characterized by the existence of such covers
(see \cref{thm:SBE}). This motivates the following:

\begin{conjecture}
  Every class with low rankwidth covers is monadically dependent.
\end{conjecture}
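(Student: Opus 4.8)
The plan is to prove the contrapositive via \cref{thm:stable_dep}: if $\Cc$ has low rankwidth covers, then no transduction maps $\Cc$ onto the class of all finite graphs. So suppose for contradiction that $\Tr=(\Sigma_{\Tr},(\nu,\eta))$ is such a transduction, and let $q$ be the maximum quantifier rank among $\nu,\eta$. First I would apply Gaifman's locality theorem to $\eta(x,y)$ over the class of all $\Sigma_{\Tr}$-expansions of graphs from $\Cc$: there is a radius $r=r(q)$ and finitely many ``$r$-local types'' such that, over any such expansion $G^+$, the truth of $\eta(u,v)$ is a fixed Boolean combination of $r$-local formulas around $\{u,v\}$ and a bounded family of basic local sentences (which, over a fixed $G^+$, are mere constants). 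Writing $\Tr(G^+)=H$, I would split $H=H^{\mathrm{far}}\cup H^{\mathrm{close}}$, where $H^{\mathrm{close}}$ collects the edges $uv$ of $H$ with $\dist_G(u,v)\le 2r$ and $H^{\mathrm{far}}$ the rest. For a pair $u,v$ at $G$-distance $>2r$ the radius-$r$ ball around $\{u,v\}$ is the disjoint union of two balls, so by the composition (Feferman--Vaught) theorem the adjacency of $u$ and $v$ in $H^{\mathrm{far}}$ depends only on the pair of $r$-local types of $u$ and of $v$ (together with boundedly many global constants); hence $H^{\mathrm{far}}$ is an induced subgraph of a blow-up of a single fixed bounded graph, and in particular has bounded rankwidth. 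Therefore $H$ can be transduced, by a transduction depending only on $\Tr$, from $H^{\mathrm{close}}$ together with a bounded vertex colouring; and since transductions preserve bounded rankwidth, $\mathrm{rw}(H)$ is bounded in terms of $\mathrm{rw}(H^{\mathrm{close}})$.

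The consequence I would then extract is: if $\Tr(\Cc)$ contained all finite graphs, then --- choosing, for each $n$, a graph $H_n=\Tr(G_n^+)$ with $\mathrm{rw}(H_n)\ge n$ and $G_n\in\Cc$ --- the ``local parts'' $H_n^{\mathrm{close}}$ would also satisfy $\mathrm{rw}(H_n^{\mathrm{close}})\to\infty$, while having the very restricted shape that every edge of $H_n^{\mathrm{close}}$ joins two vertices of $G_n$ at $G_n$-distance at most $2r$. Everything then reduces to the following statement, whose proof would finish the argument: for every fixed $r$, a class with low rankwidth covers cannot admit graphs of unbounded rankwidth whose edges are ``$2r$-local'' over the host graph.

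To attack this last statement I would try to show that the class $\Cc^{(2r)}$ of all such $2r$-local graphs over $\Cc$ itself has low rankwidth covers. For $G\in\Cc$ one takes the cover $H_1,\dots,H_{C_p}$ of $G$ at a parameter $p$ chosen after $r$; each $H_i$ has bounded rankwidth, and since rankwidth is monotone under induced subgraphs, every ball-shaped piece lying inside a single $H_i$ is again of bounded rankwidth. The main obstacle --- and the reason the conjecture is hard --- is that a radius-$2r$ ball of a \emph{dense} graph need not lie inside any single cover part: cover parts have bounded rankwidth but balls can be arbitrarily large, so a cover of $G$ does not restrict to a cover of the $2r$-local graph built on top of it. I expect overcoming this to require a bootstrapping argument that interleaves the cover parameter with the locality radius, together with a Ramsey-type distillation in the spirit of the proof of \cref{thm:ms}: from many disjoint copies of a hard gadget sitting inside $H_n$ one extracts an indiscernible sequence, and the Canonical Ramsey Theorem lets one isolate a \emph{bounded-size} substructure which, by indiscernibility, must already appear inside one cover part --- contradicting that this part has bounded rankwidth. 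The step of turning ``$H^{\mathrm{close}}$ has large rankwidth'' (a statement about an unbounded vertex set) into ``a bounded, cover-localizable certificate of large rankwidth'' is the crux, and the place where a genuinely new idea is needed.
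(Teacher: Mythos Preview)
The statement you are attempting is stated in the paper as a \emph{conjecture}; the paper gives no proof of it, so there is nothing to compare your argument against. What you have written is therefore not a reproduction of an existing proof but an attack on an open problem, and you yourself flag it as incomplete.

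Your Gaifman/Feferman--Vaught reduction to a ``close'' graph $H^{\mathrm{close}}$ is a natural first move, and the conclusion that $H^{\mathrm{far}}$ has bounded rankwidth is correct (modulo a small wrinkle: to recover $H$ from $H^{\mathrm{close}}$ plus a type-colouring you should take the symmetric difference with the type-predicted graph rather than a union, since the type prediction may misclassify some close pairs; this is easily repaired). The substantive gap is exactly where you locate it: bounding the rankwidth of the $2r$-local transduced graphs over a class with low rankwidth covers. You acknowledge that ``a genuinely new idea is needed'' for this step, and that is a fair assessment --- but it also means what you have is a strategy with an identified hole, not a proof.

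There is moreover a structural reason to expect the locality route, by itself, not to close that hole. Low rankwidth covers are meant precisely for \emph{dense} classes; in a dense graph the $2r$-ball of a vertex can already be the whole graph, so ``restricting to local edges'' imposes no real constraint and the reduction is essentially circular. Your cover-based attempt runs into this head-on: a depth-$p$ cover controls joint membership of any $p$ vertices, not of an unbounded ball. The Ramsey/indiscernible bootstrapping you sketch would have to turn ``$H^{\mathrm{close}}$ has large rankwidth'' into a \emph{bounded-size} certificate that can be localised inside a single cover part, but large rankwidth is witnessed by well-linked sets of unbounded size rather than by any fixed finite pattern, so there is no obvious local obstruction to extract. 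That is the crux of why the conjecture is open.
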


Finally, we recall the conjecture we posed in \cref{sec:consequences}.

\conjcov*



\bibliography{ref}
\clearpage
\appendix
\newcommand{\inappendix}{yes!}
\end{document}
